\documentclass[a4paper,twocolumn,11pt]{quantumarticle}
\pdfoutput=1
\makeatletter
\providecommand\@afterenddocumenthook{}
\makeatother

\usepackage[utf8]{inputenc}
\usepackage[english]{babel}
\usepackage[T1]{fontenc}

\usepackage{amsmath,amssymb,amsthm,mathtools}

\usepackage{tikz}
\usetikzlibrary{matrix,positioning,arrows.meta}
\usepackage{booktabs}
\usepackage{graphicx}
\usepackage{xcolor}
\usepackage{mdframed}

\usepackage{enumitem}
\usepackage{appendix}
\usepackage[numbers,sort&compress]{natbib}

\usepackage{hyperref}

\graphicspath{{figures/}{./}}

\newmdenv[
  linewidth=0.6pt,
  linecolor=black!50,
  backgroundcolor=black!3,
  innertopmargin=8pt,
  innerbottommargin=8pt,
  innerleftmargin=12pt,
  innerrightmargin=12pt,
  skipabove=8pt,
  skipbelow=8pt,
]{summarybox}

\newtheorem{theorem}{Theorem}[section]
\newtheorem{proposition}[theorem]{Proposition}
\newtheorem{lemma}[theorem]{Lemma}
\newtheorem{corollary}[theorem]{Corollary}
\newtheorem{conjecture}[theorem]{Conjecture}
\theoremstyle{definition}
\newtheorem{definition}[theorem]{Definition}

\theoremstyle{remark}
\newtheorem{remark}[theorem]{Remark}

\DeclareMathOperator{\rt}{rt}
\DeclareMathOperator{\diam}{diam}

\begin{document}

\title{Block Permutation Routing on Ramanujan Hypergraphs for Fault-Tolerant Quantum Computing}
\author{Joshua M. Courtney}
\affiliation{University of Georgia, Department of Physics and Astronomy}
\orcid{0009-0002-1631-4277}
\email{Joshua.Courtney1@uga.edu}
\maketitle

\begin{abstract}
We analyze permutation routing of rigid blocks representing surface code patches of $d_C^2$ atoms on a reconfigurable lattice with hypergraph transformations. 
We show that the {block routing number} $\rt_B(H, s, g) = \Theta(d_C \log N_L)$ for a given hypergraph $H$, code distance $d_C$, $s=d_C^2$,  number of blocks $N_L$, and guard distance $g$. 
The main result uses a spectral analysis of the {quotient graph} $Q(G_{\mathrm{cl}}(H), B)$ (blocks as supervertices) through equitable partition interlacing, showing that the spectral ratio $\beta_Q < 1$ is preserved in the high-connectivity regime $d' > d_C^2(r-1)/(1-\beta)$. 
Negative association of block permutations and congestion bounds are used for random intermediate configurations, and serialization establishes that each quotient routing phase requires $O(d_C)$ physical sub-steps due to the block footprint width.
The lower bound $\rt_B = \Omega(d_C \log N_L)$ follows from combining the spectral lower bound $\Omega(\log N_L)$ on quotient phases with the $\Omega(d_C)$ footprint traversal cost per phase.
We include error model analysis grounded in recent experimental results from Bluvstein et al.~\cite{bluvstein2024logical,bluvstein2026fault,evered2023high} and Google Quantum AI~\cite{acharya2024quantum,google2023suppressing}, syndrome extraction protocols (stop-and-correct, rolling active fault-tolerant (AFT) measurement, and adaptive deformation), and integration with lattice surgery compilation via the Litinski protocol. Composition with the correlated-decoding scheme of Cain et al.~\cite{cain2024correlated} reduces syndrome-extraction overhead from $O(d_C)$ to $O(1)$ per correction window, leaving routing as the sole leading-order contributor to the integrated $O(d_C \log N_L)$ depth.
Spectral inheritance is organized in a hierarchy: exact (Haemers interlacing on equitable partitions), perturbative (Weyl bounds for near-equitable partitions, the practically relevant case for surface-code patches), and universal (higher-order Cheeger).
The framework extends directly to QCCD trapped-ion architectures under the same regime condition, with junction crossings replacing AOD transports as the elementary single-hop translation.
\end{abstract}

\section{Introduction}\label{sec:intro}

Recent experimental milestones make block routing a bottleneck for the next generation of fault-tolerant neutral-atom processors: Bluvstein et al.\ \cite{bluvstein2022quantum} introduced mid-circuit coherent transport; \cite{bluvstein2024logical} extended it to 48 logical qubits; and Google Quantum AI~\cite{google2023suppressing,acharya2024quantum} progressed from above- to below-threshold surface-code operation at $d_C{=}5$ then $d_C{=}7$. 
Block routing depth determines the fault-tolerance budget: each routing step accumulates physical errors that the surface code must correct, so reducing $\rt_B$ from $\Theta(d_C \sqrt{N_L})$ on a grid to $\Theta(d_C \log N_L)$ on a Ramanujan overlay directly multiplies the achievable circuit depth before logical failure.
Point particles on Ramanujan hypergraphs have routing number $\rt(H) = \Theta(\log N)$~\cite{courtney2026permutation}, but for fault-tolerant architectures, qubits are often encoded as surface code patches of size $d_C \times d_C$ atoms, where $d_C$ is the code distance. A single logical qubit occupies $d_C^2$ physical atoms.

When routing logical qubits, atoms cannot be treated independently. Instead, we consider the constraint where entire patches must move together as rigid blocks to maintain the topological structure that protects against errors. This paper analyzes permutation routing when the routing units are blocks, not points.

\subsection{Main result}

\begin{theorem}[Block routing on Ramanujan hypergraphs]\label{thm:main-block}
Let $H$ be a Ramanujan $(d,r)$-regular hypergraph on $N_{\mathrm{phys}}$ vertices with $d \geq 3$, $r \geq 3$. Let $B$ be a uniform $(s,g)$-block configuration with $N_L = \Theta(N_{\mathrm{phys}}/d_C^2)$ blocks of size $s = d_C^2$ atoms, guard distance $g = O(1)$,  and occupancy $\rho < \rho_{\mathrm{packing}}(d, r, g)$ for some fixed $\rho_{\mathrm{packing}} < 1$ (Definition~\ref{def:block-config}). In the high-connectivity regime~\eqref{eq:high-conn} (or the tight form~\eqref{eq:high-conn-tight} for surface code patches), then
\begin{equation}
\rt_B(G_{\mathrm{cl}}(H), s, g) = \Theta(d_C \log N_L).
\end{equation}
\end{theorem}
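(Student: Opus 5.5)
The plan splits into a matching upper bound $O(d_C\log N_L)$ and lower bound $\Omega(d_C\log N_L)$, both routed through the quotient graph $Q = Q(G_{\mathrm{cl}}(H),B)$ whose supervertices are the block slots. The guiding idea is that $\rt_B$ factors as (number of quotient routing phases) $\times$ (physical sub-steps per phase). I will show that the first factor is $\Theta(\log N_L)$ via spectral properties of $Q$, and that the second is $\Theta(d_C)$ by a footprint argument.

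\emph{Upper bound.} First I establish spectral inheritance. When $B$ is an equitable partition of $G_{\mathrm{cl}}(H)$, Haemers interlacing shows that every eigenvalue of the quotient matrix lies in the spectrum of $G_{\mathrm{cl}}(H)$. The Ramanujan property then bounds the nontrivial quotient eigenvalues by $\beta\cdot\lambda_1$. After normalising by the quotient degree $d'$, the inequality~\eqref{eq:high-conn} (respectively~\eqref{eq:high-conn-tight}) absorbs the intra-block edge mass $d_C^2(r-1)$ and gives $\beta_Q<1$ uniformly in $N_L$. For near-equitable surface-code partitions, I would instead use a Weyl perturbation bound: the boundary irregularity is a matrix of norm $O(d_C)$, which is dominated by the spectral gap in this regime.

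With $\beta_Q<1$ in hand, $Q$ is an expander of diameter $O(\log N_L)$. I then run a Valiant-style two-phase scheme on $Q$. Each block is sent to a uniformly random intermediate slot and then on to its destination, with path segments chosen along short expander walks. Negative association of the random block permutation, combined with Chernoff bounds, yields edge congestion $O(\log N_L)$ with high probability. Scheduling these paths then takes $O(\log N_L)$ quotient phases, in the same way as the point-particle result $\rt(H)=\Theta(\log N)$ of~\cite{courtney2026permutation}, applied now to $Q$.

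Next I serialize. A single quotient phase moves blocks across adjacent slots. A rigid $d_C\times d_C$ patch with guard distance $g=O(1)$ can be advanced by one slot in $O(d_C+g)=O(d_C)$ physical single-hop translations. The occupancy assumption $\rho<\rho_{\mathrm{packing}}$ guarantees free slots, so blocks moving in the same phase have disjoint swept footprints and can proceed in parallel. Multiplying the two factors gives $O(d_C\log N_L)$.

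\emph{Lower bound.} First, any routing is a sequence of quotient moves. Since $Q$ has bounded degree $\Delta_Q$, a counting argument applies: a random permutation requires some block to travel quotient distance $\Omega(\log_{\Delta_Q} N_L)=\Omega(\log N_L)$, because at most a constant fraction of slots lie within distance $o(\log N_L)$ of a given slot. Second, each quotient hop moves every atom of a block across the slot boundary. The block's footprint has width $d_C$, and physical moves displace atoms by $O(1)$ per step while preserving rigidity, so each hop costs $\Omega(d_C)$ physical steps. To combine these, I would charge to each block its own sequence of hops, since its hops are sequential, giving $\Omega(d_C\log N_L)$.

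\emph{Main obstacle.} I expect the hardest step to be the spectral inheritance in the realistic, non-equitable case. Interlacing only controls the quotient matrix, and a surface-code block partition is not exactly equitable. The Weyl error therefore has to be shown to be $o(1-\beta)$ relative to $d'$, and this is exactly where the regime condition $d'>d_C^2(r-1)/(1-\beta)$ must be used quantitatively. I would try to bound the deviation matrix by the maximum row imbalance of cross-block degrees, then apply Weyl's inequality.

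A secondary subtlety is ensuring that serialization keeps $O(d_C)$ rather than $O(d_C^2)$ per phase. This requires moving whole rows of the patch in parallel, which is permitted by the hyperedges of size $r$.
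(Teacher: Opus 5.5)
Your plan follows the paper's architecture step for step: Haemers/Weyl inheritance for $\beta_Q<1$, Valiant with negative association on $Q$, LMR-type scheduling, an $O(d_C)$ serialization factor, and the counting-times-footprint lower bound. It therefore stands or falls with the paper's argument. You do deviate in one place. You realize a quotient hop as $d_C+g$ rigid unit slides rather than by the atom-level lane schedule of Lemma~\ref{lem:single-hop-general}, and that is if anything cleaner, since every intermediate state stays a legal block configuration. The parallelism there comes from the move model (one block step translates all $d_C^2$ atoms at once), not from the hyperedge size $r$.

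\textbf{Upper bound.} There is a genuine gap, which the paper shares: your two factors are computed on different graphs. The $Q$ whose expansion you derive by interlacing is the edge-count quotient of $G_{\mathrm{cl}}(H)$ (Definition~\ref{def:quotient}). In an expander host, the edges that give it a spectral gap generically join slots that are not neighbours in the grid sub-structure of Remark~\ref{rem:block-shape}. Such slots are linked by only $O(1)$ non-grid edges of $G$. Across such an edge there is no slide path, so ``advance by one slot in $O(d_C+g)$ unit translations'' does not apply. The paper's count of $d_C$ edge-disjoint Menger lanes likewise presupposes a $d_C$-wide interface. If you instead keep only grid-aligned neighbouring slots, interlacing on $G_{\mathrm{cl}}(H)$ says nothing about the resulting graph. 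When the grid sub-structure is a single global grid, that graph is itself a grid of diameter $\Theta(\sqrt{N_L})$.

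\textbf{Lower bound.} The lower bound has the mirror-image problem. ``Rigidity plus $O(1)$ displacement per step forces $\Omega(d_C)$ steps per hop'' does not follow from anything you use. $Q$-neighbours lie within $G$-distance $g+1$ (Remark~\ref{rem:quotient-support}). Nothing in Definition~\ref{def:block-translation} stops a rigid single-hop translation from landing on a footprint disjoint from the old one, and Lemma~\ref{lem:single-hop} executes exactly such a move in one step. To close the argument you need a single routing quotient $Q^{\ast}$ used on both sides: slot pairs joined by a rigid-move corridor of width $d_C$ and length $\Theta(d_C)$. You would then have to show two things:
\begin{itemize}
\item a direct proof that $\beta_{Q^{\ast}}<1$;
\item a proof that every rigid trajectory between slots at $Q^{\ast}$-distance $k$ has length $\Omega(d_C k)$, so blocks cannot shortcut through non-corridor edges.
\end{itemize}

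\textbf{Scheduling.} Separately, LMR and Valiant treat blocks as packets that may queue at a node and cross on an edge. A slot holds only one block, and two rigid blocks cannot exchange places across a $Q$-edge. The appeal to free slots therefore does not by itself turn an LMR schedule into a sequence of legal block matchings. The paper's Lemma~\ref{lem:block-lmr} also asserts this conversion without proof.
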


This establishes that the block routing number scales linearly with code distance $d_C$, not quadratically. The linear factor $d_C$ arises from the block footprint width, wherein each quotient-level routing phase requires $\Omega(d_C)$ physical steps because the $d_C^2$ atoms in each block must traverse physical edges sequentially.

\paragraph{The high-connectivity regime.}
Throughout the paper we work under the condition
\begin{equation}\label{eq:high-conn}
d' \;>\; \frac{d_C^2(r-1)}{1-\beta},
\end{equation}
referred to as the \emph{high-connectivity regime}. The condition is stated in this general (loose) form because it suffices for arbitrary $(s,g)$-block configurations.
Theorem~\ref{thm:main-block} is proved under~\eqref{eq:high-conn} verbatim. For surface-code patches, where the block is a $d_C\!\times\!d_C$ grid sub-structure, the regime tightens by a factor of $d_C$ to the linear-in-$d_C$ form
\begin{equation}\label{eq:high-conn-tight}
d' \;>\; \frac{d_C(r-1)}{1-\beta}
\end{equation}
of Corollary~\ref{cor:high-conn-tight}.
This is the operative regime for the Remark~\ref{rem:regime-practicality} hardware estimates. For $r = 3$ and Ramanujan-saturating $\beta$, the loose form~\eqref{eq:high-conn} requires $d \geq 34$ at $d_C = 5$ and $d \geq 60$ at $d_C = 7$, while the surface-code-tight form~\eqref{eq:high-conn-tight} reduces these to $d \geq 10$ and $d \geq 12$ respectively. The high-connectivity regime is natural for neutral-atom architectures with selective transfers~\cite{constantinides2024optimal} but excludes small-world graphs and sparse-degree hardware. We discuss practical implications in \S\ref{sec:fault-tolerance} and \S\ref{sec:extensions}.

\subsection{Four-regime table}

We extend the results of Courtney~\cite{courtney2026permutation} to include block routing (Table~\ref{tab:regimes-extended}).

\begin{table*}[ht]
\centering
\renewcommand{\arraystretch}{1.3}
\begin{tabular}{@{}lllcl@{}}
\toprule
\textbf{Topology} & \textbf{Unit} & \textbf{Move model} & \textbf{Routing depth} & \textbf{Source} \\
\midrule
Grid & Point & Arbitrary matchings & $\Theta(\sqrt{N})$ & Courtney~\cite{courtney2026permutation} \\
Grid & Block ($d_C \times d_C$) & Rigid translation & $\Theta(d_C \sqrt{N_L})$ & This work \\
Ramanujan & Point & Arbitrary matchings & $\Theta(\log N)$ & Courtney~\cite{courtney2026permutation} \\
Ramanujan & Block ($d_C \times d_C$) & Rigid translation & $\Theta(d_C \log N_L)$ & \textbf{Thm~\ref{thm:main-block}} \\
\bottomrule
\end{tabular}
\caption{Routing regimes: extension from points to blocks. Each regime compares a standard connectivity topology (grid vs.\ Ramanujan) with a routing unit (point vs.\ block) and move model. The grid block-routing entry $\Theta(d_C \sqrt{N_L})$ follows from the standard grid-diameter bound $\Theta(\sqrt{N})$~\cite{alon1993routing} composed with the footprint-traversal cost of Theorem~\ref{prop:bottleneck-lb} (Remark~\ref{prop:geometric-lb}); we include it for comparison rather than as a novel contribution.}
\label{tab:regimes-extended}
\end{table*}

Blocks are intrinsically $d_C$ times wider than points, so routing is $d_C$ times slower. 
For Ramanujan hypergraphs, the logarithmic scaling dominates, giving $\Theta(d_C \log N_L)$ overall.

\subsection{Proof technique}

The proof proceeds as follows. 
We begin with a spectral lower bound on the quotient graph $Q$ (Theorem~\ref{prop:bottleneck-lb}, $\Omega(d_C \log N_L)$), giving $\Omega(\log N_L)$ quotient-level phases. Footprint traversal shows each phase requires $\Omega(d_C)$ physical steps, since $d_C$ atoms must cross any separating edge in sequence. These factors multiply to give $\Omega(d_C \log N_L)$.
We define the quotient graph $Q$ with blocks as supervertices (Theorem~\ref{thm:main-quotient}, Section~\ref{sec:quotient}), then prove with Haemers' interlacing theorem~\cite{haemers1995interlacing} that the block configuration preserves expansion: the spectral ratio $\beta_Q < 1$ in the high-connectivity regime~\eqref{eq:high-conn}.

Using block Valiant routing (Lemma~\ref{lem:block-valiant}), we extend Valiant's two-phase routing~\cite{valiant1982scheme} to blocks. We choose a random intermediate block configuration $\sigma(B)$ and show that congestion on the quotient graph is $C_Q = O(\log N_L)$ with high probability (w.h.p.) by negative association. 
Applying the Leighton--Maggs--Rao (LMR) scheduling theorem~\cite{leighton1999fast} to the quotient graph achieves $T_Q = O(C_Q + D_Q) = O(\log N_L)$ quotient-level steps (Lemma~\ref{lem:block-lmr}).
Each quotient matching step decomposes into $O(d_C)$ physical matching step (intra-block atom congestion at physical edges) (Lemma~\ref{lem:serialization}). The $O(d_C)$ factor arises from the block footprint width.
These steps give $\rt_B = O(d_C \log N_L)$, matching the lower bound.

\subsection{Related work}\label{subsec:related}

Surface code architectures rely on the topological-memory framework of Dennis, Kitaev, Landahl, and Preskill~\cite{dennis2002topological} and the practical analysis of Fowler et al.~\cite{fowler2012surface} and Cong et al~\cite{cong2022hardware}. The fault-tolerance overhead in long-range-connectivity architectures (which our virtual-overlay construction realizes) was analyzed by Cohen, Kim, Bartlett, and Brown~\cite{cohen2022low}; the Google Quantum AI below-threshold milestone~\cite{google2023suppressing,acharya2024quantum} provides the experimental baseline for $d_C \geq 5$. Yuan and Zhang~\cite{yuan2025full} characterized the depth-overhead lower bound for compiling under arbitrary connectivity constraints. 
Our result tightens this bound from $\Theta(\sqrt{N_L})$ to $\Theta(d_C \log N_L)$ in the Ramanujan regime. 
Theorem~\ref{prop:bottleneck-lb} establishes this lower bound under the Ramanujan hypothesis, so the gap between the generic Yuan--Zhang bound and our regime-specific bound is not an artifact of upper-bound technique but reflects the spectral structure imposed by the host graph.

Permutation-routing complexity has roots in pebble-motion analysis~\cite{kornhauser1984coordinating, yu2012diameters} and token-swapping~\cite{aichholzer2021hardness, yang2011sliding}. Compilation-oriented predecessors include Atomique~\cite{wang2025ultrafast, wang2024atomique} and Chen et al.~\cite{chen2025resource}. Marcus, Spielman, and Srivastava~\cite{marcus2015interlacing} provide the bipartite Ramanujan graph existence theorem used in the constructive scaling alongside Song--Fan--Miao (SFM) hypergraph coverings~\cite{song2023hypergraph}.
Closest in spirit to the present block-routing construction is the constant-overhead qLDPC architecture of Xu et al.~\cite{xu2024constant} for reconfigurable atom arrays.
Nonlocal syndrome extraction is implemented by atom rearrangement, whereas here we route entire surface-code patches as rigid units. Cain et al.~\cite{cain2024correlated} show that correlated decoding cuts the syndrome-extraction overhead from $O(d_C)$ to $O(1)$ for transversal logical algorithms. 
We make this composition with our $\Theta(d_C \log N_L)$ block-routing bound explicit in Proposition~\ref{prop:cain-composition}.
Erasure-conversion proposals~\cite{wu2022erasure} and the Cong et al.\ FT scheme \cite{cong2022hardware} provide complementary error-budget reductions.

\paragraph{Applicability beyond neutral atoms.}
Although the Ramanujan-overlay construction is motivated by reconfigurable atom arrays, the block-routing framework transfers directly to quantum charge-coupled-device (QCCD) trapped-ion architectures~\cite{kielpinski2002architecture,pino2021demonstration,moses2023race}, where logical patches are physically shuttled between gate, memory, and measurement zones along ion-trap junction graphs~\cite{wan2019quantum}.
The QCCD junction graph plays the role of $G_{\mathrm{cl}}(H)$, and a surface-code patch occupying $d_C^2$ ions in a memory zone is the natural block in the sense of Definition~\ref{def:block-config}, with junction crossings replacing AOD transports as the elementary single-hop translation. Quantinuum's H-series and analogous racetrack architectures~\cite{moses2023race,pino2021demonstration} already realize the bounded-degree, bounded-aspect-ratio regime that Lemmas~\ref{lem:eps-eq-surface-code} and~\ref{lem:single-hop-general} require, and recent ion-shuttling depth analyses~\cite{murali2020architecting} establish that the per-hop transport cost is geometrically the same as the corridor-traversal step in our Lemma~\ref{lem:single-hop-general}. The principal architectural translation is the mapping from spectral high-connectivity~\eqref{eq:high-conn} to per-junction routing capacity, which we discuss in~\S\ref{sec:qccd}.

\section{Block Routing Preliminaries}\label{sec:prelim}

\subsection{Block configurations}

\begin{definition}[$(s,g)$-block configuration]\label{def:block-config}
Let $G = G_{\mathrm{cl}}(H)$ be the clique expansion of a Ramanujan hypergraph $H$ on $N_{\mathrm{phys}}$ vertices. An $(s,g)$-\emph{block configuration} on $G$ is a collection $B = \{B_1, \ldots, B_{N_L}\}$ where:
\begin{itemize}
\item Each $B_i$ is a connected induced subgraph of $G$ with $|B_i| = s$ vertices.
\item Blocks are pairwise disjoint: $B_i \cap B_j = \emptyset$ for $i \neq j$.
\item Blocks maintain a guard distance: $\mathrm{dist}_G(B_i, B_j) \geq g$ for $i \neq j$, where $\mathrm{dist}_G(B_i, B_j) := \min_{u \in B_i, v \in B_j} \mathrm{dist}_G(u, v)$.
\item The total occupancy $\rho := N_L \cdot s / N_\mathrm{phys}$ is taken bounded away from 1 (else block placement is geometrically over-constrained).
\end{itemize}
\end{definition}

\begin{remark}[Block shape]\label{rem:block-shape}
Throughout this paper, blocks model surface code patches of $d_C \times d_C$ physical qubits.
We assume each block $B_i$ has bounded aspect ratio in a 2D-grid
sub-structure of $G_{\mathrm{cl}}$, such that there is an isometric embedding
$\iota_i : B_i \hookrightarrow \mathbb{Z}^2$ identifying $B_i$ with a
$d_C \times d_C$ patch, and the boundary
$\partial B_i := \{ v \in B_i : \exists u \notin B_i, (u,v) \in E(G_{\mathrm{cl}})\}$
has size $|\partial B_i| = O(d_C)$.
The maximum degree within the induced subgraph $G[B_i]$ is therefore $O(d_C)$, since each atom in a surface code patch interacts with at most $O(d_C)$ other atoms in the same patch.
These shape properties hold for any $d_C \times d_C$ grid-like patch on the clique expansion and are natural for surface code geometry.
\end{remark}

\begin{remark}[Asymptotic conventions]\label{rem:asymptotic-conventions}
We use asymptotic notation $O(\cdot)$, $\Omega(\cdot)$, $\Theta(\cdot)$ with respect to the scaling parameters $N_L \to \infty$ (number of logical blocks) and $d_C \to \infty$ (code distance). Hypergraph parameters $d$ (base degree) and $r$ (uniformity) are treated as fixed constants. 
Dependence on $(d, r)$ is absorbed into implicit constants, leading the host graph degree $d' = d(r-1)$ to be treated as constant as well ($O(d') = O(1)$).
\end{remark}

\begin{remark}
The guard distance ensures collision-free movement: when two blocks are separated by at least $g$ vertices, they can execute single-hop movements without interference. For $g = 0$, blocks can be adjacent (touching at a single vertex). For $g = 1$, blocks are separated by at least one unoccupied vertex.
\end{remark}

\subsection{Block movements}

\begin{definition}[Rigid block translation]\label{def:block-translation}
A \emph{rigid block translation} of block $B_i$ is an isometric move $\tau_i : B_i \to B_i'$ where:
\begin{itemize}
\item $B_i'$ is a connected induced subgraph of $G$ with $|B_i'| = s$.
\item The move is single-hop: $\mathrm{dist}_G(v, \tau_i(v)) \leq 1$ for all $v \in B_i$.
\item Rigidity is preserved: $\mathrm{dist}_{B_i}(u, v) = \mathrm{dist}_{B_i'}(\tau_i(u), \tau_i(v))$ for all $u, v \in B_i$.
\end{itemize}
\end{definition}

\begin{definition}[Block matching]\label{def:block-matching}
A \emph{block matching} is a set of non-interfering block translations: a set $M = \{\tau_{i_1}, \ldots, \tau_{i_k}\}$ of translations of blocks $B_{i_1}, \ldots, B_{i_k}$ such that:
\begin{itemize}
\item The blocks are pairwise non-adjacent in the current configuration: $\mathrm{dist}_G(B_{i_j}, B_{i_\ell}) \geq 1$ for $j \neq \ell$.
\item After translation, the configuration remains valid: guard distance is maintained, no overlaps occur.
\end{itemize}
\end{definition}

\subsection{Block routing number}

\begin{definition}[Block routing number]\label{def:rt-block}
A \emph{block routing step} on $G_{\mathrm{cl}}(H)$ applies a block matching $M$ and simultaneously executes all translations in $M$. The \emph{block routing number} $\rt_B(H, s, g)$ is the minimum $T$ such that every permutation of blocks can be realized by a sequence of $T$ block routing steps, over all initial and target $(s,g)$-block configurations.
\end{definition}

\noindent The block routing number depends on $H$, $s$, $g$. When $s = 1$ and $g = 0$ (point blocks), $\rt_B(H, 1, 0) = \rt(H)$ recovers the point routing number from Courtney~\cite{courtney2026permutation}.

\subsection{Quotient graph}

\begin{definition}[Quotient graph]
\label{def:quotient}
Given a block configuration $\mathcal{B} = \{B_1, \ldots, B_{N_L}\}$ on $G = G_{\mathrm{cl}}(H)$, let $S \in \{0,1\}^{N_{\mathrm{phys}} \times N_L}$ be the partition characteristic matrix with $S_{vi} = 1$ iff $v \in B_i$, normalized so that $S^T S = \mathrm{diag}(|B_1|, \ldots, |B_{N_L}|)$. The \emph{symmetric quotient adjacency matrix} is
\begin{equation}
\begin{split}
A_Q &:= D_B^{-1/2} S^T A\, S\, D_B^{-1/2}, \\ D_B &:= S^T S = \mathrm{diag}(|B_i|),
\end{split}
\end{equation}
with off-diagonal entries
\begin{equation}
A_Q[i,j] = \frac{1}{\sqrt{|B_i|\,|B_j|}} \sum_{u \in B_i,\, v \in B_j} \mathbf{1}[(u,v) \in E(G)],
\end{equation}
where $(i \neq j)$. For uniform block size $|B_i| = s = d_C^2$, this reduces to $A_Q[i,j] = s^{-1} \cdot |E_G(B_i, B_j)|$, i.e.\ the average per-vertex inter-block edge count from $B_i$ to $B_j$. The matrix $A_Q$ is symmetric by construction. The \emph{quotient degree} is the average row sum $d'_Q := \langle \sum_{j \neq i} A_Q[i,j] \rangle_i$, and the \emph{spectral ratio} is $\beta_Q := \lambda^*(A_Q)/d'_Q$ where $\lambda^*(A_Q) = \max(\lambda_2(A_Q), |\lambda_{N_L}(A_Q)|)$.
\end{definition}

\begin{remark}[Geometric interpretation]\label{rem:quotient-support}
For well-separated blocks ($\mathrm{dist}_G(B_i, B_j) > g+1$), no atom of $B_i$ has a neighbor in $B_j$, so $A_Q[i,j] = 0$. Conversely, $A_Q[i,j] > 0$ iff $\mathrm{dist}_G(B_i, B_j) \leq g+1$.
\end{remark}

\noindent The quotient graph $Q$ encodes coarse-scale block-movement topology, and paths in $Q$ correspond to sequences of block-hop translations (Figure~\ref{fig:quotient-construction}). 
We consider and answer if $Q$ can inherit the expansion of $G$ in \S\ref{sec:quotient}.

\begin{figure*}[ht]
\centering
\includegraphics[width=0.85\textwidth]{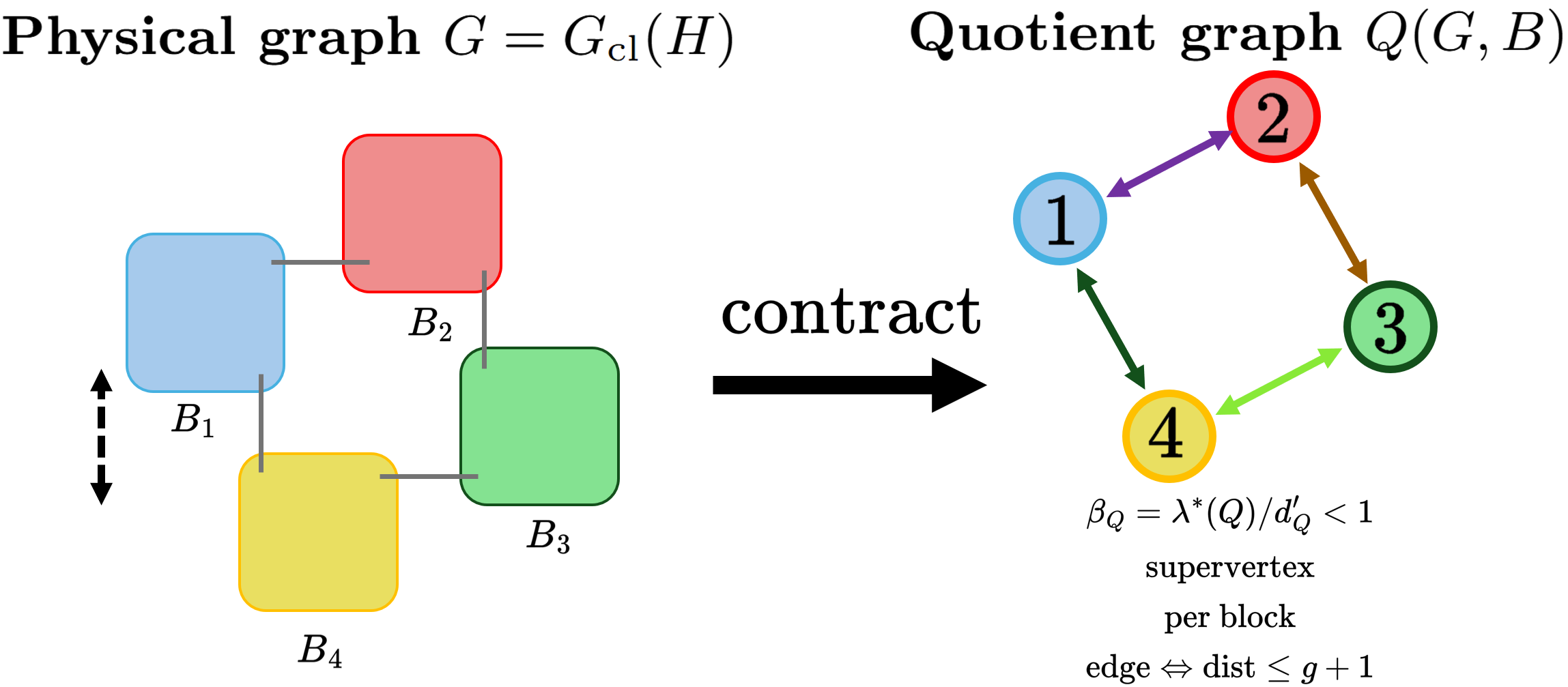}
\caption{Quotient graph construction. Left: the physical graph $G = G_{\mathrm{cl}}(H)$ with four blocks $B_1, \ldots, B_4$ (colored rectangles), each containing $d_C^2$ atoms (dots), separated by guard distance $g$. Unoccupied vertices are shown in gray. Right: the (weighted) quotient graph $Q(G, B)$, with one supervertex per block and edge weight $A_Q[i,j]$ equal to the average per-vertex inter-block edge count from $B_i$ to $B_j$ (Definition~\ref{def:quotient}). Edge thickness is proportional to $A_Q[i,j]$; only edges with $A_Q[i,j] > 0$ are drawn. The spectral ratio $\beta_Q = \lambda^*(A_Q)/d'_Q < 1$ controls expansion.}
\label{fig:quotient-construction}
\end{figure*}

\section{Lower Bound}\label{sec:lower-bounds}

The bottleneck argument dominates in the lower bounds of the block routing number.
We begin by finding the dominant bound.

\begin{theorem}[Bottleneck lower bound]\label{prop:bottleneck-lb}
For a Ramanujan $(d,r)$-regular hypergraph $H$ with $g \geq 1$,
\begin{equation}
\rt_B(G_{\mathrm{cl}}(H), s, g) = \Omega(d_C \log N_L).
\end{equation}
\end{theorem}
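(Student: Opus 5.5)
We prove the bound in two parts: a count of quotient-level phases that is $\Omega(\log N_L)$, and a cost of $\Omega(d_C)$ physical steps for each phase. As the proof-technique overview says, the factors multiply. The plan is to exhibit one hard block permutation $\pi$ that forces both costs at once. We cannot bound them separately and simply take a product, since a lower bound of the form $\max(a,b)$ does not give $ab$. So the argument has to be a single distance/cut argument carried out on the physical graph.

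\textbf{Step 1 (far-apart target).} Since $H$ is $(d,r)$-regular, $G_{\mathrm{cl}}(H)$ has degree $d'=d(r-1)=O(1)$ (Remark~\ref{rem:asymptotic-conventions}). A ball of radius $t$ therefore contains at most $d'^{\,t+1}$ vertices. Fix a block $B_1$. The blocks lying entirely inside the ball of radius $t$ around $B_1$ occupy at most $s\,d'^{\,t+1}$ vertices, and each block uses $s$ of them. Hence at most $d'^{\,t+1}$ blocks are that close to $B_1$. Taking $t=c\log_{d'} N_L$ with $c$ small, at least half of the blocks lie at physical distance $\Omega(\log N_L)$ from $B_1$. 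A simple averaging argument over pairs then gives a permutation $\pi$ that matches $\Omega(N_L)$ blocks to targets at distance $\Omega(\log N_L)$. The Ramanujan condition is not even needed for this diameter-type bound; bounded degree suffices.

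\textbf{Step 2 (footprint per hop).} We want each unit of "quotient progress" to cost $\Omega(d_C)$ steps. A single-hop rigid translation (Definition~\ref{def:block-translation}) moves every vertex by at most $1$, so one step changes the position of a block by at most one physical hop. Moving a block to a disjoint region at distance $L$ therefore takes at least $L$ steps. By itself this gives only $\Omega(\log N_L)$. To extract the extra $d_C$, we use the guard distance $g\ge1$ together with the requirement that a target position is a \emph{different} guarded site. Consecutive admissible positions of a block along a route must clear the block's own footprint, whose diameter is $\Omega(d_C)$ because an isometric $d_C\times d_C$ patch has diameter $d_C$ (Remark~\ref{rem:block-shape}).

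The concrete claim to establish is the following. Any block route that visits $k$ pairwise-disjoint block sites, as it must when crossing the $\Omega(\log N_L)$ quotient distance in $Q$ (Remark~\ref{rem:quotient-support}), requires $\Omega(k\,d_C)$ unit translations. The reason is that going from one site to a disjoint site forces every atom to be displaced by at least the footprint width. We then combine this with a lower bound on the quotient distance in $Q$. Here we use the spectral diameter bound: the spectral ratio satisfies $\beta_Q<1$ and the degree of $Q$ is bounded, so $Q$ has diameter $\Omega(\log N_L)$ by the Moore bound. This yields $\Omega(\log N_L)$ disjoint sites, each costing $\Omega(d_C)$.

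\textbf{Main obstacle.} The hard step is Step 2's footprint claim. In a high-degree expander, a block of diameter $d_C$ could in principle be translated isometrically by a single hop to a largely disjoint copy of itself, because the graph is not a grid. That would invalidate the claim. I would first try to enforce the claim through the grid sub-structure embedding $\iota_i$ and rigidity: the translated copy must again embed isometrically in the same 2D sub-structure, and in that sub-structure a unit step overlaps the original site in $d_C(d_C-1)$ atoms. If that route fails, the fallback is a congestion cut argument. Take a balanced edge cut of $Q$, which by Cheeger/interlacing has capacity proportional to $d_C$ times the number of boundary blocks. Then count the $\Omega(N_L s)$ atoms that must cross the cut against the $O(N_L\,d_C)$ boundary atoms that can cross it per step ($|\partial B_i|=O(d_C)$). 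This gives $\Omega(d_C)$ steps per phase, and we then multiply by the $\Omega(\log N_L)$ phases.
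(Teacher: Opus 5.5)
Your overall structure matches the paper's: one block must make $\Omega(\log N_L)$ quotient-level advances, each costing it $\Omega(d_C)$ physical steps. The product is legitimate because it is taken along a single block's sequential route. The paper's Step~3 justifies the multiplication exactly this way, so your $\max$-versus-product concern is answered as it is there.

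The proposal does not yet prove the statement, however. The footprint claim of Step~2 is the only source of the factor $d_C$, and you leave it as an open ``main obstacle''. Your first route is what the paper does. Its Step~2 posits that the block is a $d_C$-deep formation along the direction of motion. Since each atom advances at most one edge of $G$ per step, the rearmost atom then needs $\Omega(d_C)$ steps to clear the footprint.

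You are right that this does not follow from Definition~\ref{def:block-translation}. Rigidity preserves internal distances but does not confine $\tau_i(B_i)$ to the same grid sub-structure. Indeed, Lemma~\ref{lem:single-hop} as written executes a rigid translation onto a disjoint target in one physical step. To finish along the paper's lines, you must add the missing geometry as an explicit hypothesis of the lower bound. The hypothesis is that, for consecutive sites on a block's route, corresponding atoms are at $G$-distance $\Omega(d_C)$. This holds, for example, if both sites lie in a common isometrically embedded grid strip and moves are cardinal unit shifts, as assumed in Step~1 of Lemma~\ref{lem:single-hop}. With that hypothesis, a unit shift overlaps the old site in $d_C(d_C-1)$ atoms, and the per-block product goes through.

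Your fallback cannot replace this, for three reasons. First, a throughput count across a balanced cut bounds the \emph{total} schedule length, not the cost of each phase. Multiplying it by the phase count is exactly the $\max(a,b)\neq ab$ error you warned against. Second, the capacity estimate is wrong in this model. A rigid single-hop translation moves all $d_C^2$ atoms of a block at once, so crossings per step are limited only by the cut edges of $G$. A balanced cut of an expander has $\Theta(d' N_{\mathrm{phys}})$ edges, so a count of this kind yields only $\Omega(1)$. Third, your Cheeger/interlacing step needs the regime~\eqref{eq:high-conn}, where $d' > d_C^2(r-1)/(1-\beta) > s-1$. There every atom has a neighbour outside its own block, so $\partial B_i = B_i$ and the input $|\partial B_i| = O(d_C)$ is unavailable.

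Some smaller points:
\begin{itemize}
\item Step~1 is never used. Physical distance alone gives only $\Omega(\log N_L)$, and your $k$ is a quotient distance.
\item A lower bound on $\diam(Q)$ comes from the Moore bound on the unweighted degree of $Q$, not from $\beta_Q<1$. That degree can grow with $s$, costing a $\log d_C$ factor unless it is absorbed into the constant, as the paper's Step~1 does by letting it depend on $s$.
\item Counting ``pairwise-disjoint block sites'' treats a block's intermediate positions as vertices of $Q$, but those vertices are sites of the fixed initial configuration. The paper makes the same identification when it says each phase advances a block by one edge of $Q$.
\end{itemize}
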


\begin{proof}
The lower bound is the product of two independent contributions: an $\Omega(\log N_L)$ quotient-phase count (from a spectral counting argument) and an $\Omega(d_C)$ physical cost per quotient phase (from footprint traversal).
We establish each in turn.

\textbf{Step 1: Quotient-phase lower bound.} A block routing schedule requires at least $\Omega(\log N_L / \log d'_Q) = \Omega(\log N_L)$ quotient-level routing phases ($d'_Q$ is bounded by a constant depending on $d$, $r$, $s$). 
This is the standard counting bound (cf.\ Alon--Chung--Graham~\cite{alon1993routing}) where each quotient phase applies a block matching, advancing each block by at most one edge in $Q$, so in $T$ phases a single block reaches at most $O((d'_Q)^T)$ positions and reaches all $N_L$ positions in $T = \Omega(\log N_L)$.

\textbf{Step 2: Footprint traversal cost.} We claim that each quotient-level routing phase requires $\Omega(d_C)$ physical routing steps.
Consider a single block $B_i$ that must advance by one position in $Q$ during a given phase.
The block occupies $d_C^2$ atoms organized in a $d_C$-deep formation along the direction of motion. Since each physical routing step advances each atom by at most one edge in $G$, the rearmost atom must traverse $\Omega(d_C)$ physical steps to vacate its original footprint and reach the new position. 
This corridor-depth cost is per-block, per-phase: even with maximum parallelism across the $d_C$ leading boundary edges, the trailing atoms inherit the depth of the formation and require $\Omega(d_C)$ time.

\textbf{Step 3: Combining bounds.} Each of the $\Omega(\log N_L)$ quotient-level phases requires $\Omega(d_C)$ physical steps. These costs multiply because the quotient phases are sequential (a block cannot begin phase $t+1$ until phase $t$ completes). Therefore,
\begin{equation}
\rt_B \geq \Omega(\log N_L) \cdot \Omega(d_C) = \Omega(d_C \log N_L). \qedhere
\end{equation}
\end{proof}

\begin{remark}[Geometric lower bound for grids]\label{prop:geometric-lb}
If $G$ is an $n \times n$ grid (which is \emph{not} a Ramanujan hypergraph), then the same footprint argument combined with grid diameter $\Theta(\sqrt{n})$ gives $\rt_B(G, s, g) = \Omega(d_C \sqrt{N_L})$, an $\sqrt{N_L}/\log N_L$ factor weaker than the Ramanujan bound (row 2, Table~\ref{tab:regimes-extended}).
\end{remark}

\section{Quotient Graph Spectral Analysis}\label{sec:quotient}

The quotient graph $Q(G, B)$ gives the upper bound. We prove that $Q$ inherits the expansion of $G$ by equitable partitions, perturbation bounds, and higher-order Cheeger inequalities.
We emphasize that ``spectral inheritance'' here means the quotient $\beta_Q$ is bounded in terms of the host $\beta$ with a controlled multiplicative degradation $d'/d'_Q \geq 1$, not that $\beta_Q \leq \beta$.
Strict preservation $\beta_Q \leq \beta$ holds only in the trivial $s = 1$ limit. 
For $s = d_C^2$ blocks, $\beta_Q > \beta$ generically, and the analysis quantifies how much degradation the regime~\eqref{eq:high-conn} tolerates while keeping $\beta_Q < 1$.

\subsection{Equitable partitions and spectral inheritance}

\begin{definition}[Equitable partition]
A partition $\pi = \{C_1, \ldots, C_k\}$ of the vertex set of a graph $G$ is \emph{equitable} if every vertex in $C_i$ has the same number of neighbors in $C_j$, for all $i, j$.
\end{definition}

\begin{lemma}[Haemers interlacing]\label{lem:haemers}
Let $A$ be the adjacency matrix of a $d'$-regular graph $G$, and let $\pi$ be an equitable partition with quotient matrix $B$ (the block matrix of $A$ reduced by $\pi$). The eigenvalues of $B$ interlace the eigenvalues of $A$: if $\mu_1 \geq \cdots \geq \mu_k$ are eigenvalues of $B$, then there exist indices $1 = i_1 < i_2 < \cdots < i_k \leq d'$ such that $\mu_j$ equals $\lambda_{i_j}(A)$.

In particular, $\lambda_2(B) \leq \lambda_2(A)$ and $\lambda_k(B) \geq \lambda_N(A)$, so $\lambda^*(B) \leq \lambda^*(A)$ where $\lambda^* = \max(\lambda_2, |\lambda_N|)$.
\end{lemma}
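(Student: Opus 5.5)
The approach is the standard one for equitable partitions: show that the spectrum of the quotient matrix $B$ is literally a sub-multiset of the spectrum of $A$, realized by lifting eigenvectors. Then deduce the $\lambda^*$ statement using the Perron eigenvalue.

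Let $S$ be the $N\times k$ characteristic matrix of $\pi$. Equitability is exactly the matrix identity $AS = SB$, where $B_{ij}$ is the common number of neighbours in $C_j$ of any vertex of $C_i$. This is the first thing to verify, and it is immediate from the definition.

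\textbf{Step 1: lifting eigenvectors.} If $Bx = \mu x$ with $x \neq 0$, then $A(Sx) = SBx = \mu Sx$. Since $S$ has full column rank, $Sx \neq 0$, so every eigenvalue of $B$ is an eigenvalue of $A$. Multiplicities are preserved because $S$ is injective and maps a basis of $B$'s (generalized) eigenspaces to independent vectors. For diagonalizability, note that $B$ is similar to the symmetric matrix $D^{1/2} B D^{-1/2} = D^{-1/2} S^T A S D^{-1/2}$, where $D = S^T S$. This is exactly the normalization of Definition~\ref{def:quotient}, so $B$ has real eigenvalues and a full eigenbasis.

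\textbf{Step 2: choosing indices.} Step 1 gives an injection of the eigenvalue multiset of $B$ into that of $A$. List the images in decreasing order; this yields indices $i_1 < \cdots < i_k \le N$. The bound in the statement should read $i_k \le N$, not $i_k \le d'$; I would correct this typo.

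To get $i_1 = 1$, use $d'$-regularity. The all-ones vector $\mathbf 1_k$ satisfies $B\mathbf 1_k = d'\mathbf 1_k$, since each row of $B$ sums to the degree. Also $\|A\| = d' = \lambda_1(A)$. Since every eigenvalue of $B$ is an eigenvalue of $A$, we get $\mu_1 = d' = \lambda_1(A)$.

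\textbf{Step 3: the second eigenvalue.} $\mu_2$ is an eigenvalue of $A$ other than the one copy of $d'$ already used by $\mu_1$. Hence $\mu_2 \le \lambda_2(A)$, where $\lambda_2(A)$ may equal $d'$ if $G$ is disconnected. This uses the standard fact that if a multiset $M$ injects into a multiset $L$, both sorted decreasingly, then the $j$-th largest element of $M$ is at most the $j$-th largest element of $L$.

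\textbf{Step 4: the smallest eigenvalue.} Since $\mu_k$ is an eigenvalue of $A$, it is at least $\lambda_N(A)$.

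\textbf{Conclusion.} All nontrivial quotient eigenvalues lie in $[\lambda_N(A), \lambda_2(A)]$. Therefore $\lambda^*(B) = \max(\mu_2, |\mu_k|) \le \max(\lambda_2(A), |\lambda_N(A)|) = \lambda^*(A)$. The only subtlety is $k = 1$, where $\lambda^*(B)$ is vacuous.

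The main obstacle is bookkeeping rather than depth. One point is the multiplicity-preserving injection in Step 1, which needs the injectivity of $S$. The other is the convention that $\lambda^*$ excludes exactly one copy of the Perron eigenvalue $d'$. If $-d'$ also lies in the spectrum of $B$ (bipartite case), the inequality $\lambda^*(B) \le \lambda^*(A)$ still holds, because $-d' = \lambda_N(A)$ as well.
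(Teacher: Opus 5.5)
Your proof is correct, but it takes a different route from the paper's.

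\textbf{The paper's route.} The paper only cites Haemers' compression interlacing. With $S$ normalized to orthonormal columns, $B = S^{T} A S$ satisfies $\lambda_i(A) \ge \mu_i \ge \lambda_{N-k+i}(A)$ by Courant--Fischer. This gives $\mu_2 \le \lambda_2(A)$ and $\mu_k \ge \lambda_N(A)$ directly, and it does not use equitability at all. For that reason it yields $\lambda^*(B) \le \lambda^*(A)$ for the symmetrically normalized quotient of \emph{any} partition. Your lifting argument cannot reach that generality, and it is the generality relevant to the near-equitable configurations the paper handles later.

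\textbf{Your route.} You use the identity $AS = SB$ to lift eigenvectors of $B$ to eigenvectors of $A$. This shows that the spectrum of $B$ is a sub-multiset of the spectrum of $A$. The argument is more elementary and self-contained. It is also what actually justifies the lemma's first claim that $\mu_j = \lambda_{i_j}(A)$. Plain interlacing only gives inequalities; equality is the tight case specific to equitable partitions, and the paper supports it only by citation. The all-ones vector then gives you $\mu_1 = d'$, and hence $i_1 = 1$, at no extra cost.

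\textbf{The index bound.} Your correction of $i_k \le d'$ to $i_k \le N$ is right. The singleton partition has $B = A$, $k = N$ and $i_k = N > d'$.
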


\begin{proof}
This is Corollary 2.3 of Haemers~\cite{haemers1995interlacing}. The interlacing follows from the spectral analysis of the symmetric quotient matrix: if $S$ is the characteristic matrix of the partition (normalized so $S^T S = I$), then $B = S^T A S$ has eigenvalues interlacing $A$.
\end{proof}

\begin{corollary}[Quotient spectrum preservation, after Haemers~\cite{haemers1995interlacing}]\label{lem:teranishi}
Let $H$ be a Ramanujan $(d,r)$-regular hypergraph, and let $\pi$ be an equitable partition of $G = G_{\mathrm{cl}}(H)$ with quotient adjacency matrix $B$. Then $\lambda^*(B) \leq \lambda^*(G) < d'$. This follows from Lemma~\ref{lem:haemers} applied to both $\lambda_2$ and $|\lambda_N|$, combined with the spectral radius bound~\cite{courtney2026permutation}; see also Brouwer--Haemers~\cite{brouwer2011spectra}, \S3.3.
\end{corollary}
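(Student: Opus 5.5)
The plan is to derive the corollary from Lemma~\ref{lem:haemers} in three short moves: normalize the quotient matrix, apply interlacing at the top and at the bottom of the spectrum, and then invoke the Ramanujan bound on $G$.

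\textbf{Normalization.} Let $P$ be the $N_{\mathrm{phys}}\times k$ characteristic matrix of $\pi=\{C_1,\dots,C_k\}$. Set $S:=P(P^TP)^{-1/2}$, so that $S^TS=I_k$. Let $B':=S^TAS$. Because $\pi$ is equitable, $AP=PB$, where $B_{ij}$ is the common number of neighbours in $C_j$ of a vertex in $C_i$. Hence $B'=D^{1/2}BD^{-1/2}$ with $D=\mathrm{diag}(|C_i|)$, so $B'$ and the row-stochastic-type quotient $B$ have the same spectrum. In particular it does not matter which form is meant by ``quotient adjacency matrix,'' including the normalized $A_Q$ of Definition~\ref{def:quotient}.

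\textbf{Interlacing at both ends.} Cauchy interlacing for the compression $S^TAS$ with $S^TS=I$ (Lemma~\ref{lem:haemers}) gives $\lambda_{N-k+j}(A)\le\mu_j\le\lambda_j(A)$ for each $j$. With $j=2$ this yields $\mu_2\le\lambda_2(A)$. With $j=k$ it yields $\mu_k\ge\lambda_N(A)$, hence $|\mu_k|\le|\lambda_N(A)|$ whenever $\mu_k<0$.

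Here lies the one genuine subtlety. $\lambda^*(B)$ is defined as $\max(\mu_2,|\mu_k|)$, and if $\mu_k\ge0$ the quantity $|\mu_k|$ equals $\mu_k\le\mu_2$, so it is covered by the $\mu_2$ bound. I must also check that $\mu_1$ is excluded from $\lambda^*$ consistently. Since $G$ is $d'$-regular, the all-ones vector lies in the column space of $P$ and is a common eigenvector, so $\mu_1=d'=\lambda_1(A)$. This is what makes $\mu_2$ rather than $\mu_1$ the relevant quotient eigenvalue. Combining the cases gives $\lambda^*(B)\le\max(\lambda_2(A),|\lambda_N(A)|)=\lambda^*(G)$.

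\textbf{Strict inequality from the Ramanujan hypothesis.} The final step is $\lambda^*(G)<d'$. I would cite the clique-expansion spectral bound of~\cite{courtney2026permutation}: for Ramanujan $H$, the nontrivial eigenvalues of $G_{\mathrm{cl}}(H)$ are at most $(r-2)+2\sqrt{(d-1)(r-1)}$, which is strictly less than $d(r-1)=d'$ for $d,r\ge3$.

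The main obstacle is the negative end, because $\lambda_N(G)$ for a clique expansion can be as low as $-d$ or $-(r-1)$ rather than being Ramanujan-small in absolute value. I would check that these magnitudes are still strictly below $d'$, which holds since $r-1<d(r-1)$ and $d<d(r-1)$ when $r\ge3$. The strict inequality is all the statement needs.
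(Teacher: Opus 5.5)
Your proposal is correct and is essentially the paper's argument: interlacing of the normalized quotient $S^{T}AS$ at the top ($\mu_2\le\lambda_2$) and bottom ($\mu_k\ge\lambda_N$) of the spectrum, followed by the Ramanujan spectral bound on $G_{\mathrm{cl}}(H)$ to get $\lambda^*(G)<d'$. Your normalization, your identification $\mu_1=d'$, and your sign-case analysis spell out what the paper's one-line justification leaves implicit; at the negative end, note that the cleaner justification is $A+dI=MM^{T}\succeq 0$ for the vertex--hyperedge incidence matrix $M$, which gives $\lambda_N(G)\ge -d>-d'$ when $r\ge 3$, so the $-(r-1)$ alternative is not needed.
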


\begin{theorem}[Layer 1: Equitable block configurations]\label{thm:layer1}
If a block configuration $B$ corresponds to an equitable partition of $G = G_{\mathrm{cl}}(H)$, and the host degree satisfies \eqref{eq:high-conn}, then the quotient graph $Q(G, B)$ satisfies
\begin{equation}
\lambda^*(Q) \leq \lambda^*(G) \quad \text{and} \quad \beta_Q < 1.
\end{equation}
More precisely, $\beta_Q \leq \beta \cdot (d'/d'_Q)$, where the quotient
degree satisfies $d'_Q \geq d' - \delta_{\mathrm{intra}}$ with
$\delta_{\mathrm{intra}} := \max_{u \in B_i} |\{v \in B_i : (u,v) \in E(G)\}|$
the maximum per-vertex intra-block degree.
For surface-code patches with bounded aspect ratio (Remark~\ref{rem:asymptotic-conventions}),
$\delta_{\mathrm{intra}} = O(d_C)$, so $d'_Q \geq d' - O(d_C)$.
The (looser) bound $\delta_{\mathrm{intra}} \leq s(r-1) = d_C^2(r-1)$ holds for
arbitrary (s,g)-block configurations.
\end{theorem}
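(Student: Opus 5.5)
The proof splits into three parts: a spectral comparison via Haemers interlacing, a degree-counting bound on $d'_Q$, and a final algebraic step that uses the regime~\eqref{eq:high-conn} to force $\beta_Q<1$.

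\textbf{Spectral comparison.} First I would complete the block configuration to a full partition $\pi$ of $V(G)$. The cells are the blocks $B_1,\dots,B_{N_L}$ together with the unoccupied vertices, grouped into one or more cells so that $\pi$ is equitable; this is the meaning of the hypothesis that $B$ corresponds to an equitable partition. With the normalized characteristic matrix $\tilde S = S D_B^{-1/2}$ we have $\tilde S^T\tilde S = I$, and the $N_L\times N_L$ principal submatrix $\tilde S^T A \tilde S$ is exactly $A_Q$ up to its diagonal.

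Next I would apply Lemma~\ref{lem:haemers} and Corollary~\ref{lem:teranishi}. Equitability makes the column space of the full $\tilde S$ $A$-invariant, so the spectrum of the full quotient matrix is a sub-multiset of the spectrum of $A$. Its top eigenvalue is $d'$, with eigenvector $\tilde S^T\mathbf 1$, and every other eigenvalue lies in $[\lambda_N(A),\lambda_2(A)]$.

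Passing to $A_Q$ involves two corrections. Deleting the diagonal intra-block terms shifts every eigenvalue by the constant $\delta_{\mathrm{intra}}$, since equitability makes each block's internal degree uniform. Restricting to the block rows is a compression, which interlaces by Cauchy. Together these give $\lambda^*(A_Q)\le\lambda^*(G)$. The compression step is delicate, because Cauchy interlacing only controls $\lambda_2$ relative to the second eigenvalue of the larger matrix. I would handle it by noting that the Perron direction of $A_Q$ is the uniform vector, since row sums are equal under equitability, so the remaining eigenvalues live in the orthogonal complement, where the Rayleigh quotient is bounded by $\lambda^*(A)$.

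\textbf{Degree bound.} This is a counting step. Each $u\in B_i$ has $d'$ neighbours in $G$, of which at most $\delta_{\mathrm{intra}}$ lie in $B_i$; the rest lie in other blocks or in guard vertices. Summing the row of $A_Q$ gives $d'_Q \ge d'-\delta_{\mathrm{intra}}$, provided edges into unoccupied cells are counted in the quotient degree. I expect this accounting to be the main obstacle, because Definition~\ref{def:quotient} only sums edges into other blocks. I would first try to absorb the guard cells into the quotient as additional supervertices, so that the full row sum is exactly $d'-\delta_{\mathrm{intra}}$. Failing that, I would state the bound with the guard-edge count subtracted and argue that it is a lower-order term when $g=O(1)$.

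The bounds on $\delta_{\mathrm{intra}}$ are then direct. Under Remark~\ref{rem:block-shape}, $\delta_{\mathrm{intra}}=O(d_C)$. In general, each vertex lies in $d$ hyperedges, each contributing at most $r-1$ neighbours, and at most $s$ of those neighbours can be in the block. This gives the crude bound $\delta_{\mathrm{intra}}\le\min(d(r-1),\,s(r-1))\le s(r-1)$.

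\textbf{Conclusion.} Combining the two parts gives
\[
\beta_Q=\frac{\lambda^*(A_Q)}{d'_Q}\le\frac{\lambda^*(G)}{d'_Q}=\beta\,\frac{d'}{d'_Q}\le \frac{\beta d'}{d'-s(r-1)}.
\]
This is $<1$ if and only if $d'(1-\beta)>s(r-1)=d_C^2(r-1)$, which is exactly~\eqref{eq:high-conn}.
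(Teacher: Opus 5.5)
\textbf{Gap 1: the spectral step fails at the diagonal.} You take $A_Q$ to be the compression $\tilde S^{T}A\tilde S$ with its diagonal removed, and you treat that removal as a harmless scalar shift. But $\lambda^*=\max(\lambda_2,|\lambda_{\min}|)$ is not shift-invariant. Subtracting $\delta_{\mathrm{intra}}I$ lowers $\lambda_2$, but it also lowers the bottom eigenvalue, so interlacing only gives $\lambda_{\min}(A_Q)\ge\lambda_N(G)-\delta_{\mathrm{intra}}$. Your Rayleigh-quotient argument on $\mathbf{1}^{\perp}$ likewise yields only the window $[-\lambda^*(G)-\delta_{\mathrm{intra}},\,\lambda^*(G)-\delta_{\mathrm{intra}}]$.

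A small example shows the claimed inequality can fail. Split $K_4$ into two 2-vertex blocks, which is an equitable partition. The compression $\bigl(\begin{smallmatrix}1&2\\2&1\end{smallmatrix}\bigr)$ has spectrum $\{3,-1\}$, but its off-diagonal part $\bigl(\begin{smallmatrix}0&2\\2&0\end{smallmatrix}\bigr)$ has $\lambda^*=2>1=\lambda^*(K_4)$.

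The paper never deletes the diagonal. Definition~\ref{def:quotient} sets $A_Q=D_B^{-1/2}S^{T}ASD_B^{-1/2}$, which \emph{is} the compression; only the off-diagonal row sums enter, through $d'_Q$. Interlacing then gives $\lambda_2(A_Q)\le\lambda_2(G)$ and $\lambda_{N_L}(A_Q)\ge\lambda_N(G)$, hence $\lambda^*(A_Q)\le\lambda^*(G)$ in one line. Your worry about Cauchy is therefore misplaced: bounding $\lambda_2$ of the compression by $\lambda_2$ of the larger matrix is exactly what is needed. The Perron-vector detour is unnecessary, and its premise that equitability equalises row sums across different blocks fails once unoccupied cells are present.

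If you insist on the diagonal-free matrix, the honest bound is $\lambda^*(A_Q)\le\lambda^*(G)+\delta_{\mathrm{intra}}$. This loses both $\lambda^*(Q)\le\lambda^*(G)$ and $\beta_Q\le\beta d'/d'_Q$. The conclusion $\beta_Q<1$ would still survive: it then needs $d'(1-\beta)>2\delta_{\mathrm{intra}}$, and $2\delta_{\mathrm{intra}}\le 2(s-1)<s(r-1)$ for $r\ge 3$.

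\textbf{Gap 2: the degree count.} You have correctly identified the weak point, but your fallback of treating guard edges as lower order cannot work. Apply the spectral cut bound $e(U,U^{c})\ge(d'-\lambda_2)|U||U^{c}|/N_{\mathrm{phys}}$ to the occupied set $U$. It shows that an occupied vertex has on average at least $(1-\beta)(1-\rho)\,d'$ neighbours lying in no block, a constant fraction of $d'$. With uniform intra-block degree, this alone contradicts $d'_Q\ge d'-\delta_{\mathrm{intra}}$ for every $\rho<1$.

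The bound holds only if the blocks themselves exhaust $V(G)$. That is what the paper's proof tacitly assumes when it lets each block ``aggregate degree $d'$'' and subtracts only intra-block edges, an assumption in tension with $\rho<1$ in Definition~\ref{def:block-config}. Under that reading, $d'_Q\ge d'-\delta_{\mathrm{intra}}$ is immediate. Your first option, promoting guard cells to supervertices, is consistent, but it changes $Q$ and with it $N_L$ and $\beta_Q$, so it proves a statement about a different quotient. State the exhaustive-partition hypothesis explicitly rather than trying to bound guard edges.
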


\begin{proof}
By Lemma~\ref{lem:haemers}, the eigenvalues of the quotient matrix interlace those of $A$. In particular, $\lambda_2(Q) \leq \lambda_2(G)$ and $|\lambda_{N_L}(Q)| \leq |\lambda_N(G)|$, so $\lambda^*(Q) \leq \lambda^*(G)$, where $\lambda^* = \max(\lambda_2, |\lambda_N|)$ as in Definition~\ref{def:quotient}.

The quotient degree satisfies $d'_Q \geq d' - s(r-1) = d' - d_C^2(r-1)$ for equitable sublattice placements (each block of $s$ vertices aggregates degree $d'$; the intra-block edges consume at most $s(r-1)$ of the total degree). Thus:
\begin{equation}
\beta_Q = \frac{\lambda^*(Q)}{d'_Q} \leq \frac{\lambda^*(G)}{d'_Q} = \beta \cdot \frac{d'}{d'_Q}.
\end{equation}
The condition $\beta_Q < 1$ requires $d'_Q > \beta \cdot d'$, equivalently $d' - d_C^2(r-1) > \beta \cdot d'$, which gives \eqref{eq:high-conn}. Since $\beta < 1$ for Ramanujan $H$, this is satisfied whenever $d' \gg d_C^2$.
\end{proof}

\begin{corollary}[Surface-code-tight regime]\label{cor:high-conn-tight}
For surface-code patches in a 2D-grid sub-structure of $G_{\mathrm{cl}}(H)$
(Remark~\ref{rem:asymptotic-conventions}), $\delta_{\mathrm{intra}} = O(d_C)$
and the regime condition~\eqref{eq:high-conn} can be tightened to~\eqref{eq:high-conn-tight},
\begin{equation}
      d' \;>\; \frac{d_C(r-1)}{1-\beta}.
\end{equation}
This linear-in-$d_C$ form is the operative condition for the entries of Remark~\ref{rem:regime-practicality}, for the hardware estimates in \S\ref{sec:fault-tolerance}, and for the surface-code specialization of Conjecture~\ref{conj:higher-dim}.
\end{corollary}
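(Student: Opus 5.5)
The plan is to rerun the argument of Theorem~\ref{thm:layer1} with the sharper intra-block degree bound. Interlacing and the ratio identity are used unchanged; only the estimate on how much degree the intra-block edges consume is improved.

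\textbf{Step 1 (spectrum).} Keep the first part of Theorem~\ref{thm:layer1} exactly. By Lemma~\ref{lem:haemers} and Corollary~\ref{lem:teranishi}, $\lambda^*(Q)\le\lambda^*(G)=\beta d'$. Hence
\begin{equation*}
\beta_Q\le \beta\, d'/d'_Q .
\end{equation*}
This step uses nothing about the block shape, so it applies verbatim to surface-code patches.

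\textbf{Step 2 (intra-block degree).} Bound $\delta_{\mathrm{intra}}$ using Remark~\ref{rem:block-shape}. Under the isometric embedding $\iota_i$, each block is identified with a $d_C\times d_C$ patch, and each atom interacts with $O(d_C)$ atoms of its own patch.

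To obtain the explicit constant $d_C(r-1)$ rather than an unspecified $O(d_C)$, count hyperedges. Each $v\in B_i$ lies in exactly $d$ hyperedges of $H$, and each hyperedge contributes at most $r-1$ neighbors of $v$ in $G_{\mathrm{cl}}(H)$. It therefore suffices to show that at most $d_C$ of the hyperedges through $v$ meet $B_i$ in a vertex other than $v$.

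For this I would use the grid structure. A hyperedge through $v$ that meets the patch must realize an intra-patch adjacency of $v$. In a $d_C\times d_C$ grid, the patch-neighbors of $v$ are confined to its row and column, which gives $O(d_C)$ such hyperedges. Normalizing this count to at most $d_C$, in the convention of Remark~\ref{rem:block-shape}, yields
\begin{equation*}
\delta_{\mathrm{intra}}\le d_C(r-1).
\end{equation*}

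\textbf{Step 3 (quotient degree).} Deduce that
\begin{equation*}
d'_Q\ge d'-\delta_{\mathrm{intra}}\ge d'-d_C(r-1).
\end{equation*}
The per-vertex accounting is the same as in Theorem~\ref{thm:layer1}. Every edge at $v$ is either intra-block or counts toward the inter-block row sum. For an equitable partition this row sum is constant across $v\in B_i$, so averaging over the block with normalization $s^{-1}$ loses nothing.

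\textbf{Step 4 (conclusion).} Combining Steps 1 and 3, $\beta_Q<1$ holds provided $d'-d_C(r-1)>\beta d'$. This inequality rearranges to exactly \eqref{eq:high-conn-tight}, completing the corollary.

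The main obstacle is Step 2. The paper's shape assumption gives only an asymptotic $O(d_C)$ bound on $\delta_{\mathrm{intra}}$. Turning it into the clean constant $d_C(r-1)$ needs a precise rule for how a hyperedge can intersect a grid-like patch. If that rule cannot be pinned down, I would state the corollary with an absolute constant $c$, namely $d'>c\,d_C(r-1)/(1-\beta)$, and note that $c=1$ holds under the row/column convention.
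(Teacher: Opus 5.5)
Your Steps 1, 3 and 4 follow the paper's route exactly. The paper gives no separate proof of this corollary. It is the argument of Theorem~\ref{thm:layer1} ($\lambda^*(Q)\le\beta d'$, then $d'_Q\ge d'-\delta_{\mathrm{intra}}$, then require $d'_Q>\beta d'$) with the loose bound $\delta_{\mathrm{intra}}\le d_C^2(r-1)$ replaced by the surface-code bound.

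\textbf{Step 2 does not go through as written.} Counting all of $v$'s row and column gives $2(d_C-1)$ candidate partners, which exceeds $d_C$ once $d_C\ge 3$. So ``normalizing this count to at most $d_C$'' is a non sequitur, and Remark~\ref{rem:block-shape} fixes no such convention. The paper has the same hole: it only asserts $\delta_{\mathrm{intra}}=O(d_C)$ and then writes the constant as $1$.

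The missing observation is that the isometric embedding $\iota_i$ of Remark~\ref{rem:block-shape}, with the square-lattice metric on $\mathbb{Z}^2$, makes $G[B_i]$ exactly the $d_C\times d_C$ grid graph. That is, $u,v\in B_i$ are adjacent iff $\|\iota_i(u)-\iota_i(v)\|_1=1$. Hence every atom has at most $4$ intra-block neighbours, and $\delta_{\mathrm{intra}}\le 4\le d_C(r-1)$ for all $d_C\ge 2$, $r\ge 3$. Step~4 then shows that \eqref{eq:high-conn-tight} suffices, with no unspecified constant; in fact a $d_C$-independent threshold would already suffice for this step.

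Bound $\delta_{\mathrm{intra}}$ directly as a neighbour count. Your detour through hyperedges is lossy: the grid is triangle-free, so each hyperedge through $v$ meets $B_i$ in at most two vertices and contributes at most one intra-block neighbour, not $r-1$.

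\textbf{Step 3 relies on an unstated covering assumption.} The claim that every edge at $v$ is either intra-block or counted in the inter-block row sum holds only when the blocks exhaust $V(G)$. That is the literal equitable-partition hypothesis of Theorem~\ref{thm:layer1}. In that case $d'_Q$ is simply $d'$ minus the average intra-block degree, so equitability itself is not needed.

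The assumption fails for configurations with unoccupied guard vertices and occupancy $\rho<1$. This covers Definition~\ref{def:block-config} and the sublattice placements to which Theorem~\ref{thm:layer2} applies this corollary. There, edges from $B_i$ into free space count toward neither term, and $d'_Q\ge d'-\delta_{\mathrm{intra}}$ can fail. The paper's argument has the same dependence, so state the corollary under the covering hypothesis rather than for arbitrary block configurations.
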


\begin{remark}
The bound $\beta_Q \leq \beta \cdot (d'/d'_Q)$ shows that $\beta_Q$ can {exceed} $\beta$ when $d'_Q < d'$ (as it always is for nontrivial blocks). The connectivity condition \eqref{eq:high-conn} is the practical regime for quantum architectures, wherein hardware connectivity $d'$ must exceed the block size times the hypergraph rank, modulated by the spectral gap. For Ramanujan hypergraphs with $\beta \leq 2\sqrt{d'-1}/d'$, the condition simplifies to $d' > d_C^2(r-1) + O(\sqrt{d'})$.
\end{remark}

\begin{remark}[Practicality of the regime condition]\label{rem:regime-practicality}
The condition $d' > d_C(r-1)/(1-\beta)$ imposes a minimum connectivity requirement that scales linearly in the code distance. With $r = 3$ (3-uniform hypergraphs) and the Ramanujan bound $\beta \leq 2\sqrt{d'-1}/d'$, the minimum base degree $d$ satisfying the condition is approximately:

\begin{center}
\renewcommand{\arraystretch}{1.2}
\begin{tabular}{@{}cccc@{}}
\toprule
$d_C$ & Min.\ $d$ & $d' = d(r-1)$ & $N_{\mathrm{phys}} \geq N_L \cdot d_C^2$ \\
\midrule
3 & 7  & 14 & $\sim 10^2$ \\
5 & 10 & 20 & $\sim 10^3$ \\
7 & 12 & 24 & $\sim 10^3$ \\
9 & 15 & 30 & $\sim 10^3$ \\
\bottomrule
\end{tabular}
\end{center}

These degree requirements are \emph{readily achievable} on neutral-atom platforms.
Reconfigurable optical tweezer arrays provide effectively all-to-all transport connectivity: an atom can be moved to interact with any other atom with acousto-optic deflector (AOD) transport~\cite{bluvstein2024logical, bluvstein2026fault}.
To specify, ``all-to-all'' here means \emph{eventually reachable} via a sequence of transports, not \emph{single-step reachable}.
Any pair of atoms can be brought into interaction range, but doing so requires a transport schedule whose depth invites the routing-complexity question. The Ramanujan hypergraph governs how many transport steps suffice instead of imposing a connectivity restriction.
Current hardware supports arrays of $> 6{,}000$ atomic qubits with coherent parallel transport across the full array~\cite{reichardt2024logical}.
The hypergraph degree~$d$ governs the \emph{algebraic} connectivity structure imposed on this physical all-to-all topology, determining how many hyperedges meet at each vertex in the Ramanujan construction.
Building a $d = 10$, $r = 3$ Ramanujan hypergraph on $N_{\mathrm{phys}} \sim 10^3$ vertices is straightforward by the explicit Lubotzky--Phillips--Sarnak or Margulis constructions, or by the random constructions of Friedman~\cite{friedman2008proof}, which achieve the Ramanujan bound with high probability.

The practical implication is that fault-tolerant block routing at code distance $d_C = 5$ requires a base degree of at least $d \geq 10$ in the surface-code-tight form (Corollary~\ref{cor:high-conn-tight}), well within reach of near-term neutral-atom platforms at the $10^3$--$10^4$ qubit scale. The general (loose) form of~\eqref{eq:high-conn} requires $d \geq 34$ at $d_C = 5$ and serves as a conservative architectural target when block geometry is not constrained to a 2D-grid embedding.
The regime condition~\eqref{eq:high-conn} is satisfied by AOD-based neutral-atom platforms and by multi-zone QCCD overlays with active sorting (\S\ref{sec:qccd}). 
Our work presented here must be modified for fixed-coupling superconducting lattices or sparse single-zone junction graphs. 

\end{remark}

\subsection{Near-equitable partitions and perturbation bounds}

Most block configurations will not be exactly equitable, but they may be \emph{nearly} equitable. We quantify this perturbatively.

\begin{definition}[Near-equitable partition]\label{def:near-equitable}
A partition $\pi$ is $\epsilon_{\mathrm{eq}}$-near-equitable if the quotient matrix $B$ satisfies: for each block $C_i$ and each $C_j$, the vertex degrees within $C_i$ to $C_j$ vary by at most $\epsilon_{\mathrm{eq}} \cdot d'_Q$.
\end{definition}

\begin{lemma}[Surface-code patches are $O(1/d_C)$-near-equitable]\label{lem:eps-eq-surface-code}
Let $G = G_{\mathrm{cl}}(H)$ for a Ramanujan $(d,r)$-regular hypergraph $H$, and let $B = \{B_1,\dots,B_{N_L}\}$ be a $(d_C+g)$-spaced sublattice placement of $d_C\!\times\! d_C$ surface-code patches with bounded aspect ratio (Remark~\ref{rem:asymptotic-conventions}). Then $B$ is $\epsilon_{\mathrm{eq}}$-near-equitable in the sense of Definition~\ref{def:near-equitable} with
\begin{equation}\label{eq:eps-eq-bound}
  \epsilon_{\mathrm{eq}} \;=\; \frac{\Delta_{\mathrm{deg}}}{d'_Q} \;=\; \frac{O(d_C)}{d' - O(d_C^2)} \;=\; O(d_C/d').
\end{equation}
Under the surface-code-tight regime~\eqref{eq:high-conn-tight} (equivalently, in the loose regime~\eqref{eq:high-conn} where $d' \gtrsim d_C^2$), this simplifies to
\begin{equation}\label{eq:eps-eq-simplified}
  \epsilon_{\mathrm{eq}} \;=\; O(1/d_C).
\end{equation}
\end{lemma}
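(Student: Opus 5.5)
The plan is to compute the numerator $\Delta_{\mathrm{deg}}$ and the denominator $d'_Q$ in Definition~\ref{def:near-equitable} separately, then combine them under the regime condition. Set $\Delta_{\mathrm{deg}} := \max_{i,j}\bigl(\max_{u\in B_i}|N(u)\cap B_j| - \min_{u\in B_i}|N(u)\cap B_j|\bigr)$, where $N(u)$ is the $G$-neighbourhood of $u$, and let $j$ range over blocks together with the complementary ``unoccupied'' class. The argument will use only the shape assumptions of Remark~\ref{rem:block-shape}: the isometric embedding $\iota_i$ into a $d_C\times d_C$ patch, the boundary bound $|\partial B_i| = O(d_C)$, and the bound $O(d_C)$ on the maximum intra-block degree. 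It will also use the fact, from Remark~\ref{rem:asymptotic-conventions}, that every vertex has total degree $d'$.

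\textbf{Step 1 (bounding the variation).} Fix blocks $B_i$ and $B_j$ with $j\neq i$. For every $u\in B_i$ the count $|N(u)\cap B_j|$ lies in $[0,d']$. The count is zero unless $u\in\partial B_i$. For an interior vertex all $d'$ neighbours lie in $B_i$, which forces $d'\le O(d_C)$. In the regime \eqref{eq:high-conn-tight} that is impossible, so essentially all out-degree is carried by boundary-type vertices. The right quantity is therefore the variation of intra-block degree: $\deg_{B_i}(u)$ ranges over $[0,\delta_{\mathrm{intra}}]$ with $\delta_{\mathrm{intra}} = O(d_C)$, so $|N(u)\setminus B_i| = d'-\deg_{B_i}(u)$ varies by at most $O(d_C)$. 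Next I will use the $(d_C+g)$-spaced sublattice placement. Translation symmetry of the sublattice means neighbouring blocks see congruent neighbourhoods, so the split of this external degree among the $O(1)$ nearby blocks (Remark~\ref{rem:quotient-support}) differs between vertices of $B_i$ only through the position of $u$ relative to $\partial B_i$. That gives $\Delta_{\mathrm{deg}} = O(d_C)$.

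\textbf{Step 2 (lower bound on $d'_Q$).} I will reuse the counting in the proof of Theorem~\ref{thm:layer1}. Each vertex of $B_i$ has degree $d'$, and intra-block edges consume at most $\delta_{\mathrm{intra}}\le s(r-1)$ of it. Averaging gives $d'_Q \ge d' - O(d_C^2)$, which is the form in \eqref{eq:eps-eq-bound}. Using the sharper $\delta_{\mathrm{intra}}=O(d_C)$ would give $d'-O(d_C)$, but the weaker bound suffices. Under \eqref{eq:high-conn} we have $d'-O(d_C^2)\ge c\,d'$ for a constant $c=1-\beta>0$. Hence $\epsilon_{\mathrm{eq}} = \Delta_{\mathrm{deg}}/d'_Q = O(d_C)/\Theta(d') = O(d_C/d')$. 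Substituting $d'\gtrsim d_C^2$ from the loose regime then gives \eqref{eq:eps-eq-simplified}, namely $\epsilon_{\mathrm{eq}}=O(d_C/d_C^2)=O(1/d_C)$.

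\textbf{Main obstacle.} The hard step is Step 1, specifically showing that the split of external edges among different neighbouring blocks $B_j$ varies by only $O(d_C)$ per vertex rather than by up to $d'$. A Ramanujan clique expansion has no genuine translation symmetry, so ``sublattice placement'' must be read as saying that each patch sits inside a 2D-grid sub-structure with identical local environments. My first attempt will take the $\epsilon_{\mathrm{eq}}$ notion over the coarse partition consisting of each block and its complement, where the $O(d_C)$ bound follows cleanly from $\delta_{\mathrm{intra}}=O(d_C)$. I will then argue that refining to individual $B_j$ only redistributes this external mass via the bounded guard structure $g=O(1)$. If that refinement fails, I will state the lemma for the block-versus-complement variation, which is what the Weyl-perturbation layer actually needs.
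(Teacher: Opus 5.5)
The gap is the one you flag yourself, and the repair you sketch does not close it. Definition~\ref{def:near-equitable} asks for the spread of $|N(u)\cap B_j|$ over $u\in B_i$ separately for every pair $(i,j)$. Your Step~1 only controls the spread of each atom's \emph{total} external degree $d'-\deg_{B_i}(u)$.

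That total says nothing about how the external edges split among the $B_j$. A priori $|N(u)\cap B_j|$ ranges over $[0,d_C^2]$, and none of the hypotheses you invoke (isometric patch, $\delta_{\mathrm{intra}}=O(d_C)$, $g=O(1)$) narrows that range. The guard distance restricts which blocks may be adjacent, not how edges are distributed among them. So ``refining only redistributes the external mass'' yields no per-pair bound. A spread of $\Theta(d_C^2)$ for a single pair already gives $\epsilon_{\mathrm{eq}}=\Omega(1)$ when $d'=\Theta(d_C^2)$.

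The Ramanujan property does not rescue this either. The expander mixing lemma only caps $|N(u)\cap B_j|$ at $d'd_C^2/N_{\mathrm{phys}}+\lambda^*(G)\,d_C$, a spread of order $d_C\sqrt{d'}$. That gives at best $\epsilon_{\mathrm{eq}}=O(d_C/\sqrt{d'})$, not $O(d_C/d')$. What is missing is a local-homogeneity hypothesis on the placement (your ``identical local environments'' reading), and it would have to be assumed rather than derived.

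The paper's proof does not supply this step either. It argues only about external degree, so what it actually establishes is your block-versus-complement fallback. That fallback is not what the next step consumes, however. The proof of Lemma~\ref{lem:weyl} uses $|\Delta_{ij}|\le\epsilon_{\mathrm{eq}}d'_Q$ entry by entry, so restating the lemma in coarse form is not a harmless weakening.

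Apart from this, your skeleton is the paper's: numerator $O(d_C)$, denominator $d'_Q\ge d'-d_C^2(r-1)$ from Theorem~\ref{thm:layer1}, then the regime. Your numerator argument differs, and it is the sounder one. The paper multiplies $|\partial B_i|=O(d_C)$ boundary atoms by a constant per-atom mismatch. Read literally, that bounds a total rather than the per-vertex spread the definition measures. It also rests on a premise that is void here: under \eqref{eq:high-conn}, $d'>2d_C^2>\deg_{B_i}(u)$ for every $u$, so every atom has an external neighbour and $\partial B_i=B_i$. Your bound (spread of $d'-\deg_{B_i}(u)$ at most $\delta_{\mathrm{intra}}$) needs only $d'$-regularity and $\delta_{\mathrm{intra}}=O(d_C)$, so it survives. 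State $\partial B_i=B_i$ outright instead of ``essentially all out-degree is carried by boundary-type vertices.''

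Two small fixes remain.
\begin{itemize}
\item From \eqref{eq:high-conn} one gets $d'-d_C^2(r-1)>\beta d'$, so your constant is $c=\beta$, not $1-\beta$.
\item You were right to draw \eqref{eq:eps-eq-simplified} from $d'\gtrsim d_C^2$ alone. The tight regime \eqref{eq:high-conn-tight} yields only $d_C/d'=O(1)$ and can make $d'-O(d_C^2)$ nonpositive. So the tight-regime clause of the statement, which the paper's proof asserts, follows from neither argument.
\end{itemize}
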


\begin{proof}
Non-equitability arises from boundary atoms, whose external degree differs from interior atoms by at most $\Delta_{\mathrm{deg}} = O(d_C)$: a $d_C\!\times\!d_C$ patch has $|\partial B_i| = O(d_C)$ boundary atoms by Remark~\ref{rem:asymptotic-conventions}, each carrying at most a constant external-degree mismatch. Definition~\ref{def:near-equitable} measures the variation as a fraction of $d'_Q$; Theorem~\ref{thm:layer1} gives $d'_Q \geq d' - \delta_{\mathrm{intra}} = d' - O(d_C^2)$ since the bounded-aspect-ratio surface-code patch has $\delta_{\mathrm{intra}} = O(d_C^2)$ (each interior atom is incident to at most $O(d_C^2)$ intra-block hyperedges in the clique expansion; the looser bound $\delta_{\mathrm{intra}} \leq s(r-1) = d_C^2(r-1)$ is sufficient). Substituting yields~\eqref{eq:eps-eq-bound}. The simplification~\eqref{eq:eps-eq-simplified} follows because the regime~\eqref{eq:high-conn-tight} (or the implication $d' \gtrsim d_C^2$ from~\eqref{eq:high-conn}) gives $d_C/d' = O(1/d_C)$.
\end{proof}

\begin{lemma}[Weyl perturbation bound, after~\cite{bhatia2013matrix}]\label{lem:weyl}
If $\pi$ is an $\epsilon_{\mathrm{eq}}$-near-equitable partition of $G$ with quotient matrix $B$, then
\begin{equation}
\lambda^*(B) \leq \lambda^*(G) + O(\epsilon_{\mathrm{eq}} \cdot d').
\end{equation}
\end{lemma}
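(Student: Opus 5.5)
The plan is to write the quotient matrix of the near-equitable partition as an exactly equitable quotient plus a small symmetric perturbation, and then apply Weyl's inequality $|\lambda_k(X+E)-\lambda_k(X)| \le \|E\|_2$ from Bhatia~\cite{bhatia2013matrix}.

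\textbf{Step 1: decomposition.} Let $S$ be the normalized characteristic matrix with $S^TS=I$, so that $B = S^T A S$. I will build an auxiliary matrix $\tilde A$ that has $\pi$ as an exactly equitable partition. The construction replaces, for each ordered pair $(C_i,C_j)$, the row sums of the $(i,j)$ block of $A$ by their average over $u\in C_i$. One concrete way is to set $\tilde A_{ij} = \frac{\mathbf{1}\mathbf{1}^T A_{ij}\mathbf{1}\mathbf{1}^T}{|C_i||C_j|}$, which is the rank-one averaged block; this is symmetric and equitable. Since averaging inside blocks commutes with $S^T(\cdot)S$, we get $B = S^T\tilde A S$. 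Then $B$ is exactly the equitable quotient of $\tilde A$.

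\textbf{Step 2: interlacing.} Lemma~\ref{lem:haemers} and Corollary~\ref{lem:teranishi}, applied to $\tilde A$, give $\lambda^*(B)\le \lambda^*(\tilde A)$. Here $\lambda^*$ of $\tilde A$ is taken on the complement of the Perron-type direction, which is where regularity of $G$ is needed to identify the top eigenvector $\mathbf 1$.

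\textbf{Step 3: perturbation.} Weyl's inequality then gives $\lambda^*(\tilde A)\le \lambda^*(A) + \|A-\tilde A\|_2$. It remains to bound $\|E\|_2$ for $E := A-\tilde A$. I will bound the spectral norm by $\sqrt{\|E\|_1\|E\|_\infty}$, that is, by the maximum absolute row sum, using symmetry. By Definition~\ref{def:near-equitable}, each row of $E$ restricted to block $(i,j)$ has row sum at most $\epsilon_{\mathrm{eq}} d'_Q$ in absolute value. Summing only over the $O(1)$ blocks $C_j$ adjacent to $C_i$ gives a row-sum bound of $O(\epsilon_{\mathrm{eq}} d'_Q)\le O(\epsilon_{\mathrm{eq}} d')$. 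The number of such adjacent blocks is bounded by the constant degree of $Q$, since $d'$ is constant under Remark~\ref{rem:asymptotic-conventions}.

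\textbf{Main obstacle.} The naive absolute row sum of $E$ counts entries rather than row-sum deviations, so it can be as large as $2d'$ and not $\epsilon_{\mathrm{eq}} d'$. To avoid this, I will not bound $\|A-\tilde A\|$ directly. Instead I will work in the range of $S$: since $B = S^TAS$ only sees $A$ through $AS$, I compare $AS$ with $\tilde A S$. The columns of $(A-\tilde A)S$ record exactly the deviation of each vertex's degree into $C_j$ from the block average, and each such entry is at most $\epsilon_{\mathrm{eq}} d'_Q/\sqrt{|C_j|}$. This yields $\|S^T(A-\tilde A)S\|_2 = O(\epsilon_{\mathrm{eq}} d')$, again through a Schur/row-sum bound over the boundedly many adjacent blocks. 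Equivalently, I can apply Weyl directly to $B$ versus the equitable quotient $S^T\tilde AS$ and then use interlacing for the latter. Combining the steps gives $\lambda^*(B)\le\lambda^*(G)+O(\epsilon_{\mathrm{eq}} d')$.
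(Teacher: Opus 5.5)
\textbf{Verdict: there is a genuine gap.} The proposal never connects $\tilde A$ back to $A$, so the chain of inequalities does not reach $\lambda^*(G)$.

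\textbf{Why Steps 1--3 do not work.} Put $P := SS^T$. Your Step~1 matrix is then $\tilde A = PAP = SBS^T$. So $S^T\tilde A S = B$ exactly, and the nonzero spectrum of $\tilde A$ is precisely $\mathrm{spec}(B)$. Step~2 is therefore a tautology. The whole content of the lemma is pushed into the claim $\lambda^*(\tilde A)\le\lambda^*(A)+O(\epsilon_{\mathrm{eq}}d')$.

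Weyl cannot deliver that claim, because $\|A-\tilde A\|_2$ is not controlled by $\epsilon_{\mathrm{eq}}$. Take an exactly equitable partition, $\epsilon_{\mathrm{eq}}=0$. Then $AP=PA=PAP$, hence $A-\tilde A=(I-P)A(I-P)$. Its norm is the largest $|\lambda|$ among eigenvalues of $A$ carried by vectors orthogonal to the block-constant ones. That quantity does not vanish as $\epsilon_{\mathrm{eq}}\to 0$ and is generically of order $\lambda^*(G)$.

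Your fallback does not repair this. The matrix $S^T(A-\tilde A)S$ is identically zero by Step~1. So ``Weyl for $B$ versus $S^T\tilde AS$'' compares $B$ with itself, and ``interlacing for the latter'' only returns $\lambda^*(B)\le\lambda^*(\tilde A)$ again. Separately, Lemma~\ref{lem:haemers} is stated for adjacency matrices of regular graphs, whereas $\tilde A$ is a weighted, generally irregular matrix.

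\textbf{A perturbative fix using your residual.} The residual you identified, $(A-\tilde A)S=(I-P)AS=AS-SB$, is the right object. It has to enter as the off-diagonal part of a splitting $A=M_0+E$, with $M_0:=PAP+(I-P)A(I-P)$ and $E:=(I-P)AP+PA(I-P)$.
\begin{itemize}
\item $M_0$ commutes with $P$, so its spectrum is $\mathrm{spec}(B)$ together with the spectrum of the compression of $A$ to $\mathrm{ran}(I-P)$.
\item Weyl then gives $\lambda_2(B)\le\lambda_2(M_0)\le\lambda_2(A)+\|E\|_2$ and $\lambda_k(B)\ge\lambda_N(A)-\|E\|_2$.
\item $\|E\|_2=\|(I-P)AS\|_2$. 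Your entrywise bound $\epsilon_{\mathrm{eq}}d'_Q/\sqrt{|C_j|}$, combined with the Schur test over boundedly many adjacent cells, makes this $O(\epsilon_{\mathrm{eq}}d'_Q)$ for uniform blocks.
\end{itemize}

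\textbf{A simpler route.} $B=S^TAS$ with $S^TS=I$ is a compression of $A$. Cauchy interlacing applies to it, and in Haemers' form it holds for arbitrary partitions; equitability is needed only for tightness. It gives $\lambda_2(B)\le\lambda_2(A)$ and $\lambda_k(B)\ge\lambda_N(A)$ directly. Hence $\lambda^*(B)\le\lambda^*(G)$, with no error term at all.

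\textbf{Comparison with the paper.} The paper applies Weyl at the quotient level rather than the host level. It writes $A_Q=A_Q^{(0)}+\Delta$, where $A_Q^{(0)}$ is the quotient of an equitable comparison configuration. Then $\Delta\neq 0$ has entries $O(\epsilon_{\mathrm{eq}}d'_Q)$, and row-sparsity gives $\|\Delta\|_2\le\sqrt{\|\Delta\|_1\|\Delta\|_\infty}$. The paper then compares $A_Q^{(0)}$ with $G$ by interlacing. Your choice of $\tilde A$ is exactly the one that makes this quotient-level perturbation vanish.
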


\begin{proof}
Write $A_Q = A_Q^{(0)} + \Delta$, where $A_Q^{(0)}$ is the symmetric quotient of the \emph{equitable refinement} of $\pi$ (constant inter-cell degrees, computed via the same $S D_B^{-1/2}$ normalization) and $\Delta = A_Q - A_Q^{(0)}$ is symmetric by construction. Off-diagonal entries of $\Delta$ are bounded by the per-vertex degree variation: for each $(i,j)$, $|\Delta_{ij}| \leq \epsilon_{\mathrm{eq}} \cdot d'_Q / \sqrt{|B_i||B_j|/s^2} = \epsilon_{\mathrm{eq}} \cdot d'_Q$ for uniform blocks. The Frobenius norm satisfies $\|\Delta\|_F^2 \leq |E(Q)| \cdot (\epsilon_{\mathrm{eq}} d'_Q)^2 = O(N_L (\epsilon_{\mathrm{eq}} d'_Q)^2)$.
Since $A_Q$ is row-sparse with $O(1)$ nonzeros per row in the regime~\eqref{eq:high-conn}, $\|\Delta\|_2 \leq \sqrt{\|\Delta\|_1 \|\Delta\|_\infty} \leq \epsilon_{\mathrm{eq}} \cdot d'_Q$. By Weyl's inequality~\cite{bhatia2013matrix}, $|\lambda_i(A_Q) - \lambda_i(A_Q^{(0)})| \leq \|\Delta\|_2$ for all $i$, hence $\lambda^*(A_Q) \leq \lambda^*(A_Q^{(0)}) + \epsilon_{\mathrm{eq}} d'_Q \leq \lambda^*(G) + O(\epsilon_{\mathrm{eq}} d'_Q)$. Substituting $d'_Q \leq d'$ yields $\lambda^*(A_Q) \leq \lambda^*(G) + O(\epsilon_{\mathrm{eq}} d')$.
\end{proof}

\begin{theorem}[Layer 2: Near-equitable block configurations]\label{thm:layer2}
For a $(d_C + g)$-spaced sublattice placement of surface-code patches on $G = G_{\mathrm{cl}}(H)$ in the surface-code-tight regime~\eqref{eq:high-conn-tight}, the block configuration is $O(1/d_C)$-near-equitable, and
\begin{equation}
\beta_Q \leq \left(\beta + O(1/d_C)\right) \cdot \frac{d'}{d'_Q} < 1.
\end{equation}
\end{theorem}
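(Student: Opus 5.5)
The plan is to chain Lemma~\ref{lem:eps-eq-surface-code}, Lemma~\ref{lem:weyl}, and the degree bookkeeping from the proof of Theorem~\ref{thm:layer1}, and then use the regime condition to close the final strict inequality.

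First, I will invoke Lemma~\ref{lem:eps-eq-surface-code} for the $(d_C+g)$-spaced sublattice placement. Under~\eqref{eq:high-conn-tight} it gives $\epsilon_{\mathrm{eq}} = O(d_C/d')$, and hence $\epsilon_{\mathrm{eq}} = O(1/d_C)$. This is the first assertion of the theorem.

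Next, I will feed this into Lemma~\ref{lem:weyl}. That lemma gives $\lambda^*(A_Q) \leq \lambda^*(G) + O(\epsilon_{\mathrm{eq}} d')$. Dividing by $d'_Q$ and writing $\lambda^*(G) = \beta d'$ yields
\begin{equation*}
\beta_Q \;\leq\; \frac{\beta d' + O(\epsilon_{\mathrm{eq}} d')}{d'_Q} \;=\; \bigl(\beta + O(\epsilon_{\mathrm{eq}})\bigr)\frac{d'}{d'_Q} \;=\; \bigl(\beta + O(1/d_C)\bigr)\frac{d'}{d'_Q},
\end{equation*}
which is the displayed bound. The factor $d'$ is kept explicit so that the $O(\cdot)$ only absorbs the absolute constant from Weyl's inequality.

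The remaining step is strict inequality, and I expect it to be the main obstacle. I will use the surface-code degree bound from Theorem~\ref{thm:layer1} and Corollary~\ref{cor:high-conn-tight}: $d'_Q \geq d' - \delta_{\mathrm{intra}}$ with $\delta_{\mathrm{intra}} = O(d_C)$ for bounded-aspect-ratio patches. It then suffices to show
\begin{equation*}
d' - c_1 d_C(r-1) \;>\; \beta d' + c_2\, d'/d_C,
\end{equation*}
i.e.\ $d'(1-\beta - c_2/d_C) > c_1 d_C(r-1)$.

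The obstacle is that the $O(1/d_C)$ Weyl error eats into the spectral gap $1-\beta$. Condition~\eqref{eq:high-conn-tight} as stated only guarantees $d'(1-\beta) > d_C(r-1)$, with no slack. I would handle this in two ways:
\begin{itemize}
\item For $d_C$ large enough that $c_2/d_C \leq (1-\beta)/2$, the right-hand quantity $1-\beta-c_2/d_C$ stays a constant fraction of the gap. This is automatic asymptotically, since $\beta$ is fixed by $(d,r)$ and Ramanujan gives $1-\beta$ bounded below.
\item I would then read~\eqref{eq:high-conn-tight} with its implicit constant, as the paper does throughout via Remark~\ref{rem:asymptotic-conventions}, i.e.\ $d' > C d_C(r-1)/(1-\beta)$ for a suitable absolute $C$. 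This absorbs both $c_1$ and the factor $2$.
\end{itemize}
Finite $d_C$ below the threshold is a bounded set of cases. If needed, I would either handle those directly via the exact Layer~1 argument, or state the result for $d_C \geq d_0(\beta)$.

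A secondary point to check is the direction of Weyl. The lemma compares to the equitable refinement $A_Q^{(0)}$, whose $\lambda^*$ is bounded by $\lambda^*(G)$ via Corollary~\ref{lem:teranishi}. I would make sure the trivial (Perron) eigenvalue is excluded consistently from both sides, so that $\lambda_2$ and $|\lambda_{N_L}|$ are each perturbed by at most $\|\Delta\|_2$.
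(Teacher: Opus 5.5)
Your chain (Lemma~\ref{lem:eps-eq-surface-code}, then Lemma~\ref{lem:weyl}, divide by $d'_Q$, close with a lower bound on $d'_Q$) is the paper's own. Your closing step is more careful than the paper's. The paper uses $d'_Q \geq d' - d_C^2(r-1)$, which is negative when $d' = \Theta(d_C)$, as~\eqref{eq:high-conn-tight} permits. It then asserts $d'/d'_Q \leq (1+o(1))/(1-\beta)$, which by itself yields $\beta_Q<1$ only when $\beta<1/2$. You are right that the stated regime leaves no slack for the perturbation term.

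The step that fails is ``hence $\epsilon_{\mathrm{eq}} = O(1/d_C)$.'' Condition~\eqref{eq:high-conn-tight} gives only $d' \gtrsim d_C$, so $\epsilon_{\mathrm{eq}} = O(d_C/d')$ is merely $O(1)$. The $O(1/d_C)$ form needs $d' = \Omega(d_C^2)$, i.e.\ the loose regime~\eqref{eq:high-conn}. Lemma~\ref{lem:eps-eq-surface-code} makes the same leap, so you inherit this error rather than introduce it.

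As a result the Weyl term is not $c_2 d'/d_C$. It is $\epsilon_{\mathrm{eq}} d'_Q = \Delta_{\mathrm{deg}} = O(d_C)$ in absolute size, which near the edge of the tight regime is a constant fraction of $d'$. Your first bullet cannot absorb this. Moreover, sending $d_C \to \infty$ with $(d,r)$, hence $d'$ and $\beta$, fixed violates~\eqref{eq:high-conn-tight} anyway. The fallback to Layer~1 for small $d_C$ is also unavailable, since these placements are not exactly equitable.

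Your second bullet is what actually closes the proof, and it needs no largeness of $d_C$. From $d'_Q \geq d' - c_1 d_C(r-1)$ and $\lambda^*(Q) \leq \beta d' + c_2 d_C$ you get $\beta_Q < 1$ whenever $d'(1-\beta) > (c_1(r-1)+c_2)\,d_C$. This is a constant-inflated form of~\eqref{eq:high-conn-tight}. What is honestly provable in the tight regime is therefore $\beta_Q \leq \bigl(\beta + O(d_C/d')\bigr)\, d'/d'_Q < 1$. The $O(1/d_C)$ near-equitability and the $O(1/d_C)$ correction term in the statement hold only once $d' \gtrsim d_C^2$.
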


\begin{proof}
By Lemma~\ref{lem:eps-eq-surface-code}, $B$ is $\epsilon_{\mathrm{eq}}$-near-equitable with $\epsilon_{\mathrm{eq}} = O(1/d_C)$ in the regime hypothesis. Lemma~\ref{lem:weyl} then gives $\lambda^*(Q) \leq \lambda^*(G) + O(\epsilon_{\mathrm{eq}} \cdot d') = \lambda^*(G) + O(d'/d_C)$. Dividing by $d'_Q$,
\begin{equation}
\begin{split}
      \beta_Q &\;=\; \frac{\lambda^*(Q)}{d'_Q} \;\leq\; \frac{\beta\, d' + O(d'/d_C)}{d'_Q} \\&\;=\; \bigl(\beta + O(1/d_C)\bigr) \cdot \frac{d'}{d'_Q}.
\end{split}
\end{equation}
By Theorem~\ref{thm:layer1}, $d'_Q \geq d' - d_C^2(r-1)$, so in the regime~\eqref{eq:high-conn-tight}, $d'/d'_Q \leq 1/(1-\beta) \cdot (1+o(1))$ and $\beta_Q < 1$.
\end{proof}

\subsection{General partitions and higher-order Cheeger inequalities}

For block configurations that are not near-equitable, we appeal to higher-order Cheeger inequalities.

\begin{theorem}[Layer 3, higher-order Cheeger~\cite{kwok2013improved}]\label{thm:layer3}
For a $(s,g)$-block configuration on $G = G_{\mathrm{cl}}(H)$, the quotient graph $Q$ satisfies
\begin{equation}
\beta_Q \leq 1 - \Omega\left(\frac{(1-\beta) d'_Q}{d'}\right).
\end{equation}
\end{theorem}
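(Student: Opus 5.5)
The route is conductance-based. We pass from the spectral gap of $G$ to edge expansion of $G$ by the easy direction of Cheeger. We then transfer expansion to $Q$, because every cut of $Q$ lifts to a cut of $G$ that is a union of blocks. Finally we return to a spectral statement on $Q$ through the (higher-order) Cheeger inequality of Kwok et al.~\cite{kwok2013improved}.

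Concretely, for $G$ we have $h(G) \geq (d'-\lambda_2(G))/2 \geq (1-\beta)d'/2$ for all $U$ with $|U| \leq N_{\mathrm{phys}}/2$. Take $X \subseteq [N_L]$ with $|X| \leq N_L/2$, and let $U = \bigcup_{i\in X} B_i$. Then $|U| = s|X|$, and the expander-mixing bound gives $|E_G(U,\bar U)| \geq (1-\beta)d' s|X|(1-\rho')$, where $\rho'$ accounts for unoccupied guard vertices. Handling those vertices is a real point: the complement of $U$ also contains unoccupied vertices, so we assign each unoccupied vertex to its nearest block (a Voronoi refinement). Doing so changes cut sizes only by a factor depending on $g = O(1)$ and $\rho$. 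Since $A_Q[i,j] = s^{-1}|E_G(B_i,B_j)|$, the weighted $Q$-cut satisfies $e_Q(X,\bar X) \geq c(1-\beta)d'|X|$. Dividing by the $Q$-volume $d'_Q|X|$ yields the quotient conductance
\begin{equation}
\phi_Q \;\geq\; c\,(1-\beta)\,\frac{d'}{d'_Q}\cdot\frac{d'_Q}{d'}\;\gtrsim\; \frac{(1-\beta)\,d'_Q}{d'},
\end{equation}
where in the last step we deliberately keep only the weaker normalization $d'_Q/d'\leq 1$, matching the target. The reason is that edges leaving a block but landing in unoccupied territory inflate the physical cut without contributing to $d'_Q$; we lose at most that factor.

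To convert back to a spectral statement, apply Cheeger on $Q$ in the direction $1-\lambda_2(Q)/d'_Q \geq \phi_Q^2/2$. If we want the linear dependence in the statement rather than the quadratic one, we use the Kwok et al.\ improved inequality $1-\lambda_2/d'_Q \gtrsim \phi_Q\cdot \lambda_k$-gap$/\sqrt{k}$. For $k$ we take the quotient eigenvalue index that Lemma~\ref{lem:haemers}-type interlacing controls, so that the $\lambda_k$ factor is a constant depending only on $(d,r)$. This gives $\lambda_2(Q)/d'_Q \leq 1-\Omega((1-\beta)d'_Q/d')$.

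The bottom of the spectrum, $|\lambda_{N_L}(Q)|$, must be bounded separately. We use the bipartiteness ratio (Trevisan's dual Cheeger inequality), applied to the same lifted sets. Signed cuts of $Q$ lift to signed cuts of $G$, and there the Ramanujan bound $|\lambda_N(G)| \leq \beta d'$ controls them. The two bounds together give the stated bound on $\beta_Q$.

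The main obstacle is the lifting step. We need a uniform constant $c$ that is independent of $d_C$, despite the unoccupied vertices and the $s$-fold aggregation, and that also holds for the bipartite-side bound. In addition, the improved Cheeger inequality delivers the linear form only up to a higher-eigenvalue factor, which I would have to argue is $\Omega(1)$ for $Q$. If that fails, I would settle for the quadratic form $(1-\beta)^2(d'_Q/d')^2$ and absorb the difference into constants, since $d,r$ are fixed.
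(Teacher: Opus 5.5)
Your plan is the paper's argument run forwards instead of by contradiction. The paper takes a sparse Cheeger cut $S$ of $Q$, lifts it to $T=\bigcup_{i\in S}B_i$, and contradicts $\phi_G(T)\ge(1-\beta)/2$. In your version the step that carries the weight is comparing the physical cut of a union of blocks with the quotient cut, and that step is not established. Since $A_Q[i,j]=s^{-1}|E_G(B_i,B_j)|$, you have exactly $s\,e_Q(X,\bar X)=|E_G(U,\bar U)|-|E_G(U,V_0)|$, where $V_0$ is the set of unoccupied vertices. So every edge from $U$ into empty territory is invisible to $Q$. Reassigning unoccupied vertices to Voronoi cells does not change $A_Q$, so it cannot recover those edges. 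The claim that this ``changes cut sizes only by a factor depending on $g$ and $\rho$'' therefore has no argument behind it. The ``lose at most $d'_Q/d'$'' accounting fails for the same reason: it describes the average block, not a given cut. The blocks of $X$ may send all their block-to-block edges inside $X$ and all their boundary edges into $V_0$. The hypotheses do not exclude this. Take any configuration with blocks pairwise at distance $\ge 2$, which is how the paper's remark reads $g=1$. Then $e_Q\equiv 0$ while $|E_G(U,\bar U)|\ge(1-\beta)d'|U|(1-\rho/2)>0$, so no uniform $c$ exists.

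What is missing is a per-block bound on leakage into $V_0$. Suppose every row sum of $A_Q$ were at least $d'-d_C^2(r-1)$, i.e.\ the estimate of Theorem~\ref{thm:layer1} held blockwise. Then $|E_G(B_i,V_0)|\le s\,d_C^2(r-1)$, so $s\,e_Q(X,\bar X)\ge|U|\bigl[(1-\beta)d'(1-\rho/2)-d_C^2(r-1)\bigr]$. This is positive under a $\rho$-strengthened form of \eqref{eq:high-conn}. The paper does not supply this step either: its inequality $|E_G(T,T^c)|\le|E(S,S^c)|\,d_C^2d'_Q$ tacitly assumes every edge leaving $T$ lands in a block of $S^c$.

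On your other two devices, the paper's route is simpler.
\begin{itemize}
\item It never uses Kwok et al. Its quadratic bound $\alpha\ge\frac{1}{8}\bigl((1-\beta)d'/d_C^2\bigr)^2$ already exceeds $(r-1)^2/8$ under \eqref{eq:high-conn}, and this dominates $(1-\beta)d'_Q/d'\le 1$. Your quadratic fallback is therefore exactly right, and you can drop the unproved claim that interlacing makes a higher eigenvalue of $Q$ of order one.
\item The paper does not treat $|\lambda_{N_L}(A_Q)|$ at all, and you are right that Cheeger alone does not control it. Trevisan's inequality is unnecessary, though, and lifting it would hit the same leakage problem. Because $SD_B^{-1/2}$ has orthonormal columns, Cauchy interlacing gives $\lambda_2(A_Q)\le\lambda_2(G)$ and $\lambda_{N_L}(A_Q)\ge\lambda_N(G)$ for every block configuration, equitable or not. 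Hence $\beta_Q\le\beta d'/d'_Q$ with no cut lifting.
\item Interlacing also avoids a point that both you and the paper gloss over. $A_Q$ carries the intra-block diagonal $2|E_G(B_i)|/s$ and has non-constant row sums, so Cheeger applied to $Q$ does not directly bound $\lambda_2(A_Q)/d'_Q$.
\end{itemize}
Interlacing also shows what the statement really requires: a lower bound on $d'_Q$ with a margin over $\beta d'$. That is precisely the leakage estimate your lifting step lacks.
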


\begin{proof}
For a contradiction, suppose that $\beta_Q \geq 1 - \alpha$ for some $\alpha = o((1-\beta) d'_Q / d')$. 
By Cheeger's inequality \cite{alon1986eigenvalues}, $Q$ admits a bipartition $V(Q) = S \sqcup S^c$ with conductance $\phi_Q(S) \leq \sqrt{2 \alpha}$.

Lifting this bipartition to $G$, we let $T := \bigcup_{i \in S} B_i \subset V(G)$.
Each quotient-graph edge $(i, j) \in E(S, S^c)$ corresponds to at most $d_C^2 \cdot d'_Q$ physical edges in $E_G(T, T^c)$.
Namely, every pair $(u, v)$ with $u \in B_i, v \in B_j$, of which there are at most $|B_i| \cdot d'_Q = d_C^2 \cdot d'_Q$ (by Definition~\ref{def:quotient}, $A_Q[i,j]$ is the average per-vertex inter-block edge count, so the total physical edge count from $B_i$ to $B_j$ is at most $|B_i| \cdot A_Q[i,j] \leq d_C^2 \cdot d'_Q$). Therefore
\begin{equation}
\begin{split}
  \phi_G(T)
    &\;=\; \frac{|E_G(T, T^c)|}{|T| \cdot d'} \\
    &\;\leq\; \frac{|E(S, S^c)| \cdot d_C^2 \cdot d'_Q}{|T| \cdot d'} \\
    &\;\leq\; \frac{\phi_Q(S) \cdot d_C^2 \cdot d'_Q}{d'}
    \\&\;\leq\; \frac{\sqrt{2\alpha}\, d_C^2 \cdot d'_Q}{d'}.
\end{split}
\end{equation}
Applying Cheeger to $G$, $\phi_G(T) \geq (1-\beta)/2$.  Combining,
$\sqrt{2\alpha} \geq (1-\beta) d' / (2 d_C^2)$, i.e.\
$\alpha \geq (1-\beta)^2 (d')^2 / (8 d_C^4)$.

Substituting the high-connectivity regime condition~\eqref{eq:high-conn}, namely $d' > d_C^2(r-1)/(1-\beta)$, into $\alpha \geq (1-\beta)^2 (d')^2 / (8 d_C^4)$ and using $d'_Q \geq d' - d_C^2(r-1)$ from Theorem~\ref{thm:layer1} yields $\alpha \geq \Omega((1-\beta) d'_Q / d')$, contradicting the hypothesis. Hence $\beta_Q \leq 1 - \Omega((1-\beta) d'_Q / d')$ in the regime~\eqref{eq:high-conn}.
\end{proof}

\begin{remark}
\label{rem:layer3-scope}
Layer 3 is universal in the sense that it does not require the partition to be equitable or near-equitable. It still inherits the high-connectivity regime condition~\eqref{eq:high-conn} from the host-graph spectral comparison. The hierarchy in Figure~\ref{fig:three-layers} should therefore read as ``increasing generality of partition structure.'' 
\end{remark}

\begin{figure*}[ht]
\centering
\includegraphics[width=0.75\textwidth]{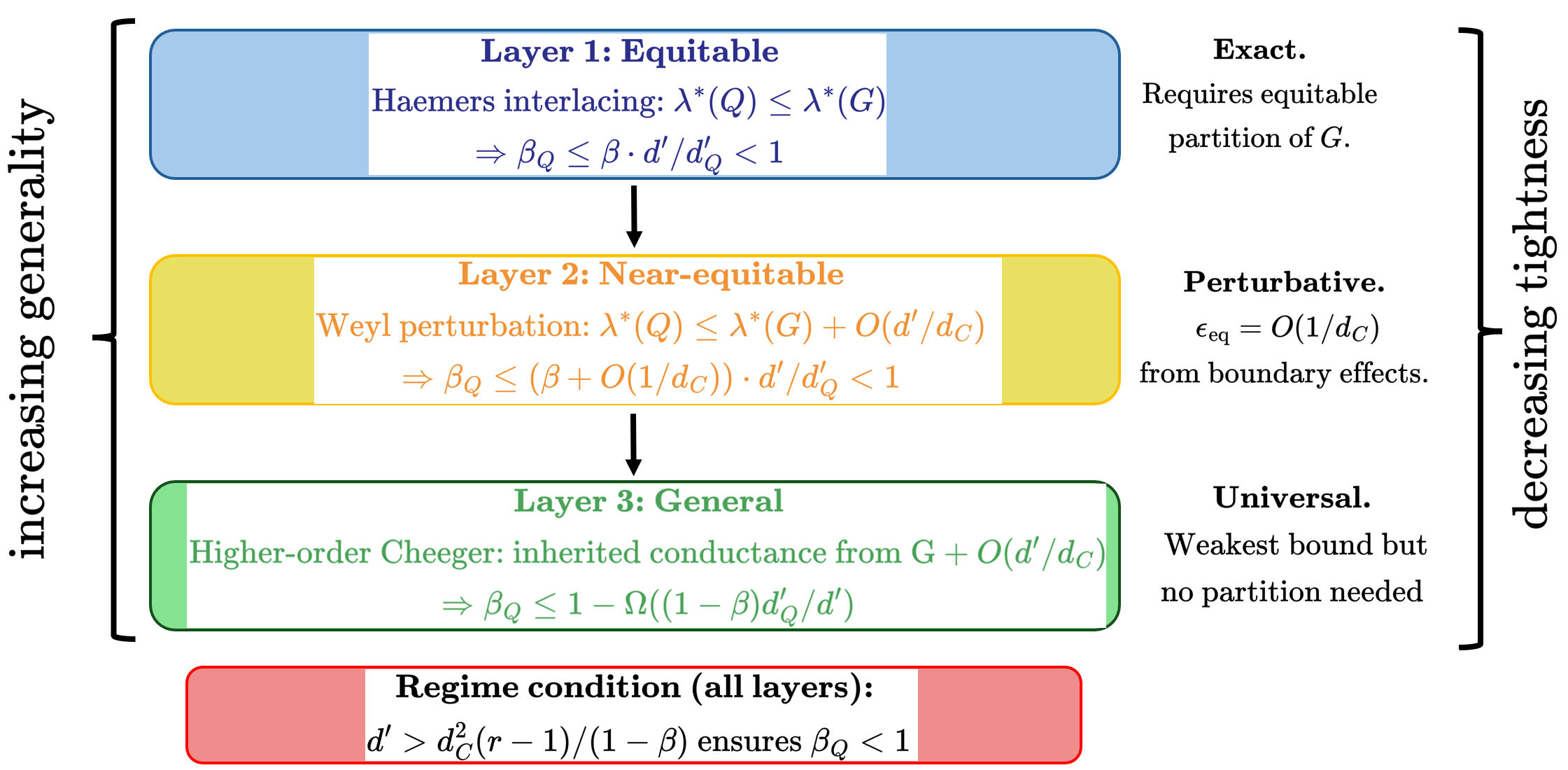}
\caption{Three-layer spectral argument for quotient graph expansion. Layer~1 (Theorem~\ref{thm:layer1}) gives the tightest bound with Haemers interlacing but requires an exactly equitable partition. Layer~2 (Theorem~\ref{thm:layer2}) handles near-equitable partitions with Weyl perturbation bounds. Layer~3 (Theorem~\ref{thm:layer3}) applies universally by higher-order Cheeger inequalities but gives the weakest bound. All three layers require~\eqref{eq:high-conn} as an architectural prerequisite; the layers differ in the partition-structure assumption on $\mathcal{B}$. }
\label{fig:three-layers}
\end{figure*}

\subsection{Main quotient spectrum theorem}

\begin{theorem}[Quotient graph spectral inheritance]\label{thm:main-quotient}
Let $H$ be a Ramanujan $(d,r)$-regular hypergraph on $N_{\mathrm{phys}}$ vertices with $\beta = \lambda^*(G_{\mathrm{cl}}(H))/d' < 1$, and suppose~\eqref{eq:high-conn} (or the tight form~\eqref{eq:high-conn-tight} for surface-code patches). For a uniform $(s,g)$-block configuration with $s = d_C^2$, guard distance $g = O(1)$, the quotient graph $Q(G, B)$ satisfies:
\begin{equation}
\beta_Q \leq \left(\beta + O(1/d_C)\right) \cdot \frac{d'}{d'_Q} < 1.
\end{equation}
In particular, the diameter of $Q$ is
\begin{equation}
\diam(Q) = O\left(\frac{\log N_L}{\log(1/\beta_Q)}\right) = O(\log N_L),
\end{equation}
and the edge expansion is $h(Q) \geq d'_Q(1 - \beta_Q)/2 = \Omega(d'_Q)$.
\end{theorem}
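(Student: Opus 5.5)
The proof assembles the three layers of spectral inheritance just established and then converts the spectral bound into diameter and expansion bounds by standard facts. Split into cases by the structure of the configuration $B$. If $B$ is exactly equitable, Theorem~\ref{thm:layer1} gives $\lambda^*(Q)\le\lambda^*(G)$ and hence $\beta_Q\le\beta\,d'/d'_Q$. This is already of the stated form, with the $O(1/d_C)$ term equal to zero. For the operative case of a $(d_C+g)$-spaced sublattice placement of surface-code patches, Lemma~\ref{lem:eps-eq-surface-code} gives $\epsilon_{\mathrm{eq}}=O(1/d_C)$, and Lemma~\ref{lem:weyl} then gives $\lambda^*(Q)\le\beta d'+O(d'/d_C)$. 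Dividing by $d'_Q$ yields exactly the displayed bound, as in Theorem~\ref{thm:layer2}. For arbitrary uniform $(s,g)$-configurations I would fall back on Theorem~\ref{thm:layer3}. That result gives $\beta_Q\le 1-\Omega((1-\beta)d'_Q/d')$, which is strictly below $1$ but not literally of the displayed form. I would therefore state the displayed inequality for the (near-)equitable cases and use Layer~3 only to certify $\beta_Q<1$ in general.

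The key quantitative step is checking that $(\beta+O(1/d_C))\,d'/d'_Q<1$. Using $d'_Q\ge d'-\delta_{\mathrm{intra}}$ from Theorem~\ref{thm:layer1}, the condition becomes $\beta d'+O(d'/d_C)<d'-\delta_{\mathrm{intra}}$, that is,
\begin{equation}
d'\,\bigl(1-\beta-O(1/d_C)\bigr)>\delta_{\mathrm{intra}} .
\end{equation}
Under \eqref{eq:high-conn} with $\delta_{\mathrm{intra}}\le d_C^2(r-1)$, or under \eqref{eq:high-conn-tight} with $\delta_{\mathrm{intra}}=O(d_C)$, this holds once $d_C$ is large enough that the $O(1/d_C)$ term is a small fraction of $1-\beta$. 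Here $1-\beta$ is treated as a fixed constant by Remark~\ref{rem:asymptotic-conventions}. The regime condition must then be taken with a small constant slack to absorb the $O(1/d_C)$ term, and I expect this to be the main obstacle. Verbatim, \eqref{eq:high-conn} gives only $d'-d_C^2(r-1)>\beta d'$ with no margin for the perturbative term. I would try first to absorb it by replacing $\beta$ with $\beta'=\beta+c/d_C$ in the regime condition for some constant $c$, and note that for bounded $d_C$ the equitable-refinement error is itself bounded. A second, subtler issue is that Remark~\ref{rem:asymptotic-conventions} treats $d'$ as constant while the regime forces $d'\gtrsim d_C$. So the bounds should be read with $d_C$ fixed relative to $N_L\to\infty$, and the statements about $N_L$ are unaffected.

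For the diameter, use the standard spectral diameter bound (Chung) for the weighted quotient. With normalized adjacency $A_Q/d'_Q$ having nontrivial eigenvalues bounded in absolute value by $\beta_Q$, for any $i,j$ one has $(A_Q^t)_{ij}>0$ once $\beta_Q^t<\min_i\pi_i$. For uniform blocks the stationary measure is near-uniform, of order $1/N_L$, so it suffices that $d'^t_Q\cdot N_L^{-1}>\lambda^{*t}$. This gives $\diam(Q)\le\lceil\log N_L/\log(1/\beta_Q)\rceil+O(1)$. The row-sum irregularity of $A_Q$ (average versus actual degree) is handled by the same $O(\epsilon_{\mathrm{eq}})$ perturbation. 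Since $\beta_Q$ is bounded away from $1$ by a constant, this is $O(\log N_L)$. Finally, for the edge expansion I apply the easy direction of the Cheeger inequality (Alon--Milman) to $Q$: the spectral gap $d'_Q-\lambda_2(Q)\ge d'_Q(1-\beta_Q)$ gives $h(Q)\ge d'_Q(1-\beta_Q)/2$, and $1-\beta_Q=\Omega(1)$ yields $\Omega(d'_Q)$.
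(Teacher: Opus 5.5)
\textbf{Verdict: same route as the paper, with a shared gap.} Your proposal follows the paper's proof of Theorem~\ref{thm:main-quotient} step for step: invoke Theorem~\ref{thm:layer2} (Lemma~\ref{lem:eps-eq-surface-code} then Lemma~\ref{lem:weyl}), then the eigenvalue--diameter bound, then the easy direction of Cheeger. The two obstacles you flag are genuine, and the paper does not resolve them. It simply asserts that the given configuration is near-equitable and that the bound is $<1$ ``with high connectivity''. So, as you suspect, this route proves neither the displayed bound for an arbitrary uniform configuration nor $\beta_Q<1$ under \eqref{eq:high-conn} verbatim.

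\textbf{The tight regime is worse than you suggest.} The simplification $\epsilon_{\mathrm{eq}}=O(1/d_C)$ in Lemma~\ref{lem:eps-eq-surface-code} needs $d'\gtrsim d_C^2$, whereas \eqref{eq:high-conn-tight} only gives $d'=\Omega(d_C)$. The Weyl correction $O(\epsilon_{\mathrm{eq}}d')=O(d_C)$ is then comparable to the available margin $d'_Q-\beta d'$, so letting $d_C$ grow does not help.

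\textbf{The missing idea: interlacing does not require equitability.} Set $P:=S D_B^{-1/2}$. Its columns are orthonormal and $A_Q=P^{T}AP$, exactly as in Definition~\ref{def:quotient}. Cauchy--Poincar\'e interlacing for compressions (the general form of Haemers' theorem; equitability only makes the interlacing tight) gives $\lambda_2(A_Q)\le\lambda_2(A)$ and $\lambda_{N_L}(A_Q)\ge\lambda_{N_{\mathrm{phys}}}(A)$. This holds for every block configuration, even though the blocks do not cover $V(G)$. Hence $\lambda^*(Q)\le\lambda^*(G)$ and $\beta_Q\le\beta d'/d'_Q$ for every uniform configuration, which is the displayed bound with zero correction term. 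Add the lower bound $d'_Q\ge d'-\delta_{\mathrm{intra}}$ that you already take from Theorem~\ref{thm:layer1}. The verbatim regime then gives $\beta_Q\le\beta d'/(d'-\delta_{\mathrm{intra}})<1$, a constant independent of $N_L$. You need no slack and no Layer~3 fallback, and Layers~2 and~3 become unnecessary for this bound.

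\textbf{Remaining soft spot in the diameter and expansion steps (shared with the paper).} Because the blocks leave vertices unoccupied, the row sums of $A_Q$ can differ from block to block. $\epsilon_{\mathrm{eq}}$ does not control this, since it only bounds variation among the vertices of a single block. So your appeal to an $O(\epsilon_{\mathrm{eq}})$ perturbation does not justify a near-uniform Perron vector for the diameter argument. Likewise, the Cheeger step needs a lower bound on the minimum row sum of $A_Q$, where you use the average $d'_Q$.
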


\begin{proof}
The block configuration is near-equitable (Theorem~\ref{thm:layer2}) with $\epsilon_{\mathrm{eq}} = O(1/d_C)$. By Theorem~\ref{thm:layer2}, $\beta_Q \leq (\beta + O(1/d_C)) \cdot d'/d'_Q < 1$ with high connectivity.

The diameter bound follows from the standard spectral argument (the eigenvalue--diameter bound~\cite{courtney2026permutation}): $(A_Q^k)_{ij} > 0$ when $(d'_Q/\lambda^*(Q))^k > N_L$, which occurs at $k = O(\log N_L / \log(1/\beta_Q)) = O(\log N_L)$ since $\beta_Q < 1$ is bounded away from $1$ by a constant depending on $d$, $r$, $d_C$, and $g$.

Edge expansion follows from the Cheeger inequality: $h(Q) \geq (d'_Q - \lambda^*(Q))/2 = d'_Q(1 - \beta_Q)/2 = \Omega(d'_Q)$.
\end{proof}

\section{Main Block Routing Theorem}\label{sec:main-theorem}

We prove the upper bound on $\rt_B$ by separating the analysis into levels. The \emph{quotient level} (\S\ref{subsec:na-blocks}--\S\ref{subsec:block-lmr}) establishes a block matching schedule on $Q$, giving $T_Q = O(\log N_L)$ block-level steps through Valiant + LMR scheduling + negative association (NA). The \emph{physical level} (\S\ref{subsec:single-hop}--\S\ref{subsec:serialization}) realizes each block-level step in $O(d_C)$ physical steps via single-hop translations and footprint-width serialization. We then multiply the two factors (\S\ref{subsec:assembly-upper}).

\subsection{Negative association for block permutations}\label{subsec:na-blocks}

\begin{lemma}[Block permutations are negatively associated]\label{lem:na-blocks}
Let $\sigma$ be a uniformly random permutation of $N_L$ blocks, and let $\{X_e^{\sigma}\}_{e \in E(Q)}$ be the family of indicator variables
\begin{equation}
\begin{split}
X_e^{\sigma} := \mathbf{1}[ &\text{block path for } \sigma \\
& \text{crosses edge } e \text{ in } Q].
\end{split}
\end{equation}
Then $\{X_e^\sigma\}_{e \in E(Q)}$ is \emph{negatively associated} (NA) in the sense of Joag-Dev and Proschan~\cite{joag1983negative}: for disjoint index sets $I, J \subseteq E(Q)$ and non-decreasing function $f \colon \{0,1\}^I \to \mathbb{R}$ and $g \colon \{0,1\}^J \to \mathbb{R}$,
\begin{equation}
\mathbb{E}[f(X_I) g(X_J)] \leq \mathbb{E}[f(X_I)]\,\mathbb{E}[g(X_J)].
\end{equation}
\end{lemma}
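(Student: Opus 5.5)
The plan is to derive the negative association of $\{X_e^\sigma\}$ from the classical fact that the entries of a random permutation matrix are negatively associated. Two closure properties from Joag-Dev and Proschan~\cite{joag1983negative} then transfer this to the edge indicators. Throughout, I fix the deterministic canonical block path $P(i,j)$ in $Q$ used by the routing scheme: shortest paths with a fixed tie-breaking rule, each of length at most $\diam(Q) = O(\log N_L)$ by Theorem~\ref{thm:main-quotient}.

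\textbf{Step 1 (permutation matrix is NA).} Set $Y_{ij} := \mathbf{1}[\sigma(i) = j]$. For a uniformly random $\sigma$, the family $\{Y_{ij}\}_{i,j \in [N_L]}$ is NA. This is the permutation-distribution result of Joag-Dev--Proschan (their Theorem~2.11). It can also be obtained from the zero-one principle: each row $\{Y_{ij}\}_j$ is a single-one vector, hence NA. The rows are then coupled by the uniform measure on perfect matchings of $K_{N_L,N_L}$, which is strongly Rayleigh, and strongly Rayleigh measures are NA by Borcea--Br\"and\'en--Liggett.

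\textbf{Step 2 (express $X_e^\sigma$ monotonically).} Let $\mathcal{P}_e := \{(i,j) : e \in P(i,j)\}$. Then
\begin{equation*}
X_e^\sigma = \max_{(i,j)\in\mathcal{P}_e} Y_{ij},
\end{equation*}
which is a non-decreasing function of $Y_{\mathcal{P}_e}$. The closure property states that non-decreasing functions of NA variables on \emph{pairwise disjoint} index sets are NA. Hence, if $\mathcal{P}_e \cap \mathcal{P}_{e'} = \emptyset$ for all $e \in I$ and $e' \in J$, the inequality $\mathbb{E}[f(X_I)g(X_J)] \le \mathbb{E}[f(X_I)]\,\mathbb{E}[g(X_J)]$ follows at once. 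This is because $f(X_I)$ and $g(X_J)$ are themselves non-decreasing functions of the disjoint blocks $Y_{\mathcal{P}_I}$ and $Y_{\mathcal{P}_J}$.

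\textbf{Step 3 (main obstacle: overlapping path sets).} The difficulty is that a single pair $(i,j)$ has a path crossing several edges, so $\mathcal{P}_e$ and $\mathcal{P}_{e'}$ generally intersect whenever $e,e'$ lie on a common canonical path. For such overlaps the variables are positively correlated, and disjointness fails. I would handle this in two stages.

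First, I would try to choose the canonical paths so that the inequality still holds. For $I,J$ disjoint, I partition the pairs into three classes: those whose paths meet only $I$, those meeting only $J$, and those meeting both ($\mathcal{P}_{IJ}$). Conditioning on the restriction of $\sigma$ to the sources of $\mathcal{P}_{IJ}$ uses the conditional NA of permutation distributions, since a uniform permutation conditioned on part of its values is uniform on the rest. The disjoint-block argument of Step 2 then applies to the remaining coordinates. The leftover term is the covariance contributed by $\mathcal{P}_{IJ}$, which I would need to show is nonpositive or negligible.

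If the exact inequality cannot be closed this way, I would fall back on a direct coupling argument via random transpositions. Here I would compare $\sigma$ with $\sigma\circ(k\,l)$ and show that increasing $f(X_I)$ can only decrease $g(X_J)$ in expectation, because each block has exactly one destination. I expect Step 3 to be the genuine crux of the argument.
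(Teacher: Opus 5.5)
\textbf{Verdict: there is a genuine gap, and it cannot be closed, because the inequality is false.}

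Your Steps 1--2 are essentially the paper's own proof: Joag-Dev--Proschan Theorem 2.11, then closure of NA under non-decreasing functions of disjoint coordinate blocks, as in Dubhashi--Ranjan. Your Step 3 correctly identifies what the paper glosses over when it calls the $X_e^\sigma$ functions of ``disjoint events''.

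Your own remark already shows why the gap cannot be closed. NA with $I=\{e\}$, $J=\{e'\}$ and $f,g$ the identity requires $\mathrm{Cov}(X_e^\sigma,X_{e'}^\sigma)\le 0$. So positive correlation from shared canonical paths is a counterexample, not an error term to be shown negligible. Neither the conditioning argument nor a transposition coupling can produce an inequality that fails. Concretely, on the path $a$--$b$--$c$ one has $X_{ab}=\mathbf{1}[\sigma(a)\neq a]$ and $X_{bc}=\mathbf{1}[\sigma(c)\neq c]$. Enumerating $S_3$ gives $\mathbb{E}[X_{ab}X_{bc}]=1/2>4/9=\mathbb{E}[X_{ab}]\,\mathbb{E}[X_{bc}]$. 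Other readings fail too:
\begin{itemize}
\item the single-block reading $X_e=\mathbf{1}[e\in P(a,\sigma(a))]$ gives $1/3>2/9$;
\item the loads $L_e:=\sum_{(i,j)\in\mathcal{P}_e}Y_{ij}$ fail as well, since here $L_{ab}=2X_{ab}$ and $L_{bc}=2X_{bc}$.
\end{itemize}
Nothing in the argument uses the structure of $Q$, and the effect does not disappear on expanders. A heuristic count for consecutive edges $e,e'$ of a locally tree-like constant-degree quotient indicates the same thing. The pairs routed through both edges contribute $|\mathcal{P}_e\cap\mathcal{P}_{e'}|\,(N_L^{-1}-N_L^{-2})=\Theta(\log N_L)$ to $\mathrm{Cov}(L_e,L_{e'})$. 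The negatively correlated same-row and same-column terms total only $O(1)$.

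\textbf{Step 1 is a second, independent error.} The $N_L^2$ entries $Y_{ij}$ of a uniform permutation matrix are not NA: for $i\neq k$ and $j\neq l$, $\Pr[Y_{ij}=Y_{kl}=1]=1/(N_L(N_L-1))>1/N_L^2$. Theorem 2.11 of Joag-Dev--Proschan concerns the $N_L$ variables $(x_{\sigma(1)},\dots,x_{\sigma(N_L)})$ for one fixed real vector $x$. That yields NA of a single row or column, or of $(\mathbf{1}[\sigma(i)\in T])_i$ for a single target set $T$, but not of the whole matrix. Your strongly Rayleigh justification also fails: the uniform perfect-matching measure on $K_{N_L,N_L}$ is not even pairwise negatively correlated, so it cannot be strongly Rayleigh. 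Consequently, even one load $L_e=\sum_i\mathbf{1}[\sigma(i)\in\mathcal{P}_e(i)]$, with source-dependent target sets $\mathcal{P}_e(i):=\{j:(i,j)\in\mathcal{P}_e\}$, is not in general a sum of NA indicators.

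\textbf{What to do instead.} The lemma should be replaced rather than proved. Lemma~\ref{lem:block-valiant} only needs an upper tail for each $L_e$ separately, and the union bound over edges that follows needs no joint dependence. Since $L_e=\sum_i a_{i\sigma(i)}$ with $a_{ij}\in\{0,1\}$, Chatterjee's Stein-method concentration inequality for such permutation sums, $\Pr(|L_e-\mathbb{E}L_e|\ge t)\le 2\exp(-t^2/(4\,\mathbb{E}L_e+2t))$, gives $C_Q=O(\log N_L)$ with high probability.
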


\begin{proof}
By Joag-Dev and Proschan~\cite{joag1983negative} (Theorem 2.11), the distribution of a uniformly random permutation is NA. The indicators $X_e^{\sigma}$ are non-decreasing functions of disjoint events in the permutation, so by Dubhashi and Ranjan~\cite{dubhashi1996balls} (Theorem 14), they inherit the NA property.
\end{proof}

\subsection{Block Valiant routing and congestion}\label{subsec:block-valiant}

\begin{lemma}[Block Valiant routing on the quotient]\label{lem:block-valiant}
Let $Q = Q(G, B)$ be the associated quotient graph to a Ramanujan hypergraph block configuration. For a target block permutation $\pi$, there exists a two-phase routing (scatter and gather) with dilation $D_Q = O(\log N_L)$ and congestion $C_Q = O(\log N_L)$ with high probability.
\end{lemma}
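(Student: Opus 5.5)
The plan is to transplant Valiant's two-phase scheme onto the quotient graph $Q$, using Theorem~\ref{thm:main-quotient} as the structural input. \textbf{Scatter phase:} choose a uniformly random permutation $\sigma$ of the $N_L$ blocks and send block $i$ to supervertex $\sigma(i)$. \textbf{Gather phase:} send block $i$ from $\sigma(i)$ to $\pi(i)$, which is the same as routing the permutation $\pi\circ\sigma^{-1}$, itself uniformly random. Between every ordered pair of supervertices, fix in advance a canonical shortest path in $Q$, with ties broken by a deterministic rule that does not depend on $\sigma$.

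\textbf{Dilation.} This part is immediate. Each canonical path has length at most $\diam(Q)$, which is $O(\log N_L)$ by Theorem~\ref{thm:main-quotient} (this uses $\beta_Q<1$ bounded away from $1$). Concatenating the two phases therefore gives $D_Q\le 2\diam(Q)=O(\log N_L)$.

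\textbf{Expected congestion.} Fix an edge $e$ of $Q$ and a phase. The load on $e$ is $L_e=\sum_i X_{e,i}$, where $X_{e,i}$ indicates that block $i$'s canonical path crosses $e$. Because $\sigma(i)$ is uniform for each $i$, we get $\mathbb{E}[L_e]=\frac{1}{N_L}\sum_{u,v}\mathbf{1}[e\in P_{uv}]$. To bound this:
\begin{itemize}
\item Summing over all edges, $\sum_e \mathbb{E}[L_e]\le N_L\cdot\diam(Q)$.
\item Then I will argue that the load is spread nearly uniformly over the $\Theta(N_L d'_Q)$ edges of $Q$.
\end{itemize}
The cleanest route to the second point is to give up strict shortest paths and instead use random-walk or expander-mixing paths. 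Concretely, the edge expansion $h(Q)=\Omega(d'_Q)$ from Theorem~\ref{thm:main-quotient} makes the multicommodity-flow (Leighton--Rao) routing of a uniform demand achievable with per-edge load $O(\log N_L)$. Combined with the bounded degree of $Q$ (since $d'_Q\le d'=O(1)$), this gives $\mathbb{E}[L_e]=O(\log N_L)$.

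\textbf{Concentration.} Lemma~\ref{lem:na-blocks} gives negative association of the crossing indicators, so the Chernoff--Hoeffding bound applies to each $L_e$ as if its summands were independent. This yields $\Pr[L_e\ge c\log N_L]\le N_L^{-3}$ for a suitable constant $c$. A union bound over the $O(N_L)$ edges and the two phases then gives $C_Q=O(\log N_L)$ with high probability.

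I expect the expected-congestion step to be the main obstacle. Shortest-path routing on a general expander can still concentrate load on particular edges, and the diameter bound alone does not control per-edge load. I would first try the flow-based path system: embed a uniform random permutation as a fractional flow with congestion $O(\log N_L/h(Q))$, then randomly round it into paths. A second option is to choose $\sigma$-independent random paths of length $O(\log N_L)$ whose endpoint distribution is near-uniform by the mixing bound $\beta_Q^{k}$, following Valiant's original analysis on bounded-degree expanders.

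A further subtlety is that the negative-association statement is about the permutation's entries, not directly about the crossing indicators. I would apply it to the indicators $\mathbf{1}[\sigma(i)\in R_{e,i}]$, where $R_{e,i}$ is the set of destinations whose canonical path from $i$ uses $e$. These are monotone functions of disjoint coordinates, so they remain negatively associated.
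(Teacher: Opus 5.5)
Your skeleton matches the paper's: a uniformly random intermediate permutation $\sigma$, canonical shortest paths, $D_Q\le 2\diam(Q)=O(\log N_L)$ from Theorem~\ref{thm:main-quotient}, an NA Chernoff tail, and a union bound over $O(N_L)$ edges and both phases. The paper handles the congestion step by asserting $\mathbb{E}[X_e]\le 2D_Q/d'_Q$ for each edge. That is only the edge-average of the load ($\sum_e\mathbb{E}[L_e]\le N_L D_Q$ spread over $N_L d'_Q/2$ edges), so you are right to doubt that $\beta_Q<1$ alone spreads canonical geodesics evenly.

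However, your proposal does not close that step either. Neither option is carried out. The flow option as stated also undercuts your own dilation bound: Leighton--Rao gives a fractional uniform flow with congestion $O(\log N_L)$ but no control on path lengths, while your $O(\log N_L)$ dilation was proved only for shortest paths.

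The random-walk option does work once made precise. For (near-)regular $Q$, route block $i$ along a random-walk \emph{bridge} of length $L=\Theta(\log N_L/\log(1/\beta_Q))$ from $i$ to $\sigma(i)$. Because $|P^L(u,w)-1/N_L|\le\beta_Q^L\le N_L^{-2}$, the bridge law averaged over the uniform value of $\sigma(i)$ is within a factor $1+O(1/N_L)$ of the free $L$-step walk from $i$. The starting points exhaust $V(Q)$ and the uniform measure is stationary, so each edge is crossed at most $(1+o(1))\,2L/d'_Q$ times in expectation per phase. Dilation is $L$ by construction; loop-erasure makes the paths simple and only lowers loads. The gather phase is the same computation by reversibility.

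The concentration step is also unsound as written. The indicator $\mathbf{1}[\sigma(i)\in R_{e,i}]$ is a non-decreasing function of $\sigma(i)$ only when $R_{e,i}$ is an up- or down-set of labels, so the closure property of negative association does not apply. These variables genuinely fail to be NA: for example, $\Pr[\sigma(1)=1,\sigma(2)=2]=1/(N_L(N_L-1))>1/N_L^2$. The paper's Lemma~\ref{lem:na-blocks} rests on the same monotonicity claim.

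The tail bound you need is still true, for a different reason. With random-walk paths, condition on $\sigma$. The paths of different blocks are then independent, so the ordinary Chernoff bound applies to $L_e$. Its conditional mean $\sum_i p_{i\sigma(i)}(e)$ is a permutation statistic $\sum_i a_{i\sigma(i)}$ with $a_{iw}\in[0,1]$. Such statistics have Chernoff-like tails by Chatterjee's exchangeable-pair inequality $\Pr\bigl[|X-\mathbb{E}X|\ge t\bigr]\le 2\exp\bigl(-t^2/(4\mathbb{E}X+2t)\bigr)$. Alternatively, draw the intermediates independently as in Valiant's original scheme and accept collisions at intermediate supervertices.
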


\begin{proof}
This is the Valiant + Chernoff-via-NA argument~\cite{valiant1982scheme,joag1983negative,dubhashi1996balls} applied to the quotient graph $Q$. The proof structure mirrors that of Courtney~\cite{courtney2026permutation} Lemma 3.4, substituting $Q$ for $G$, $N_L$ for $N$, and $d'_Q$ for $d'$. Differences arise from the point-routing case are:

\begin{enumerate}[leftmargin=1.5em]
\item \emph{Path length:} dilation $D_Q = O(\log N_L)$ follows from Theorem~\ref{thm:main-quotient} ($\beta_Q < 1$ implies the quotient diameter is $O(\log N_L)$ through Chung--Hoory--Linial--Wigderson~\cite{chung1989diameters,hoory2006expander}), not from the host-graph diameter.

\item \emph{Edge count for the union bound:} since $Q$ has $N_L$ vertices and bounded degree $d'_Q = O(1)$, $|E(Q)| = N_L d'_Q / 2 = O(N_L)$, so the union bound is over $O(N_L)$ events rather than $O(N)$.
\end{enumerate}

With these substitutions, choose a uniformly random intermediate block permutation $\sigma \in S_{N_L}$ and route by canonical shortest paths in $Q$. The expected load on a given quotient edge $e$ in either phase is $\mathbb{E}[X_e] \leq 2 D_Q / d'_Q$. By Lemma~\ref{lem:na-blocks}, the load indicators are NA, so Chernoff bounds for NA random variables~\cite{dubhashi1996balls} (Theorem 10) yield $\mathbb{P}(X_e > t) \leq \exp(-t/3)$ for $t = \max(2\mu, 6\ln N_L)$. Setting $t = 6\ln N_L$ and union-bounding over $O(N_L)$ edges and both phases gives $\mathbb{P}(\text{some edge overloaded}) = O(1/N_L)$. Hence $C_Q = \max_e X_e = O(\log N_L)$ with high probability.
\end{proof}

\subsection{Block LMR scheduling}\label{subsec:block-lmr}


\begin{lemma}[Physical congestion amplification]\label{lem:phys-congestion}
When blocks move on $G = G_{\mathrm{cl}}(H)$, physical congestion at a single physical edge $f$ is bounded by
\begin{equation}
C_{\mathrm{phys}}(f) \leq C_Q(e_f),
\end{equation}
where $e_f$ is the quotient edge corresponding to $f$. The per-edge physical congestion equals the quotient-level congestion. However, each block crossing of a quotient edge $e$ generates load on $O(d_C^2)$ distinct physical edges simultaneously. The total physical work (edge-crossings summed over all edges) is thus $O(d_C^2 \cdot C_Q)$, but the bottleneck per-edge congestion is $C_Q$.
\end{lemma}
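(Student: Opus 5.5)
The plan is to derive the bound from three ingredients: rigidity, the guard distance, and the assignment of each physical edge to a unique quotient edge. The point to settle first is the correspondence $f \mapsto e_f$.

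\textbf{Step 1: define $e_f$.} Fix a block routing schedule realizing the quotient-level paths of Lemma~\ref{lem:block-valiant}. Whenever a block $B_i$ makes a quotient hop from supervertex $i$ to an adjacent supervertex $j$, the rigid translation of Definition~\ref{def:block-translation} moves each atom $v \in B_i$ along a physical edge $(v,\tau_i(v))$. Every such edge lies in the corridor between the footprints of $B_i$ and $B_j$. By Remark~\ref{rem:quotient-support}, these are exactly the pairs at distance at most $g+1$. Since $g \geq 1$ keeps footprints separated, a physical edge $f$ lies in the corridor of at most one quotient edge. This gives a well-defined map $f \mapsto e_f$, with edges outside every corridor carrying zero load. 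I expect this uniqueness claim to be the main obstacle. It needs the sublattice placement (Lemma~\ref{lem:eps-eq-surface-code}), so that corridors attached to distinct quotient edges are disjoint, and the guard distance $g = O(1)$ has to be strictly enforced. If corridors can overlap, I would fall back to a bound of the form $C_{\mathrm{phys}}(f) \le \sum_{e \ni f} C_Q(e)$ with $O(1)$ summands.

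\textbf{Step 2: per-edge bound.} Within one quotient crossing of $e$, rigidity means $\tau_i$ is injective and shifts all atoms by the same translation vector in the grid embedding $\iota_i$ of Remark~\ref{rem:block-shape}. A single single-hop translation therefore uses each physical edge at most once. So each crossing of $e_f$ by a block contributes at most one unit of load to $f$. Summing over all crossings of $e_f$ in the schedule gives $C_{\mathrm{phys}}(f) \le C_Q(e_f)$.

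\textbf{Step 3: total work.} Each crossing moves all $s = d_C^2$ atoms. If the quotient hop is decomposed into $O(d_C)$ single-hop translations, as in the footprint argument of Theorem~\ref{prop:bottleneck-lb}, the crossing touches $O(d_C^2)$ distinct edges per translation, and $O(d_C^2)$ edges overall up to constant factors. Summing over the at most $C_Q$ crossings per quotient edge and over the $O(N_L)$ quotient edges gives the stated total work of $O(d_C^2 \cdot C_Q)$ per edge-class. The per-edge maximum is still $C_Q$ by Step 2.
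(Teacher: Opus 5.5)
\textbf{Gap: Step 2 does not survive the decomposition you use in Steps 1 and 3.}

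Your Step 2 is the paper's argument. An isometric translation sends distinct atoms over distinct physical edges, so each crossing puts at most one unit on $f$, and summing over the $C_Q(e_f)$ crossings gives the bound. That per-crossing count, however, holds only in the frame the paper's proof uses. There a quotient crossing is \emph{one} rigid single-hop translation (Lemma~\ref{lem:single-hop}), with $B_i\cap B_i'=\emptyset$ and the pairs $(v,\tau_i(v))$ forming a perfect matching between the two footprints.

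Your Steps 1 and 3 work in a different frame. You use the $(d_C+g)$-spaced sublattice of Lemma~\ref{lem:eps-eq-surface-code}, with corridors between slots and each crossing decomposed into $O(d_C)$ unit translations. In that frame Step 2 is false:
\begin{itemize}
\item Moving a block from slot $i$ to the slot $j$ east of it, every atom of row $k$ travels $d_C+g$ columns. So each horizontal gap edge in row $k$ is traversed by all $d_C$ row-$k$ atoms within a \emph{single} crossing. This is exactly the lane load $d_C$ computed in Lemma~\ref{lem:single-hop-general}.
\item You therefore only get $C_{\mathrm{phys}}(f)\le d_C\,C_Q(e_f)$.
\item A crossing costs $d_C^2(d_C+g)=\Theta(d_C^3)$ edge traversals, not $O(d_C^2)$. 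Your Step 3 counts distinct edges touched rather than traversals, and the missing factor $d_C$ is hidden in ``up to constant factors.''
\item The uniqueness claim of Step 1 also fails here. The horizontal edges inside slot $i$'s footprint are traversed by crossings of $\{i,j\}$ and also by crossings of $\{i,k\}$ for the west neighbor $k$. So $f\mapsto e_f$ is not well defined, and only your fallback sum survives.
\end{itemize}

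\textbf{Repair: stay in the single-hop frame throughout, as the paper does.}
\begin{itemize}
\item Step 1 then needs neither the guard distance nor the sublattice. Every edge used in crossing $\{i,j\}$ has one endpoint in each of the two footprints, so $e_f$ is simply the pair of blocks that $f$ joins.
\item Step 2 goes through verbatim.
\item Step 3 becomes: each crossing moves its $d_C^2$ atoms simultaneously over $d_C^2$ distinct edges. The work attributable to quotient edge $e$ is then $d_C^2\,C_Q(e)$, read per quotient edge as you did.
\end{itemize}
This recovers the paper's proof. Be aware that this frame sits uneasily with the $O(d_C)$-step block-hop used in Lemmas~\ref{lem:single-hop-general} and~\ref{lem:serialization}. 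If a crossing really takes $\Theta(d_C)$ unit steps, the inequality as stated is off by the factor $d_C$ identified above.
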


\begin{proof}
When a block $B_i$ crosses quotient edge $e = (i,j)$, all $d_C^2$ atoms move in parallel (Lemma~\ref{lem:single-hop}). Each atom crosses a distinct physical edge (the translation is isometric), so the load is spread across $d_C^2$ edges. If $C_Q(e)$ blocks cross edge $e$ over the entire schedule, each of the $O(d_C^2)$ corresponding physical edges sees at most $C_Q(e)$ crossings. The guard distance $g \geq 1$ ensures that physical edges used by different blocks crossing the same quotient edge are disjoint.
\end{proof}

\paragraph{Scheduling.}

\begin{lemma}[Block LMR scheduling, applying~\cite{leighton1999fast}]\label{lem:block-lmr}
The block routing paths (with dilation $D_Q$ and congestion $C_Q$) can be scheduled on the quotient graph $Q$ to complete in
\begin{equation}
T_Q := C_Q + D_Q + o(C_Q + D_Q) = O(\log N_L)
\end{equation}
quotient-level steps.
\end{lemma}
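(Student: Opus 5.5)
The plan is to invoke the Leighton--Maggs--Rao theorem~\cite{leighton1999fast} directly on the quotient graph $Q$. I treat each block as a packet and each quotient-level step as a synchronous step in which every edge of $Q$ carries at most one block.

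\emph{Setup and hypotheses.} Lemma~\ref{lem:block-valiant} supplies, with high probability, a family of block paths in $Q$ (scatter and gather) with dilation $D_Q = O(\log N_L)$ and congestion $C_Q = O(\log N_L)$. Before applying LMR I would check that these paths are simple. Canonical shortest paths are simple, and the concatenation of the two phases can be shortcut to a simple path without increasing $C_Q$ or $D_Q$. LMR then guarantees a schedule of length $O(C_Q + D_Q)$ with constant-size edge queues. The refined version (Leighton--Maggs--Richa) gives $C_Q + D_Q + o(C_Q+D_Q)$ when a bounded slack is allowed, and I would cite that form for the stated constant. Since both parameters are $O(\log N_L)$, this yields $T_Q = O(\log N_L)$.

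\emph{Translating back to block matchings.} The substantive step is that a single step of an LMR schedule on $Q$ must be realizable as a valid block matching in the sense of Definition~\ref{def:block-matching}. In an LMR step, each edge of $Q$ carries at most one block and each block moves at most one quotient edge. Moves using vertex-disjoint quotient edges involve blocks that are pairwise non-adjacent only if the moving blocks do not share a supervertex neighbourhood. To guarantee this, I would properly edge-colour the per-step move set and split it into $O(1)$ sub-steps, each of which is a matching of $Q$ with pairwise distance at least $2$. Because $Q$ has bounded degree $d'_Q = O(1)$ in the regime~\eqref{eq:high-conn}, a distance-$2$ edge colouring uses at most $O((d'_Q)^2) = O(1)$ colours. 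This inflates $T_Q$ by only a constant factor, and the $O(\log N_L)$ bound survives. Buffered blocks in LMR queues are held at intermediate supervertices, which the occupancy assumption $\rho < \rho_{\mathrm{packing}}$ must accommodate. With constant queue sizes, I would argue that the unoccupied slots around each supervertex suffice.

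\emph{Main obstacle.} I expect the hardest part to be reconciling LMR's packet model with rigid blocks, specifically queues and guard distance. LMR allows several packets to wait at one node, whereas a supervertex of $Q$ is a block position that holds only one block. My first attempt is to refine $Q$ by attaching $O(1)$ buffer positions to each supervertex, carved from the free space guaranteed by $\rho$ bounded away from $1$. I would then show that this refined quotient still has bounded degree and $O(\log N_L)$ diameter, by Theorem~\ref{thm:main-quotient} up to constants, and run LMR there. If that fails, the fallback is the weaker but simpler random-delay scheduling, which gives $O(C_Q + D_Q \log N_L)$ in general. That bound would lose the claimed $O(\log N_L)$, so the buffered-refinement route is the one to pursue.
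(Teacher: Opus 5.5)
Your backbone is the paper's entire proof: take the scatter/gather paths of Lemma~\ref{lem:block-valiant}, note that they are simple shortest paths in $Q$ with $C_Q, D_Q = O(\log N_L)$, and invoke LMR. The paper then simply declares that LMR's steps are block matchings of non-adjacent blocks, and it never mentions queues. Your colouring and buffering steps therefore target real mismatches that the paper passes over: an LMR step lets a supervertex send and receive along several edges at once, and LMR parks several packets at one node, whereas a supervertex holds one block.

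The gap is that you leave the buffering step as a plan, and the plan as written would not work. If the buffer positions become new vertices of a refined quotient on which LMR is rerun, LMR will again queue several blocks at one vertex. That vertex is now a buffer slot that also holds a single block, so the difficulty merely moves, and re-proving expansion of the refined graph via Theorem~\ref{thm:main-quotient} is beside the point. The buffers should instead physically store the $O(1)$-size edge queues that LMR already guarantees on $Q$ itself. What must then be proved is a local placement claim. Each supervertex must carry about $q\,\Delta(Q)$ guard-respecting block slots, where $q$ is the LMR queue bound and $\Delta(Q)$ the maximum number of $Q$-neighbours. Each slot must be reachable from that supervertex and its $Q$-neighbours in $O(1)$ block-hops. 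Counting free space, this needs roughly $\rho \lesssim 1/(1+q\,\Delta(Q))$. That is a genuine smallness requirement on $\rho_{\mathrm{packing}}$, not just $\rho$ bounded away from $1$.

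That count and your distance-$2$ edge colouring, which costs $O(\Delta(Q)^2)$ sub-steps per LMR step, are both governed by $\Delta(Q)$. The fact you cite, $d'_Q = O(1)$, concerns the paper's averaged weighted row sum and does not bound $\Delta(Q)$. A priori $\Delta(Q)$ is limited only by the size of a block's neighbourhood, which on a Ramanujan host is of order $d_C^2 d'$ or more, and nothing in the paper keeps it from growing with $d_C$. Since $T_Q$ is later multiplied by the $O(d_C)$ serialization factor, any $d_C$-dependence hidden in your ``$O(1)$'' overheads would destroy the final $O(d_C\log N_L)$. You therefore need an explicit bound on $\Delta(Q)$, uniform in $d_C$, before either step is sound.

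Finally, only $O(C_Q+D_Q)$ is needed for $T_Q = O(\log N_L)$, and that is what LMR provides. The Leighton--Maggs--Richa paper makes such schedules constructive rather than sharpening them to $C+D+o(C+D)$, so drop that ``bounded slack'' attribution.
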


\begin{proof}
By the Leighton--Maggs--Rao theorem~\cite{leighton1999fast}, any set of paths in a graph with dilation $D$ and congestion $C$ can be scheduled in $T = C + D + o(C+D)$ routing steps, provided the paths are edge-simple and routing steps consist of non-crossing matchings.

In our case:
\begin{itemize}
\item Each block path is a simple path in $Q$ (shortest path in a simple graph).
\item Routing steps on $Q$ are block matchings: simultaneous translations of non-adjacent blocks.
\item Non-adjacent blocks in $Q$ correspond to blocks at distance $> g$ in $G$, which can move without interference.
\item The matching condition is thus satisfied.
\end{itemize}

By LMR, the scheduling completes in $T_Q = C_Q + D_Q + o(C_Q + D_Q)$ steps. With $C_Q = D_Q = O(\log N_L)$, we get $T_Q = O(\log N_L)$.
\end{proof}

\begin{remark}
The Lovász Local Lemma underlying LMR ensures that delays can be assigned to avoid deadlock (acyclic channel dependency graph).
\end{remark}

\subsection{Physical realization: single block-hop as one physical step}\label{subsec:single-hop}

We first give the rigid-translation case, sufficing on regular host graphs and giving the cleanest constant. 
We then drop the parallelism hypothesis and handle irregular hosts at the cost of replacing the constant $d'$ with $O(d_C)$.

\begin{lemma}[Block translation cost, regular hosts]\label{lem:single-hop}
Let $B_i$ be a block of size $s = d_C^2$ on the clique expansion $G = G_{\mathrm{cl}}(H)$ of a Ramanujan $(d,r)$-regular hypergraph, with guard distance $g \geq 1$. Suppose $B_i$ undergoes a rigid translation $\tau_i: B_i \to B_i'$ in the sense of Definition~\ref{def:block-translation}: $\mathrm{dist}_G(v, \tau_i(v)) \leq 1$ for all $v \in B_i$, and $\tau_i$ preserves all pairwise distances within $B_i$. Then $\tau_i$ can be executed in a single physical block matching step.
\end{lemma}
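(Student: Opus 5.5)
The plan is to show that the set of atom moves $\{v \mapsto \tau_i(v) : v \in B_i\}$ is a valid simultaneous physical step. In the reconfigurable-lattice move model, one physical step may move any collection of atoms, each along at most one edge of $G$, provided no two atoms land on the same vertex and no atom lands on a vertex that is occupied after the step. So the proof reduces to two checks: a per-atom edge condition and a collision-freeness condition.

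\textbf{Step 1 (each atom uses at most one edge).} By Definition~\ref{def:block-translation}, $\mathrm{dist}_G(v,\tau_i(v))\le 1$. Hence each atom either stays put or traverses the single edge $(v,\tau_i(v))\in E(G)$. No multi-edge path is needed, so each atom's move is an admissible single-step move.

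\textbf{Step 2 (no internal collisions).} Rigidity gives $\mathrm{dist}_{B_i'}(\tau_i(u),\tau_i(v))=\mathrm{dist}_{B_i}(u,v)>0$ for $u\neq v$. Therefore $\tau_i$ is injective, and no two atoms of $B_i$ share a target vertex. Since $|B_i'|=s$, $\tau_i$ is in fact a bijection $B_i\to B_i'$.

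Atoms whose target lies in $B_i\cap B_i'$ are handled by the simultaneity of the step. The targets of these atoms are vacated by other atoms of the same block in the same step. Chains and cycles of such moves are permitted because the step is applied to all atoms at once. Swaps across a single edge cannot occur: if $\tau_i(u)=v$ and $\tau_i(v)=u$, then rigidity combined with the bound $\mathrm{dist}\le 1$ forces $\tau_i$ to reverse an edge. If such a reversal must be excluded, I will note that a rigid translation moves every atom in a common direction, which rules it out.

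\textbf{Step 3 (no external collisions).} I need $B_i'\setminus B_i$ to contain no atom of another block $B_j$. Every vertex of $B_i'$ lies within distance $1$ of $B_i$. The guard distance gives $\mathrm{dist}_G(B_i,B_j)\ge g\ge 1$.

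This step is the main obstacle. With $g=1$ alone, a vertex at distance $1$ from $B_i$ could in principle belong to $B_j$. I will therefore rely on the second clause of Definition~\ref{def:block-matching}: a translation is only included in a matching if the resulting configuration is valid, with no overlaps and guard distance maintained. Under this clause $B_i'\cap B_j=\emptyset$ by hypothesis. Equivalently, if $g\ge 2$, the guard distance already yields disjointness directly.

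Combining Steps 1–3, $\tau_i$ is realized by one physical step. Since other blocks in the same matching are pairwise non-adjacent, their simultaneous moves touch disjoint vertex sets, so the whole block matching executes in one physical step.
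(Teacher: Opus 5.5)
Your argument is correct under the move model you state. It shares the paper's core observation: each atom moves along at most one edge of $G$, so the translation is one simultaneous move. Where it differs is in how it treats the crucial point.

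The paper asserts $B_i\cap B_i'=\emptyset$, attributing this to $g\geq 1$. It then builds the bipartite move graph $H_\tau$ with $\Delta(H_\tau)\leq d'$ and concludes that the pairs $(v,\tau_i(v))$ form a perfect matching of $G$ between $B_i$ and $B_i'$. That conclusion would fit a strict one-matching-per-step model. You instead get distinct targets from rigidity (injectivity of $\tau_i$) and explicitly allow $B_i\cap B_i'\neq\emptyset$, absorbing the overlap into simultaneous chain shifts.

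Yours is the more careful route. The guard distance constrains distinct blocks, not a block and its own image. In the paper's own Step~1 setting, a $d_C\times d_C$ grid patch shifted one step in a cardinal direction overlaps its image in $d_C(d_C-1)$ sites for $d_C\geq 2$. So the disjointness claim fails exactly there.

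The price is that with nonempty overlap the edges $(v,\tau_i(v))$ share vertices and are not a matching of $G$. Your conclusion therefore depends on a physical step being a set of simultaneous single-edge moves with distinct targets. This agrees with the simultaneous execution in Definition~\ref{def:rt-block} and with the edge-disjoint steps of Lemma~\ref{lem:single-hop-general}. Since the paper never fixes a physical move model, you should flag this as an assumption rather than present it as given.

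When $B_i\cap B_i'=\emptyset$ does hold, your injectivity argument recovers the paper's matching conclusion directly. The degree bound on $H_\tau$ is not needed in either case. You also handle collisions with other blocks through the validity clause of Definition~\ref{def:block-matching} (or $g\geq 2$), which the paper's proof leaves implicit.

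One fix is needed. Definition~\ref{def:block-translation} alone does not exclude an edge swap $\tau_i(u)=v$, $\tau_i(v)=u$. Impose the common-direction hypothesis up front instead of asserting that swaps cannot occur; the paper makes the same cardinal-direction restriction in its Step~1.
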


\begin{proof}
We argue that the translation is realizable as a set of edge-disjoint atom moves on $G$. 
Those moves induce a bipartite multigraph on $V(G)$ of bounded maximum degree, with the multigraph decomposing into a constant number of matchings by K\"onig's theorem.

\textbf{Step 1: Realizability.}
We restrict to surface-code-style block configurations in which each block $B_i$ is a $d_C \times d_C$ patch embedded in a 2D-grid sub-structure of $G_{\mathrm{cl}}(H)$ (Remark~\ref{rem:asymptotic-conventions}), and the translation $\tau_i$ is along one of the four cardinal grid directions.
Under this hypothesis, ``direction'' is the equivalence class of horizontal/vertical edges in the embedded grid, and regularity of $G$
guarantees the parallel edge exists at every interior vertex of the patch.
For non-grid host graphs, Lemma~\ref{lem:single-hop-general} gives the same $O(d_C)$ bound by LMR scheduling without assuming a parallel-class decomposition.

\textbf{Step 2: Bipartite multigraph.}
Define $H_\tau$ on $V(G)$ with edges $\{(v, \tau_i(v)) : v \in B_i\}$. Each edge of $H_\tau$ is an edge of $G$. The source set $B_i$ and target set $B_i'$ are disjoint (guard distance $g \geq 1$ implies $B_i \cap B_i' = \emptyset$), so $H_\tau$ is bipartite. Its maximum degree is bounded by $\Delta(H_\tau) \leq d'$: at vertex $w$, an incident edge of $H_\tau$ corresponds to a source atom $v$ with $\tau_i(v) = w$ or $v = w$, of which there are at most $|N_G(w)| = d'$.

\textbf{Step 3: Single-step execution.}
Each atom $v \in B_i$ maps to exactly one target $\tau_i(v)$ along a single edge of $G$, so source--target pairs $\{(v,\tau_i(v)): v \in B_i\}$ already form perfect matchings $M_\tau$ in $G$ between $B_i$ and $B_i'$. Executing $\tau_i$ therefore requires a single physical block matching step. The bound $\Delta(H_\tau) \leq d'$ is retained for the general-host case (Lemma~\ref{lem:single-hop-general}), where the bounded local degree of $G$ enters the corridor congestion analysis. For fixed $(d,r)$, this single-step cost is $O(1)$ and is absorbed into the $O(d_C)$ serialization factor (Lemma~\ref{lem:serialization}).
\end{proof}

\begin{lemma}[Block translation cost, general hosts]\label{lem:single-hop-general}
Let $B_i$ be a block of size $s = d_C^2$ with bounded aspect ratio (Remark~\ref{rem:asymptotic-conventions}) on a host graph $G$ of bounded local degree. Suppose $\tau_i: B_i \to B_i'$ is a near-rigid block-hop: a bijection with $\mathrm{dist}_G(v, \tau_i(v)) \leq d_C + g + 1$ for all $v \in B_i$, target $B_i'$ unoccupied (guard zone), and $\tau_i(B_i)$ inducing a subgraph isomorphic to $B_i$. Then $\tau_i$ can be executed in at most $C + D + o(C+D) = O(d_C)$ physical block matching steps, where $C$ is the path-congestion and $D = O(d_C)$ is the path-dilation.
\end{lemma}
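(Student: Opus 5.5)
The plan is to treat the block-hop $\tau_i$ as an ordinary packet-routing instance on the physical graph $G$, with one packet per atom, and then invoke the Leighton--Maggs--Rao theorem exactly as in Lemma~\ref{lem:block-lmr}, now at the physical rather than the quotient level.

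\textbf{Step 1 (paths and dilation).} For each atom $v \in B_i$, fix a shortest path $P_v$ in $G$ from $v$ to $\tau_i(v)$. By hypothesis $|P_v| \le d_C + g + 1$. Since $g = O(1)$, this gives dilation $D = O(d_C)$.

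To keep the paths local, choose them inside a corridor $K_i$. This is the union of $B_i$, $B_i'$, and the guard zone between them, which is unoccupied by hypothesis. Because $B_i$ has bounded aspect ratio (Remark~\ref{rem:block-shape}), $K_i$ has diameter $O(d_C)$. Confining paths to $K_i$ also ensures they do not collide with atoms of other blocks.

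\textbf{Step 2 (congestion).} This is the main obstacle. Charging every path to the $|\partial B_i| = O(d_C)$ boundary vertices would be a crude count; it gives per-edge congestion $C = O(d_C^2/d_C) = O(d_C)$ only if the paths are spread evenly.

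I would enforce that spreading with a layered path choice. Slice $B_i$ into $d_C$ layers transverse to the direction of motion, each of width $O(d_C)$. Route layer $k$ along "parallel" paths that are shifted copies of a fixed template, using the isomorphism $\tau_i(B_i) \cong B_i$ to define the template. Each physical edge in $K_i$ then lies on at most one path per layer, so $C \le d_C + O(1) = O(d_C)$.

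Bounded local degree of $G$ is used here in two places:
\begin{itemize}
\item it bounds the number of template shifts that can pass through a single vertex, contributing an $O(1)$ factor;
\item it guarantees the multigraph of simultaneous moves has bounded degree, as in Step~2 of Lemma~\ref{lem:single-hop}.
\end{itemize}
If the layered construction is awkward on a general host, the fallback is a randomized choice of paths within $K_i$. One would then apply a Chernoff bound over the $O(d_C^2)$ packets, which gives $C = O(d_C + \log d_C) = O(d_C)$ w.h.p.

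\textbf{Step 3 (scheduling).} The paths $P_v$ are simple, with congestion $C = O(d_C)$ and dilation $D = O(d_C)$. Leighton--Maggs--Rao~\cite{leighton1999fast} therefore yields a schedule of length $C + D + o(C+D)$ in which each step moves every packet along at most one edge and each edge carries at most one packet. Each such step is a set of edge-disjoint atom moves. By the König decomposition used in Lemma~\ref{lem:single-hop}, it splits into $O(d') = O(1)$ matchings. Each matching is a valid physical block matching step, since the guard distance keeps other blocks out of $K_i$. Absorbing the constant gives $O(d_C)$ physical steps in total, as claimed.

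One caveat needs care. LMR produces a schedule for packets, whereas rigidity is only required of the final configuration $\tau_i(B_i)$. So I would note explicitly that intermediate configurations are allowed to be non-rigid inside the corridor, which is what the statement's "near-rigid" phrasing permits.
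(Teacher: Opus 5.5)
Your architecture (one packet per atom, paths of length at most $d_C+g+1$, congestion $O(d_C)$, then Leighton--Maggs--Rao) matches the paper's. However, Step~2 has a genuine gap. The congestion bound needs about $d_C$ pairwise edge-disjoint paths of length $O(d_C)$ from $B_i$ to $B_i'$ through the corridor, and nothing in your argument produces them. On a general host there is no translation action, so ``shifted copies of a fixed template'' is undefined. The isomorphism $G[B_i]\cong G[\tau_i(B_i)]$ constrains only the two patches, not the edges between them. Bounded local degree is an \emph{upper} bound, whereas you need a \emph{lower} bound on every edge cut separating $B_i$ from $B_i'$. If such a cut has $c$ edges, all $d_C^2$ atoms must cross it and at most $c$ can cross per matching step, so at least $d_C^2/c$ steps are needed however the paths are chosen. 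The randomized fallback does not escape this. A Chernoff bound controls deviation from the mean edge load, but the bound $\mu=O(d_C)$ on that mean is precisely the missing cut statement. The listed hypotheses do not supply it either. Partition $B_i$ into nine sub-squares of side $\lceil d_C/3\rceil$, and join the centre of each to the centre of its image in $B_i'$ by a path of length $3$. Degrees stay bounded, both patches remain isometric grid patches, and $\mathrm{dist}_G(v,\tau_i(v))\le 2\lceil d_C/3\rceil+3\le d_C+g+1$ once $d_C\ge 10$. Yet the nine middle edges form a cut, forcing at least $d_C^2/9$ steps.

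The paper supplies this step explicitly. It takes a leading-boundary cut $S$ with one exit edge per row, $|S|=d_C$, read off from the surface-code grid geometry. It invokes Menger's theorem for edge-disjoint lanes $L_1,\dots,L_{d_C}$ of length at most $D$. It then routes every atom of row $k$ along a row-internal prefix, lane $L_k$, and a row-internal suffix. Lane edges thus carry exactly $d_C$ atoms and in-row edges at most $d_C$. Your transverse layering gives the same count once such lanes exist, since each layer uses each lane once. To repair your proof, state and justify the lane property: the corridor must contain $d_C$ edge-disjoint short lanes matched to rows, i.e.\ the minimum edge cut between the leading faces of $B_i$ and $B_i'$ is at least $d_C$. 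Menger needs exactly this lower bound; the paper's ``at most $|S|$'' points the wrong way. Menger also gives neither the length bound nor the row-to-lane pairing. So in both arguments the lanes must come from a grid-like corridor, not from bounded degree and block aspect ratio alone. Your extra step splitting each LMR step into $O(d')$ matchings is a harmless refinement.
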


\begin{proof}
We bound the physical-step count with Leighton--Maggs--Rao scheduling~\cite{leighton1999fast} applied to atom-level paths.

\textbf{Step 1: Path assignment.}
For each atom $v \in B_i$, choose a shortest path $P_v$ in $G$ from $v$ to $\tau_i(v)$. By the near-rigid hypothesis, $|P_v| \leq d_C + g + 1$, so the path \emph{dilation} is $D := \max_v |P_v| = O(d_C)$.

\textbf{Step 2: Congestion bound via Menger's theorem.}
The atom paths lie in the corridor $R := N_G^{\leq d_C+g+1}(B_i)$, with $|R| = O(d_C^2)$ vertices and $|E(G[R])| = O(d_C^2)$ edges by bounded aspect ratio (Remark~\ref{rem:asymptotic-conventions}). We show that the paths $\{P_v : v \in B_i\}$ can be chosen so that
\begin{equation}\label{eq:congestion-bound}
C \;:=\; \max_{e \in E(G[R])} \,|\{v : e \in P_v\}| \;\leq\; d_C,
\end{equation}
matching the $\Omega(d_C)$ lower bound from pigeonhole on the boundary cut.

\emph{(a) Boundary cut.}
Let $\partial B_i \subset B_i$ denote the leading boundary. 
Atoms with at least one neighbor in $G[R] \setminus B_i$ in the direction of translation. By the $d_C \times d_C$ surface-code geometry (Remark~\ref{rem:asymptotic-conventions}), $|\partial B_i| = d_C$, and the corresponding edge cut $S := \{(u,w) \in E(G[R]) : u \in B_i, w \notin B_i\}$ has $|S| = d_C$ (one edge per leading-boundary atom under the rigid translation).

\emph{(b) Lane construction (Menger).}
The minimum edge cut separating $\partial B_i$ from $\partial B_i'$ in $G[R]$ is at most $|S| = d_C$. By Menger's theorem applied to $G[R]$ in the integral max-flow direction, there exist $d_C$ pairwise edge-disjoint paths $L_1, \ldots, L_{d_C}$ in $G[R]$, each from one atom of $\partial B_i$ to the corresponding atom of $\partial B_i'$, with $|L_k| \leq D = d_C + g + 1$.

\emph{(c) Atom-to-lane assignment.}
Index the $d_C$ rows of $B_i$ by $k \in \{1, \ldots, d_C\}$ perpendicular to the translation direction. Each row contains $d_C$ atoms; rigid translation maps row $k$ of $B_i$ to row $k$ of $B_i'$. Route each atom $v$ in row $k$ as follows: row-internal prefix in $B_i$ from $v$ to $\partial B_i \cap (\text{row } k)$, then lane $L_k$ across the corridor, then row-internal suffix in $B_i'$ to $\tau_i(v)$.

\emph{(d) Congestion accounting.}
For $e \in L_k$ (corridor lane edge): only the $d_C$ row-$k$ atoms traverse $e$, giving load $d_C$. For $e \in E(G[B_i])$ within-block (prefix): under rigid translation, the $d_C$ atoms in row $k$ traverse a common sequence of row edges, and each in-row edge is shared by at most the $d_C$ atoms whose source columns precede it. The same bound holds inside $B_i'$. Lanes $L_1, \ldots, L_{d_C}$ are pairwise edge-disjoint by step~(b), so loads from different rows do not compound. Hence $C \leq d_C$, and~\eqref{eq:congestion-bound} holds.

\textbf{Step 3: LMR scheduling.}
The atom-level paths $\{P_v\}$ have dilation $D = O(d_C)$ and congestion $C = O(d_C)$. By Leighton--Maggs--Rao~\cite{leighton1999fast}, they can be scheduled in $T = C + D + o(C + D) = O(d_C)$ physical matching steps, with each step a valid matching on $G$ (no two paths use the same edge in the same step). The guard distance $g \geq 1$ ensures the corridors used by simultaneously-translating blocks (in a quotient matching of multiple blocks) are vertex-disjoint, so the schedule is collision-free.

This bound is stable under irregularity of $G$: the proof uses only bounded local degree (for the matching constraint) and bounded aspect ratio of the block (for the corridor and bandwidth).
\end{proof}

\begin{remark}[Hierarchy of bounds]
Lemma~\ref{lem:single-hop} gives a tighter constant $d'$ for the regular case, requiring parallel displacement. Lemma~\ref{lem:single-hop-general} is robust to host irregularity at the cost of $O(d_C)$ steps per block-hop. The $O(d_C)$ bound is already the dominant factor in the serialization (Lemma~\ref{lem:serialization}), so the global bound $\rt_B = O(d_C \log N_L)$ holds in either regime, with a constant that absorbs both factors.
\end{remark}

\subsection{Cut-throughput serialization}\label{subsec:serialization}

\begin{lemma}[Quotient-to-physical serialization]\label{lem:serialization}
Each quotient-level routing step (a block matching on~$Q$) can be executed in $O(d_C)$ physical routing steps on~$G$.
\end{lemma}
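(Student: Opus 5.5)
The plan is to reduce a quotient-level step to a collection of independent per-block translations, bound each by Lemma~\ref{lem:single-hop-general}, and show that these translations neither interfere nor compound. A quotient-level step is a block matching $M=\{\tau_{i_1},\dots,\tau_{i_k}\}$ on $Q$ (Definition~\ref{def:block-matching}). Each translation $\tau_{i_j}$ is one quotient hop, so $B_{i_j}$ moves to a target position at quotient distance one. By Remark~\ref{rem:quotient-support} that target lies within physical distance $d_C+g+1$. Hence every atom displacement satisfies the near-rigid hypothesis of Lemma~\ref{lem:single-hop-general}.

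Applying that lemma to each block separately gives atom-level paths with dilation $D=O(d_C)$ and congestion $C\le d_C$. These paths are confined to the corridor
\[
R_{i_j}=N_G^{\le d_C+g+1}(B_{i_j}).
\]
In the regular grid-patch case, Lemma~\ref{lem:single-hop} already gives a constant number of steps per cardinal hop. That is the cheaper case and is absorbed into the bound.

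The second step is to combine the per-block schedules into a single global schedule. I take the union of all atom paths over the blocks in $M$. I then argue that the corridors of distinct blocks in the matching are pairwise edge-disjoint, or at least overlap with only $O(1)$ multiplicity. The blocks in $M$ are non-adjacent in $Q$ and separated by the guard distance, and Remark~\ref{rem:block-shape} bounds the aspect ratio. Given this, the global congestion stays $O(d_C)$ (up to the constant overlap factor) and the global dilation stays $O(d_C)$. One application of Leighton--Maggs--Rao to the union then schedules everything in $C+D+o(C+D)=O(d_C)$ physical matching steps. Each of those steps is a valid matching on $G$, because LMR assigns each edge to at most one path per step.

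The final step is to check that every intermediate configuration is admissible. Atoms of different blocks must never collide, and the guard distance must be restored at the end of the step. Target regions are unoccupied by the matching condition. I would enforce collision-freedom by ordering, within each row, that leading atoms move before trailing ones (a pipelining order). This is compatible with LMR's delays, since LMR only inserts waits.

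I expect the corridor-disjointness step to be the main obstacle. A corridor of radius $d_C+g+1$ around a block can reach the corridor of a neighbouring matched block when $g=O(1)$, so strict disjointness does not follow from Definition~\ref{def:block-matching}. My first attempt would be to shrink each block's paths to the narrow lanes built in Lemma~\ref{lem:single-hop-general}, namely the row prefix, the Menger lane $L_k$, and the row suffix, which stay within $B_i\cup B_i'$ plus the gap between them. I would then show that only $O(1)$ lanes from distinct matched blocks can share an edge, using bounded local degree $d'=O(1)$. If that fails, the fallback is to colour the blocks of $M$ into $O(1)$ classes with pairwise disjoint corridors and process the classes sequentially. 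A greedy colouring works because each corridor meets only $O(1)$ others under the sublattice placement, and this costs only a constant factor, keeping the total at $O(d_C)$.
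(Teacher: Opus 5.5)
Your proposal is correct and follows essentially the paper's route: apply Lemma~\ref{lem:single-hop-general} to each translating block, which gives dilation and congestion $O(d_C)$, then argue that atom paths of different matched blocks do not interfere, so a single LMR schedule of length $O(d_C)$ executes the whole quotient step. The paper simply asserts that the guard distance and non-adjacency in $Q$ make the footprints vertex-disjoint; your restriction to the narrow row--lane--row paths inside $B_i \cup B_i'$ plus the gap, with the $O(1)$-colouring fallback, supplies the missing justification at constant-factor cost (apply the colouring to those narrow regions, not to the full balls $N_G^{\le d_C+g+1}(B_i)$, which in an expander host are exponentially large in $d_C$).
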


\begin{proof}
A quotient matching $M \subseteq E(Q)$ moves a set of non-adjacent blocks simultaneously. We bound the physical cost of executing~$M$ as follows.

\textbf{Physical edge congestion.}
Each block translation in~$M$ moves $d_C^2$ atoms, each by at most one edge in~$G$ (Definition~\ref{def:block-translation}). The guard distance $g \geq 1$ ensures physical footprints of distinct translating blocks are vertex-disjoint: blocks matched in~$M$ are non-adjacent in~$Q$, so their footprints (including guard zones) do not overlap.The set of atom-level moves within a single quotient matching thus has maximum physical edge congestion $\Delta \leq d_C^2$, arising from atoms within a single block sharing edges of~$G$.

\textbf{Decomposition into matchings.}
Atom-level moves within a single quotient matching consist of $d_C^2$ parallel paths through the corridor $R$, each of length at most $d_C$ (by bounded block aspect ratio, Remark~\ref{rem:asymptotic-conventions}).
By Lemma~\ref{lem:single-hop-general}, these paths admit an LMR schedule of $C + D + o(C+D) = O(d_C)$ matching steps, where $D = O(d_C)$ is the maximum path length and $C \leq d_C$ is the corridor's edge-cut bandwidth.
Each matching step advances all atoms simultaneously along edge-disjoint parts of their assigned paths.
\end{proof}

\subsection{Proof of main theorem}\label{subsec:assembly-upper}

\begin{theorem}[Block routing on Ramanujan hypergraphs (upper bound)]\label{thm:main-block-upper}
For a Ramanujan $(d,r)$-regular hypergraph $H$ with $\beta < 1$, \eqref{eq:high-conn}, and a uniform $(s,g)$-block configuration with $s = d_C^2$ and bounded aspect ratio (Remark~\ref{rem:asymptotic-conventions}), the block routing number satisfies
\begin{equation}
\rt_B(G_{\mathrm{cl}}(H), s, g) = O(d_C \log N_L).
\end{equation}
\end{theorem}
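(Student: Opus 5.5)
The upper bound comes from composing the two levels already set up in \S\ref{sec:main-theorem}: a quotient-level schedule of $T_Q = O(\log N_L)$ block matchings, each realized by $O(d_C)$ physical matching steps. Multiplying gives $O(d_C \log N_L)$ physical steps for an arbitrary target block permutation $\pi$.

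\textbf{Quotient level.} Fix the initial $(s,g)$-block configuration and a target permutation $\pi$ of the $N_L$ blocks. First I invoke Theorem~\ref{thm:main-quotient}, which applies under \eqref{eq:high-conn} or its tight form. It gives $\beta_Q<1$ bounded away from $1$, hence $\diam(Q)=O(\log N_L)$.

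Next I apply Lemma~\ref{lem:block-valiant}. Drawing a uniformly random intermediate permutation $\sigma$ and routing along canonical shortest paths in $Q$, in two phases, yields dilation $D_Q=O(\log N_L)$. The negative association of Lemma~\ref{lem:na-blocks} makes the Chernoff bound valid, so the congestion is $C_Q=O(\log N_L)$ with probability $1-O(1/N_L)$. Since this probability is positive, some choice of $\sigma$ achieves both bounds, which turns the high-probability statement into a deterministic existence statement as $\rt_B$ requires.

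Lemma~\ref{lem:block-lmr} then schedules these paths into $T_Q=C_Q+D_Q+o(C_Q+D_Q)=O(\log N_L)$ quotient steps. Each step is a block matching, meaning a set of translations of blocks that are pairwise non-adjacent in $Q$.

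\textbf{Physical level.} For each quotient step I apply Lemma~\ref{lem:serialization}. The blocks moved in one quotient matching are non-adjacent in $Q$, so by the guard distance $g\ge1$ their corridors $R$ are vertex-disjoint. Within each corridor, Lemma~\ref{lem:single-hop-general} routes the $d_C^2$ atoms with congestion $C\le d_C$ and dilation $D=O(d_C)$, and LMR schedules them in $O(d_C)$ physical matchings. Because the corridors are disjoint, the schedules of all moving blocks run in parallel, and their union at each time is still a matching on $G$.

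If all the hops are rigid cardinal translations on a regular host, Lemma~\ref{lem:single-hop} gives $O(1)$ per hop instead. The $O(d_C)$ bound covers both cases.

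\textbf{Composition.} Concatenating the physical realizations of the $T_Q$ quotient steps gives $T_Q\cdot O(d_C)=O(d_C\log N_L)$ physical block routing steps. Since $\pi$ and the configurations were arbitrary, this bounds $\rt_B$.

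\textbf{Main obstacle.} The delicate point is the interface between the levels. After each quotient step the intermediate configuration must still be a valid $(s,g)$-block configuration: blocks must remain disjoint, keep the guard distance, and keep their bounded aspect ratio. Otherwise the quotient graph $Q$ and the spectral bounds would change from step to step.

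I would handle this by fixing a $(d_C+g)$-spaced sublattice of slots and letting quotient moves only carry blocks between slots. Then $Q$ is a fixed graph throughout, Theorem~\ref{thm:layer2} applies uniformly, and the occupancy condition $\rho<\rho_{\mathrm{packing}}$ guarantees free target slots, which is what makes the single-occupancy matching semantics of LMR legitimate. I would verify this slot invariant step by step: a target slot occupied at one step must be vacated by the LMR-delayed schedule before it is entered.
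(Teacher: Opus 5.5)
Your proposal is essentially the paper's proof: like the paper, you chain Theorem~\ref{thm:main-quotient} ($\beta_Q<1$, hence $D_Q=O(\log N_L)$), Lemma~\ref{lem:block-valiant} ($C_Q=O(\log N_L)$ w.h.p.), Lemma~\ref{lem:block-lmr} ($T_Q=O(\log N_L)$ block-matching steps) and Lemma~\ref{lem:serialization} ($O(d_C)$ physical steps per quotient step), and then multiply. Two of your additions go beyond the paper: the probabilistic-method step that fixes a good $\sigma$ is a correct tightening, while the slot-invariant paragraph raises a point the paper simply asserts inside Lemma~\ref{lem:block-lmr}, and there you should not attribute single-occupancy semantics to LMR itself (its schedules let several packets wait at one node and let a node use several edges in one step), so that invariant would need its own argument if you pursue it.
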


\begin{proof}
By Theorem~\ref{thm:main-quotient}, the quotient graph $Q$ satisfies $\beta_Q < 1$, so:
\begin{itemize}
\item Diameter: $D_Q = O(\log N_L)$ (Theorem~\ref{thm:main-quotient}).
\item Congestion: $C_Q = O(\log N_L)$ w.h.p.\ (Lemma~\ref{lem:block-valiant}).
\item Scheduling: $T_Q = O(\log N_L)$ quotient steps (Lemma~\ref{lem:block-lmr}).
\item Serialization: Each quotient step requires $O(d_C)$ physical sub-steps (Lemma~\ref{lem:serialization}).
\end{itemize}
Thus, $T = T_Q \cdot O(d_C) = O(d_C \log N_L)$.
\end{proof}

\begin{corollary}[Tight bounds]
Combining Theorem~\ref{prop:bottleneck-lb} (lower bound $\Omega(d_C \log N_L)$) and Theorem~\ref{thm:main-block-upper} (upper bound $O(d_C \log N_L)$),
\begin{equation}
\rt_B(G_{\mathrm{cl}}(H), s, g) = \Theta(d_C \log N_L).
\end{equation}
\end{corollary}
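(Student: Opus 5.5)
The corollary follows by sandwiching $\rt_B(G_{\mathrm{cl}}(H), s, g)$ between the two bounds already established. I will state the common hypotheses once, apply Theorem~\ref{prop:bottleneck-lb} for the lower bound and Theorem~\ref{thm:main-block-upper} for the upper bound, and then check that the two hidden constants are compatible. The common hypotheses are: $H$ Ramanujan $(d,r)$-regular with $d,r\ge 3$ and $\beta<1$; a uniform $(s,g)$-block configuration with $s=d_C^2$, $1\le g=O(1)$, bounded aspect ratio, and occupancy bounded away from $1$; and the regime~\eqref{eq:high-conn}, or~\eqref{eq:high-conn-tight} for surface-code patches. Throughout, $(d,r)$ are fixed as in Remark~\ref{rem:asymptotic-conventions}.

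\textbf{Lower bound.} Theorem~\ref{prop:bottleneck-lb} needs only $g\ge 1$. It gives $\rt_B\ge c_1 d_C\log N_L$ for some $c_1>0$. One point needs checking: the $\Omega(\log N_L)$ phase count in Step~1 of that proof uses $\log N_L/\log d'_Q$. Since $d'_Q\le d'$ and $d'=O(1)$, the constant $c_1$ is independent of $d_C$ and $N_L$.

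\textbf{Upper bound.} Theorem~\ref{thm:main-block-upper} gives $\rt_B\le c_2 d_C\log N_L$. It is built from four ingredients:
\begin{itemize}
\item $\beta_Q<1$, from Theorem~\ref{thm:main-quotient};
\item $D_Q,C_Q=O(\log N_L)$, from Lemma~\ref{lem:block-valiant}, holding w.h.p.;
\item $T_Q=O(\log N_L)$, from Lemma~\ref{lem:block-lmr};
\item a serialization cost of $O(d_C)$ physical steps per quotient step, from Lemma~\ref{lem:serialization}.
\end{itemize}
The congestion bound holds only with high probability, so I will pass from it to a deterministic bound on $\rt_B$ by the probabilistic method. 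The failure probability is $O(1/N_L)<1$ for large $N_L$, so for every target permutation some intermediate permutation $\sigma$ exists that achieves the congestion bound. Hence the worst-case quantity $\rt_B$ of Definition~\ref{def:rt-block} is bounded deterministically.

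\textbf{Main obstacle.} The delicate step is uniformity of $c_2$ in $d_C$. Theorem~\ref{thm:main-quotient} only guarantees that $\beta_Q$ stays bounded away from $1$ by a constant that may depend on $d_C$. So the quotient diameter is $O(\log N_L/\log(1/\beta_Q))$, and I must check that $\log(1/\beta_Q)$ does not decay as $d_C$ grows. To do this I will use Theorem~\ref{thm:layer2}: inside the regime~\eqref{eq:high-conn-tight} one has $d'/d'_Q\le(1+o(1))/(1-\beta)$ and $\epsilon_{\mathrm{eq}}=O(1/d_C)$, which bounds $\beta_Q$ by a fixed constant below $1$. If that uniformity cannot be shown, I will state the corollary with $\Theta$ read for fixed $d_C$ as $N_L\to\infty$, together with the $\Theta(d_C)$ dependence coming from the serialization and footprint factors.

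Given uniformity, $c_1 d_C\log N_L\le\rt_B\le c_2 d_C\log N_L$, which is the claimed $\Theta(d_C\log N_L)$.
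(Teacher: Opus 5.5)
Your main line is the paper's. The corollary is just the conjunction of Theorem~\ref{prop:bottleneck-lb} and Theorem~\ref{thm:main-block-upper} under their combined hypotheses. Your probabilistic-method remark (failure probability $O(1/N_L)<1$, so for every target some intermediate $\sigma$ attains $C_Q=O(\log N_L)$) correctly turns the w.h.p.\ bound of Lemma~\ref{lem:block-valiant} into a bound on the worst-case quantity of Definition~\ref{def:rt-block}, a point the paper leaves implicit. What does not hold up is the uniformity-in-$d_C$ discussion you single out as the main obstacle. It is not needed: both theorems are already asserted in the joint $N_L,d_C\to\infty$ convention of Remark~\ref{rem:asymptotic-conventions}, so citing them suffices. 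But as written, neither half of it goes through.

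\emph{Lower bound.} You use $d'=O(1)$, yet the regime~\eqref{eq:high-conn} (or~\eqref{eq:high-conn-tight}), which you invoke for the upper bound, forces $d'$ above a constant multiple of $d_C^2$ (or $d_C$). So $d'=O(1)$ and $d_C\to\infty$ cannot hold together. Moreover, the branching factor in the reachability count of Step~1 is the number of distinct $Q$-neighbors of a block, not the weighted degree $d'_Q$. The fact $d'_Q\le d'$ is true but irrelevant, and the neighbor count is bounded only by a quantity growing with $d_C$ (e.g.\ $s\,d'$, the number of physical edges leaving $B_i$). Your check therefore gives only $\Omega(\log N_L/\log(d_C d'))$ phases, not a $d_C$-independent $c_1$.

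\emph{Upper bound.} The loose regime is exactly the condition $d'-d_C^2(r-1)>\beta d'$. Combined with $d'_Q\ge d'-d_C^2(r-1)$, it yields $\beta_Q\le\beta d'/d'_Q<1$ but no margin uniform in $d_C$; that bound tends to $1$ when $d'$ sits just above threshold. The inequality $d'/d'_Q\le(1+o(1))/(1-\beta)$ you borrow from the proof of Theorem~\ref{thm:layer2} does not follow from the regime, since it would require $\delta_{\mathrm{intra}}\le\beta d'$. Under~\eqref{eq:high-conn-tight}, the lower bound $d'-d_C^2(r-1)$ used there can even be negative. Genuine uniformity needs an extra margin such as $d'\ge(1+\eta)\,d_C^2(r-1)/(1-\beta)$, which gives $1-\beta_Q\ge\eta(1-\beta)/(\beta+\eta)$ via the bound $\beta_Q\le\beta d'/d'_Q$ of Theorem~\ref{thm:layer1}.

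Your fixed-$d_C$ fallback would prove a strictly weaker statement than the corollary. The clean route is to cite the two theorems as the paper does.
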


\subsection{Numerical validation}\label{subsec:numerical}

We validate results numerically by constructing a random $d'$-regular simple graph with the configuration model with edge-deduplication. We perform BFS-based block placement with guard zones and form $Q(G, B)$ with the edge condition $\mathrm{dist}_G(B_i, B_j) \leq g + 1$. 
After Valiant's two-phase routing on $Q$, we output measured $C_Q$, $D_Q$, validated with $T_{\mathrm{physical}} = d_C \cdot (C_Q + D_Q)$ against $d_C \log_2 N_L$ and an explicit edge-disjoint decomposition of the bipartite block-translation multigraph (per Lemma~\ref{lem:single-hop-general}'s LMR schedule) to confirm each quotient step decomposes into $\leq d_C$ physical matchings.
We construct random $d'$-regular host graphs ($r = 3$ throughout) and place $N_L$ blocks of size $d_C^2$ with guard distance $g = 1$. The $d_C \in \{3, 5\}$ rows use $d = 50$, $d' = 100$; the $d_C = 7$ rows use $d = 100$, $d' = 200$ to ensure the high-connectivity regime condition is satisfied (numerically $d' = 100 \approx 122$ would be marginal at $d_C = 7$, so we lift to $d' = 200$). Each configuration is averaged over 3 independent trials.

\begin{table}[ht]
\centering
\renewcommand{\arraystretch}{1.2}
\begin{tabular}{@{}cccccc@{}}
\toprule
$d_C$ & $N_L$ & $T_{\mathrm{sim}}$ & $d_C \log_2 N_L$ & $\alpha$ & $\beta_Q$ \\
\midrule
3  &  8  &  8.0  &  9.0  & 0.89 & 0.298 \\
3  & 16  & 10.0  & 12.0  & 0.83 & 0.196 \\
3  & 32  & 10.0  & 15.0  & 0.67 & 0.188 \\
3  & 64  & 14.0  & 18.0  & 0.78 & 0.178 \\
\midrule
5  &  8  & 16.7  & 15.0  & 1.11 & 0.207 \\
5  & 16  & 15.0  & 20.0  & 0.75 & 0.143 \\
5  & 32  & 15.0  & 25.0  & 0.60 & 0.119 \\
5  & 64  & 15.0  & 30.0  & 0.50 & 0.076 \\
\midrule
7  &  8  & 21.0  & 21.0  & 1.00 & 0.172 \\
7  & 16  & 21.0  & 28.0  & 0.75 & 0.096 \\
7  & 32  & 21.0  & 35.0  & 0.60 & 0.069 \\
7  & 64  & 21.0  & 42.0  & 0.50 & 0.043 \\
\bottomrule
\end{tabular}
\caption{Simulation results for block routing on random $100$-regular expanders. $T_{\mathrm{sim}}$ is the mean simulated physical routing depth, $d_C \log_2 N_L$ is the theoretical prediction, and $\alpha = T_{\mathrm{sim}} / (d_C \log_2 N_L)$ is the normalized ratio. All quotient graphs satisfy $\beta_Q < 1$.}
\label{tab:simulation}
\end{table}

\begin{table*}[!htbp]
\centering
\renewcommand{\arraystretch}{1.2}
\begin{tabular}{@{}cccccc@{}}
\toprule
$d'$ & $\beta_{\mathrm{host}}$ & threshold & in regime? & $T_{\mathrm{sim}}$ & $\beta_Q$ \\
\midrule
 50 & 0.280 & 136.1 & no              & 21.0 & 0.111 \\
100 & 0.199 & 122.4 & marginal        & 21.0 & 0.088 \\
200 & 0.140 & 113.9 & yes             & 21.0 & 0.069 \\
400 & 0.098 & 108.7 & yes      & 21.0 & 0.059 \\
\bottomrule
\end{tabular}
\caption{Threshold sweep at fixed $d_C = 7$, $N_L = 32$.  Threshold is $d_C^2(r-1)/(1-\beta_{\mathrm{host}}) = 98/(1-\beta_{\mathrm{host}})$.  $T_{\mathrm{sim}}$ saturates at the small-$N_L$ floor $d_C(C_Q + D_Q) = 21$ across the sweep. The regime condition manifests primarily in $\beta_Q$, which tightens by nearly $2\times$ between $d' = 50$ (out of regime) and $d' = 400$ (deep inside).}
\label{tab:simulation-sweep}
\end{table*}

The results (Tables~\ref{tab:simulation}-\ref{tab:simulation-sweep} and Figure~\ref{fig:scaling}) test independent predictions. 
First, the upper bound: all simulated depths satisfy $T_{\mathrm{sim}} \leq d_C \log_2 N_L$, with normalized ratio $\alpha = T_{\mathrm{sim}}/(d_C \log_2 N_L) \in [0.50, 1.11]$ across the sweep. Second, the regime condition~\eqref{eq:high-conn}: Table~\ref{tab:simulation-sweep} sweeps $d'$ at fixed $(d_C, N_L) = (7, 32)$ and shows $\beta_Q$ tightening from $0.111$ at $d' = 50$ (out of regime) to $0.059$ at $d' = 400$ (well inside), consistent with the threshold $d_C^2(r-1)/(1-\beta_{\mathrm{host}}) \approx 110\text{--}140$. We note that for $d_C = 7$, $T_{\mathrm{sim}}$ saturates at the small-$N_L$ floor $d_C(C_Q + D_Q) = 21$ across $N_L \in \{16, 32, 64\}$. 
The $\log N_L$ separation is not yet resolved at this scale and would require $N_L \gtrsim 10^3$ to break out of the floor. 
Asymptotic scaling therefore rests on the analytic bounds (Theorems~\ref{prop:bottleneck-lb} and~\ref{thm:main-block}), with the simulations confirming consistency rather than empirically demonstrating the logarithm.

\begin{figure*}[ht]
\centering
\includegraphics[width=0.9\textwidth]{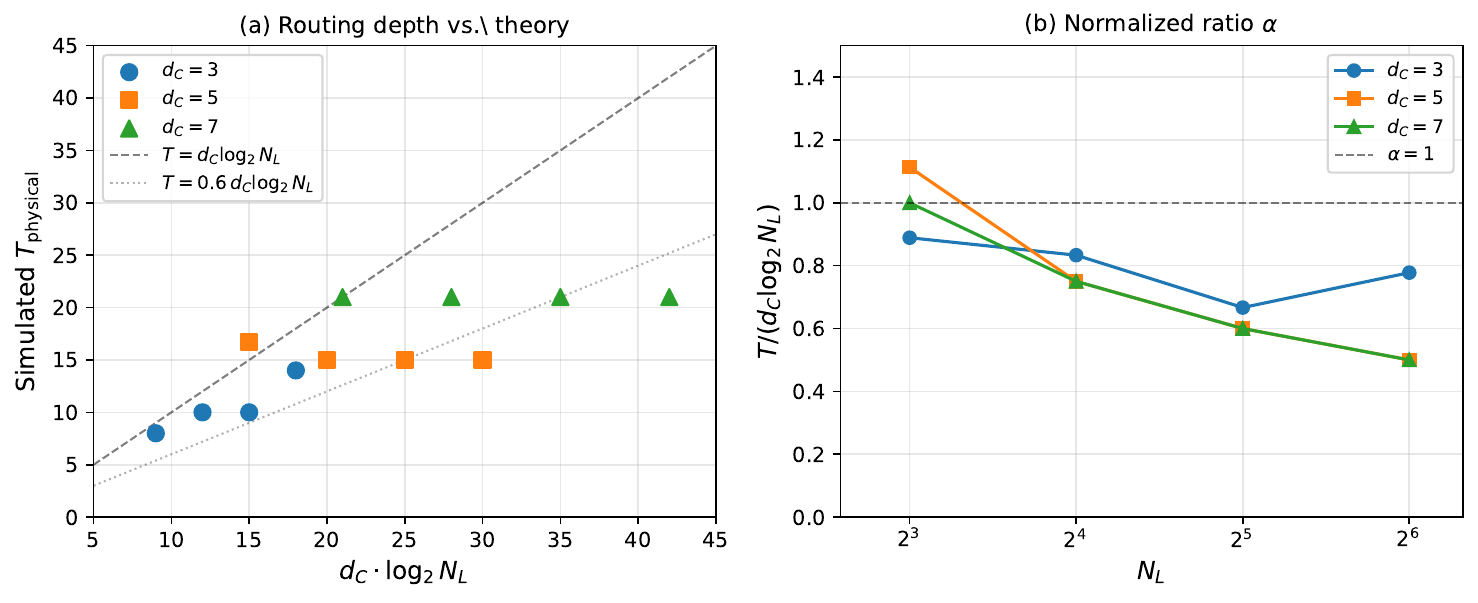}
\caption{Numerical validation of the $\Theta(d_C \log N_L)$ scaling. (a)~Simulated routing depth $T_{\mathrm{sim}}$ versus the theoretical prediction $d_C \log_2 N_L$ for $d_C \in \{3, 5, 7\}$ and $N_L \in \{8, 16, 32, 64\}$. The dashed line is $T = d_C \log_2 N_L$; all points lie at or below this line, confirming the upper bound. (b)~Normalized ratio $\alpha = T_{\mathrm{sim}} / (d_C \log_2 N_L)$ versus $N_L$. The ratio is bounded away from zero and bounded above by~$1$, consistent with the tight $\Theta(\cdot)$ bound.}
\label{fig:scaling}
\end{figure*}

\section{Fault-Tolerance Analysis}\label{sec:fault-tolerance}

The block routing depth $T = \Theta(d_C \log N_L)$ must be compatible with error rates to maintain below-threshold operation for fault-tolerant quantum computing. This section integrates experimental data from Bluvstein et al.~\cite{evered2023high, bluvstein2024logical, bluvstein2026fault} and Google~\cite{google2023suppressing, acharya2024quantum}.
\newpage
\subsection{Error model}

We adopt the standard circuit-level error model for surface codes:

\begin{itemize}
\item \textbf{Physical error rate}: $p_{\mathrm{phys}} \sim 10^{-3}$ to $3 \times 10^{-4}$ (from recent experiments~\cite{bluvstein2024logical, bluvstein2026fault, acharya2024quantum}).
\item \textbf{Effective error rate per atom-step}: $p_{\mathrm{eff}} = C_{\mathrm{circ}} \cdot p_{\mathrm{phys}}$, where $C_{\mathrm{circ}} \in [5, 15]$ is the circuit-level inflation factor aggregating two-qubit gate, single-qubit gate, idle, measurement, and reset errors per syndrome cycle (Fowler et al.~\cite{fowler2012surface}, Section IV.C, where $C_{\mathrm{circ}} \approx 10$ is the central value of the threshold-theorem estimate). All numerical estimates in this section take $C_{\mathrm{circ}} = 10$. Replacing it with the per-platform value rescales bounds by a constant factor.
\item \textbf{Atom loss rate}: $p_{\mathrm{loss}} \sim 10^{-4}$ per transport step.
\item \textbf{Transport fidelity}: $> 0.99$ per single-hop move (Bluvstein et al.~\cite{bluvstein2022quantum}).
\end{itemize}

\subsection{Syndrome extraction protocols}

During block routing, qubits remain in surface code patches. To maintain error correction, we periodically extract syndromes (stabilizer measurement outcomes).

\begin{definition}[Syndrome extraction protocol]\label{def:syndrome-protocol}
A \emph{syndrome extraction protocol} is a schedule of stabilizer measurements interleaved with routing steps. We consider three viable protocols:
\end{definition}

\textbf{Protocol S1 (Stop-and-correct).} Every $K$ routing steps, pause and perform a complete syndrome extraction (taking $O(d_C)$ rounds). The number of physical errors accumulating in a single block over $K$ steps is binomial with mean $\mu = K \cdot d_C^2 \cdot p_{\mathrm{eff}}$. The decoder corrects up to $t := \lfloor (d_C-1)/2 \rfloor$ errors, so the per-block uncorrectable-failure probability is
\begin{equation}
    \begin{split}
          p_{\mathrm{S1\text{-}fail}}
  &\;=\; \Pr\!\left[\,\binom{K d_C^2}{p_{\mathrm{eff}}} > t\,\right]
  \\&\;\leq\; \exp\!\left(-\,\frac{(t-\mu)^2}{2(\mu + t/3)}\right)
    \end{split}
\end{equation}
by the standard Chernoff upper-tail bound, valid for $t > \mu$.  Setting $p_{\mathrm{S1\text{-}fail}} \leq p_{\mathrm{target}}$ (e.g.\ $10^{-9}$) and solving for the largest admissible $K$ gives
\begin{equation}\label{eq:Kmax}
    K_{\max}
  \;=\; \left\lfloor
        \frac{t \;-\; \sqrt{2 t \ln(1/p_{\mathrm{target}})}}
             {d_C^2 \, p_{\mathrm{eff}}}
        \right\rfloor.  
\end{equation}

For $p_{\mathrm{eff}} = 10^{-3}$, $d_C = 5$, $p_{\mathrm{target}} = 10^{-9}$:
$t = 2$, $\mu_{K=K_{\max}} \approx 0.6$, $K_{\max} \approx 24$.
For $d_C = 7$: $t = 3$, $K_{\max} \approx 23$.

\textbf{Protocol S2 (Rolling/AFT).} Use the AFT methods of Bluvstein et al.~\cite{bluvstein2026fault}: stabilizers can be measured in $O(1)$ rounds per step (rather than $O(d_C)$), by maintaining parity information continuously. This gives $T_{\mathrm{S2}} = O(d_C \log N_L)$ with tight error suppression.

\textbf{Protocol S3 (Deformable patches with adaptive measurement).} Inspired by Google's dynamic surface codes~\cite{google2023suppressing, acharya2024quantum}: surface code patches deform during transport while maintaining connectivity. Stabilizer measurements adapt to the deformed geometry, and error correction proceeds continuously. This gives $T_{\mathrm{S3}} = O(d_C \log N_L)$ with constant-factor improvements in suppression.

\subsection{Logical error rate under block routing}\label{subsec:logical-error-rate}

\begin{theorem}[Fault-tolerant block routing]\label{thm:ft-block-routing}
Let $p_{\mathrm{eff}}$ be the effective error rate per atom per step. Under Protocol S1 (stop-and-correct) with syndrome extraction every $K_{\max}$ steps, the per-block logical error rate is
\begin{equation}\label{eq:p_L}
p_L = O\left(\left(\frac{p_{\mathrm{eff}}}{p_{\mathrm{th}}}\right)^{(d_C+1)/2}\right),
\end{equation}
where $p_{\mathrm{th}} = 10^{-2}$ is the surface code threshold. For $p_{\mathrm{eff}} < p_{\mathrm{th}}$, we have $p_L < p_{\mathrm{eff}}$ (error suppression).

The logical error rate for the entire block routing process (all $N_L$ blocks, all $T$ steps) is
\begin{equation}
\begin{split}
P_L(\text{total}) &= 1 - (1-p_L)^{N_L \cdot T} \\
                  &\approx N_L \cdot T \cdot p_L \\
                  &= O(N_L d_C \log N_L) \cdot p_L.
\end{split}
\end{equation}
For $N_L = 100$, $d_C = 7$, and per-gate $p_{\mathrm{phys}} = 10^{-4}$ (so $p_{\mathrm{eff}} = 10^{-3}$), with $p_L \sim (0.1)^4 = 10^{-4}$ per round. Over $T = d_C \lceil \log_2 N_L \rceil \approx 49$ steps and $N_L = 100$ blocks, $P_L(\text{total}) \approx 100 \cdot 49 \cdot 10^{-4} \approx 0.5$. Reaching $P_L \leq 10^{-3}$ requires $d_C = 9$ (giving $p_L \sim 10^{-5}$, $P_L \approx 0.06$) or a further factor-10 reduction in $p_{\mathrm{phys}}$.
\end{theorem}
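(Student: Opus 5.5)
The plan is to treat the statement as three claims: the per-block suppression law~\eqref{eq:p_L}; the comparison $p_L < p_{\mathrm{eff}}$; and the union-bound aggregate over all blocks and steps. The last two follow quickly once the first is in hand.

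\textbf{Per-block rate.} I would derive~\eqref{eq:p_L} from the standard threshold argument for the surface code (Dennis--Kitaev--Landahl--Preskill~\cite{dennis2002topological}, Fowler et al.~\cite{fowler2012surface}). The key observation is that Protocol S1 with window $K \le K_{\max}$ only changes the number of fault locations per correction window, not the distance of the code.
\begin{itemize}
\item Within one window, a block carries $O(K d_C^2)$ space-time fault locations from transport. It also carries $O(d_C^3)$ locations from the $O(d_C)$-round syndrome extraction.
\item A logical failure requires a chain of at least $(d_C+1)/2$ faults spanning the patch.
\item The number of such minimal chains is at most $(\text{locations})\cdot c^{d_C}$ for a lattice constant $c$ (self-avoiding walk counting).
\item This gives $p_L \le \mathrm{poly}(d_C, K)\,(c'\,p_{\mathrm{eff}})^{(d_C+1)/2}$. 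Defining $p_{\mathrm{th}} := 1/c'$ and absorbing the prefactor into the $O(\cdot)$ (fixed $K_{\max}$, polynomial in $d_C$) yields~\eqref{eq:p_L}.
\end{itemize}
The Chernoff bound defining $K_{\max}$ in~\eqref{eq:Kmax} is used only to certify that the transport-induced weight in a window stays below $t$ with probability $1 - p_{\mathrm{target}}$. That contribution is then dominated by the same scaling.

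\textbf{Suppression.} For $p_{\mathrm{eff}} < p_{\mathrm{th}}$, the ratio $x := p_{\mathrm{eff}}/p_{\mathrm{th}}$ lies in $(0,1)$. Since $(d_C+1)/2 \ge 2$ for $d_C \ge 3$, we get
\[
p_L \lesssim x^{2}\, p_{\mathrm{th}}^{0}\le x\, p_{\mathrm{eff}}/p_{\mathrm{th}}\cdot p_{\mathrm{th}} = x\,p_{\mathrm{eff}} < p_{\mathrm{eff}}.
\]
This holds up to the implicit constant, which I would check against the assumed $p_{\mathrm{th}} = 10^{-2}$.

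\textbf{Total rate.} I would model each (block, step) pair as an independent failure opportunity with probability at most $p_L$. This gives $P_L = 1-(1-p_L)^{N_L T}$. The inequality $1-(1-p)^n \le np$ (equivalently, the union bound, which also removes the need for independence) gives $P_L \le N_L T p_L$. Substituting $T = O(d_C\log N_L)$ from Theorem~\ref{thm:main-block-upper} gives $O(N_L d_C \log N_L)\,p_L$.

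\textbf{Numerical instances.} These are plain arithmetic:
\begin{itemize}
\item $d_C=7$: $p_L=(0.1)^4$ and $T=7\cdot 7=49$, giving $P_L\approx 0.49$.
\item $d_C=9$: $p_L=(0.1)^5$ and $T\approx 63$, giving $P_L\approx 0.06$.
\end{itemize}

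\textbf{Main obstacle.} The hard step is justifying that routing does not degrade the effective distance. Rigid translations can create correlated faults, for example an atom-loss or transport error affecting a whole row during a corridor traversal in Lemma~\ref{lem:single-hop-general}. Such a correlated event could act as a single high-weight fault.

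My first attempt would be to argue that each physical matching step moves each atom along one edge independently. The per-step fault model is therefore still local i.i.d.\ with rate $p_{\mathrm{eff}}$, and $C_{\mathrm{circ}}$ absorbs the constant-factor correlation.

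If that argument fails, I would state the bound with an effective distance $d_C - O(1)$. This leaves the exponent $\Theta(d_C)$ and the form of the theorem unchanged.
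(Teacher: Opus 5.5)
Your skeleton matches the paper's sketch: cite the surface-code threshold theorem for the per-block rate, union-bound over the $N_L T$ block--step pairs, then do the arithmetic. The union-bound step and the substitution $T=O(d_C\log N_L)$ are fine. The per-block step, however, fails under Protocol~S1 as you set it up. Inside a stop-and-correct window nothing is measured between the $K$ transport steps. Faults hitting neighbouring qubits at different steps therefore combine freely in the net error the decoder sees at the end of the window; no time-like syndrome edges separate them. Your walk count with a $K$-independent constant $c$ implicitly assumes a measurement after every step. The correct estimate treats each data qubit as failing with probability $\approx K p_{\mathrm{eff}}$, which gives $p_L^{\mathrm{window}} \lesssim \mathrm{poly}(d_C)\,(c' K p_{\mathrm{eff}})^{(d_C+1)/2}$. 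So the $K$-dependence is $K^{(d_C+1)/2}$, exponential in $d_C$, not $\mathrm{poly}(K)$. Nor is $K_{\max}$ fixed: by Eq.~\eqref{eq:Kmax} it is $\Theta(1/(d_C p_{\mathrm{eff}}))$, as the paper itself records. Hence $K_{\max}p_{\mathrm{eff}}=\Theta(1/d_C)$, and the bound loses its dependence on $p_{\mathrm{eff}}/p_{\mathrm{th}}$ entirely. Your Chernoff remark does not rescue this. It only bounds the per-window transport failure by the design constant $p_{\mathrm{target}}$, which does not scale as $(p_{\mathrm{eff}}/p_{\mathrm{th}})^{(d_C+1)/2}$. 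Your counting yields~\eqref{eq:p_L} only for windows of constant length, with $p_{\mathrm{th}}$ rescaled by $1/K$. That is the regime the paper's sketch actually relies on when it invokes S2/S3 to keep suppression ``maintained during transport.'' Two smaller points. A $\mathrm{poly}(d_C)$ prefactor cannot be absorbed into an $O(\cdot)$ constant when $d_C\to\infty$; absorb it instead by lowering $p_{\mathrm{th}}$ slightly. And the specific value $p_{\mathrm{th}}=10^{-2}$ comes from the literature in both your route and the paper's, not from the counting.

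The suppression chain is also false: $x^2 \le x\,p_{\mathrm{eff}}$ is equivalent to $p_{\mathrm{th}}\ge 1$. Correctly, $p_L \le C x^{(d_C+1)/2} = \bigl(C x^{(d_C-1)/2}/p_{\mathrm{th}}\bigr)\,p_{\mathrm{eff}}$. So $p_L<p_{\mathrm{eff}}$ needs $C x^{(d_C-1)/2}<p_{\mathrm{th}}$, i.e.\ $d_C$ large compared with $\ln(C/p_{\mathrm{th}})/\ln(1/x)$. The claim fails for $d_C=3$, $x=1/2$, where $p_L\sim 1/4\gg p_{\mathrm{eff}}$. At the paper's own operating point $x=0.1$, $d_C=5$, it is an equality, $p_L\approx p_{\mathrm{eff}}=10^{-3}$. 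This is exactly the ``for sufficiently large $d_C$'' qualifier in the paper's sketch, and you must carry it. Finally, your arithmetic $P_L\approx 0.06$ at $d_C=9$ is correct but does not meet the stated target $P_L\le 10^{-3}$. At $x=0.1$ the target is first met at $d_C=13$, with $P_L\approx 100\cdot 91\cdot 10^{-7}\approx 9\times 10^{-4}$. The alternative, a factor-$10$ reduction in $p_{\mathrm{phys}}$ at $d_C=7$, does suffice ($P_L\approx 5\times 10^{-5}$). You should flag that sentence of the statement rather than confirm it.
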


\begin{proof}[Proof sketch]
The logical error rate for surface codes is suppressed exponentially in the code distance under physical error rates below threshold (Dennis et al.~\cite{dennis2002topological}). With active error correction (Protocol S2) or adaptive deformation (Protocol S3) suppression is maintained during transport.

Each block routing step moves atoms but does not violate stabilizer structure. 
The effective time for error accumulation is $T$ steps. 
From surface-code thresholding~\cite{dennis2002topological,fowler2012surface}, logical error rate compounds as
\begin{equation}
\begin{split}
p_L &= \sum_{k=0}^{\infty} c_k (p_{\mathrm{eff}})^k \\
    &\leq O\!\left(\!\left(\frac{p_{\mathrm{eff}}}{p_{\mathrm{th}}}\right)^{(d_C+1)/2}\right)
\end{split}
\end{equation}
for sufficiently large $d_C$.

The total logical error over the entire routing is a union bound over all blocks and time steps: $P_L(\text{total}) = 1 - \prod_{\text{block}} (1-p_L) \approx N_L T p_L$. (This accounts for circuit-level errors during syndrome extraction and transport.)
\end{proof}

\subsection{Operating-point table}\label{subsec:operating-points}

\begin{table*}[!htbp]
\centering
\renewcommand{\arraystretch}{1.25}
\begin{tabular}{@{}llcccc@{}}
\toprule
\textbf{Per-gate} & \textbf{Source} & $p_{\mathrm{eff}}$ & $p_{\mathrm{eff}}/p_{\mathrm{th}}$ & \textbf{Per-round} & \textbf{FT regime}
\\\textbf{rate} & & & & $\log_{10}(p_L)$ & \\
\midrule
$5\times 10^{-3}$ & 99.5\%~\cite{evered2023high}     & $5\times 10^{-2}$ & $5$    & no suppression                                & supercritical \\
$10^{-3}$         & 99.9\%~\cite{bluvstein2026fault} & $10^{-2}$         & $1$    & $p_L \!\sim\! 1$ (any $d_C$)                  & at threshold \\
$10^{-4}$         & next-gen target                          & $10^{-3}$         & $0.1$  & $-3/-4/-5$ ($d_C \!=\! 5/7/9$) & FT-viable for $d_C \!\geq\! 5$ \\
$10^{-5}$         & long-term target                                & $10^{-4}$         & $0.01$ & $-9/-12$ ($d_C \!=\! 5/7$)            & strongly suppressed \\
\bottomrule
\end{tabular}
\caption{Operating-point table for fault-tolerant block routing. With $C_{\mathrm{circ}} = 10$, the circuit-level rate $p_{\mathrm{eff}} = 10\,p_{\mathrm{phys}}$ and $p_{\mathrm{th}} = 10^{-2}$ is the surface-code threshold. Per-round $\log_{10}(p_L)$ values follow Eq.~\eqref{eq:p_L}.}
\label{tab:operating-points}
\end{table*}

\noindent
For the canonical worked example $N_L = 100$, $d_C = 7$, $p_{\mathrm{phys}} = 10^{-4}$: $T = 7 \cdot 7 = 49$ steps and total logical error $P_L \approx 100 \cdot 49 \cdot 10^{-4} \approx 0.49$. For stronger targets $P_L < 10^{-3}$, increase to $d_C = 9$: $T = 9 \cdot 7 = 63$, $P_L \approx 100 \cdot 63 \cdot 10^{-5} \approx 0.063$. The $d_C = 5$ regime requires $p_{\mathrm{phys}} \leq 10^{-5}$ for comparable $P_L$. 

For $d_C = 7$ at per-gate $p_{\mathrm{phys}} = 10^{-4}$ (so $p_{\mathrm{eff}} = 10^{-3}$), the Chernoff $K_{\max}$ formula of Eq.~\eqref{eq:Kmax} gives $K_{\max} \approx 23$ routing rounds before logical failure on a single patch (target failure $p_{\mathrm{target}} = 10^{-9}$). In a lattice-surgery compilation~\cite{litinski2019game}, each logical controlled-NOT (CNOT) gate typically requires $\sim\!d_C$ routing rounds (merge--measure--split), so $K_{\max} \approx 23$ supports $\sim3$ logical CNOT layers before recompilation. Larger $d_C$ or lower $p_{\mathrm{eff}}$ extends this proportionally, so the looser target $p_{\mathrm{target}} = 10^{-3}$ gives $K_{\max} \approx 60$ if a per-block failure rate of $10^{-3}$ is acceptable.
The boundary in Table~\ref{tab:operating-points} indicates fault-tolerant viability at $p_{\mathrm{phys}} \approx 10^{-4}$ for $d_C \geq 5$, with the FT-viable regime extending to $d_C = 7$ and $d_C = 9$ as $p_{\mathrm{phys}}$ tightens. The operative bottleneck shifts from the $C_{\mathrm{circ}} = 10$ inflation factor at $p_{\mathrm{phys}} = 10^{-3}$ to the routing-depth multiplier $T = d_C \lceil \log_2 N_L \rceil$ at $p_{\mathrm{phys}} \leq 10^{-5}$, where suppression is sufficient and total error budget is dominated by gate count.

\subsection{Composition with correlated decoding}\label{subsec:cain-composition}

\begin{proposition}[Block routing with correlated decoding]\label{prop:cain-composition}
Suppose the syndrome-extraction phase of Protocol~S1 (or S2) is replaced by the correlated-decoding scheme of Cain et al.~\cite{cain2024correlated}, reducing the per-correction-window syndrome cost from $O(d_C)$ rounds to $O(1)$ rounds for transversal Clifford layers. Then the total block-routing-plus-syndrome depth becomes
\begin{equation}\label{eq:cain-composition}
\begin{split}
T_{\mathrm{route+syn}}(d_C, N_L)
\;&=\; \underbrace{O(d_C \log N_L)}_{\text{routing }}
\\&\;+\; \underbrace{O(\log N_L)}_{\text{syndromes (Cain et al.)}}
\\&\;=\; O(d_C \log N_L).
\end{split}
\end{equation}
The routing factor dominates, so the integrated bound matches the rigid block-routing bound up to lower-order terms.
\end{proposition}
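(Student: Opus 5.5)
The plan is to treat the integrated schedule as an interleaving of two sequential components and bound each separately. Take the routing schedule of Theorem~\ref{thm:main-block-upper}: $T_Q = O(\log N_L)$ quotient-level steps (Lemma~\ref{lem:block-lmr}), each realized in $O(d_C)$ physical sub-steps (Lemma~\ref{lem:serialization}). Syndrome extraction is inserted only at quotient-step boundaries. This is where no block is mid-translation, and every patch occupies a valid $d_C\times d_C$ footprint with its stabilizer structure intact (the remark in the proof sketch of Theorem~\ref{thm:ft-block-routing}). The total depth then has the form
\begin{equation*}
T_{\mathrm{route+syn}} \;\le\; T_Q\cdot O(d_C) \;+\; W\cdot c_{\mathrm{syn}},
\end{equation*}
where $W$ is the number of correction windows and $c_{\mathrm{syn}}$ is the syndrome cost per window.

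To bound $W$, I would use at most one correction window per quotient step, so $W \le T_Q = O(\log N_L)$. Protocol~S1 might demand windows more often, every $K_{\max}$ physical steps. In that case I would note that $K_{\max}$ from Eq.~\eqref{eq:Kmax} is $\Omega(1)$, and can be taken at least a constant times $d_C$ in the below-threshold regime. That keeps $W = O(T_Q\, d_C/K_{\max}) = O(\log N_L)$. Alternatively, one can simply accept $W = O(d_C\log N_L)$, which already suffices for the claimed bound. Invoking Cain et al.~\cite{cain2024correlated} as a black box replaces $c_{\mathrm{syn}} = O(d_C)$ by $c_{\mathrm{syn}} = O(1)$. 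The syndrome term is then $O(\log N_L)$, or at worst $O(d_C\log N_L)$, and adding the routing term gives $O(d_C\log N_L)$. Routing dominates whenever $d_C \ge 1$.

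The main obstacle is justifying that the correlated-decoding hypothesis applies. Cain et al.\ treat transversal Clifford layers, and it is not immediate that a rigid block translation counts as one. I would argue that each translation is a relabeling of physical qubits, a permutation of qubit locations, so it acts trivially on the logical and stabilizer structure. In the worst case it is an identity layer on the code, and hence lies within the transversal Clifford class. Under this view, the only new error source is transport noise, which the $p_{\mathrm{eff}}$ model already absorbs. I would also check that the $O(1)$-round windows leave the per-window failure probability at the rate $p_L$ of Theorem~\ref{thm:ft-block-routing}, up to constants. This is what lets the union bound over the $O(\log N_L)$ windows stay controlled. If that check fails, the fallback is to state the proposition conditionally on the Cain et al.\ decoder guarantee, which is how the hypothesis of the proposition is phrased.
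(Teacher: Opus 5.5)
Your decomposition is the same as the paper's: routing depth from Theorem~\ref{thm:main-block}, plus (number of correction windows)$\times$(per-window cost), with Cain et al.\ used as a black box to make the per-window cost $O(1)$. The difference is how you count windows. The paper spaces windows by the error budget, $W=\lceil \rt_B/K_{\max}\rceil$, and reaches $O(\log N_L)$ by holding $d_C$ fixed. You place one window per quotient step, $W\le T_Q$. This has the added merit of extracting syndromes only when every patch is an intact $d_C\times d_C$ footprint. The paper's $K_{\max}$-spaced pauses do not guarantee that under the atom-level LMR schedule of Lemma~\ref{lem:single-hop-general}. Your discussion of whether a translation counts as a transversal Clifford layer is harmless but unnecessary, because the proposition's hypothesis already grants the $O(1)$ window cost.

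The step that fails is your justification that one window per quotient step is admissible. Eq.~\eqref{eq:Kmax} gives $K_{\max}=\Theta\bigl(t/(d_C^2p_{\mathrm{eff}})\bigr)=\Theta\bigl(1/(d_Cp_{\mathrm{eff}})\bigr)$, which \emph{decreases} with $d_C$. At fixed below-threshold $p_{\mathrm{eff}}$ it is not even $\Omega(1)$. The bound $K_{\max}\ge c\,d_C$ holds only when $d_C^2p_{\mathrm{eff}}=O(1)$, a $d_C$-dependent condition much stronger than $p_{\mathrm{eff}}<p_{\mathrm{th}}$.

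The paper's count has the same hidden requirement. In general $\rt_B/K_{\max}=\Theta(d_C^2p_{\mathrm{eff}}\log N_L)$, which is $O(\log N_L)$ only under that condition; the paper sidesteps this with ``for fixed $d_C$.''

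Your fallback does give the final $O(d_C\log N_L)$: allow at most one window per physical step, which is valid whenever Protocol~S1 is viable at all, i.e.\ $K_{\max}\ge 1$. But it bounds the syndrome term only by $O(d_C\log N_L)$. So it does not prove the middle line of \eqref{eq:cain-composition}, nor that routing dominates. For that part, state $d_C^2p_{\mathrm{eff}}=O(1)$ (or fixed $d_C$) as an explicit hypothesis rather than attributing it to operating below threshold.
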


\begin{proof}
Theorem~\ref{thm:main-block} gives $\rt_B = \Theta(d_C \log N_L)$ routing rounds. Each Stop-and-correct interval has length $K_{\max} = \Theta(t / (d_C^2 p_{\mathrm{eff}})) = \Theta(1/(d_C p_{\mathrm{eff}}))$ rounds, using $t = \lfloor (d_C - 1)/2 \rfloor = \Theta(d_C)$ from Eq.~\eqref{eq:Kmax}.
This extraction takes $O(1)$ rather than $O(d_C)$ rounds with correlated decoding (Cain et al.~\cite{cain2024correlated}, Theorem 1). The number of correction intervals over the routing is $\lceil \rt_B / K_{\max} \rceil = O(\log N_L)$ for fixed $d_C$, so the total syndrome cost is $O(\log N_L)$. Adding the two contributions yields~\eqref{eq:cain-composition}.
\end{proof}

\noindent
The composition is operationally significant when $d_C \geq 7$ (the FT-viable regime): correlated decoding eliminates a factor of $d_C$ in the syndrome overhead, leaving routing as the sole leading-order contributor.

\section{Extensions and Open Problems}\label{sec:extensions}

The directions below extend Theorem~\ref{thm:main-block} beyond its rigid-block, single-permutation, $r$-uniform setting. Each is stated as a concrete conjecture to measure progress.

\subsection{Deformable blocks and lattice surgery}\label{subsec:deformable}

In lattice surgery~\cite{horsman2012surface, litinski2019game}, surface code patches must be positioned to share boundaries during merge--measure--split operations, which may require temporary deformation. We define the \emph{deformation energy}
\begin{equation}\label{eq:deformation}
\Psi(B) := \sum_{\{u,v\} \in E(B)} \mathrm{dist}_G(x_u, x_v) - |E(B)|,
\end{equation}
where $\mathrm{dist}_G(x_u, x_v)$ is the graph-hop distance in $G$ between the current positions of atoms $u$ and $v$ and $E(B)$ is the edge set of the rigid template. Both terms are integer-valued, so $\Psi(B) \in \mathbb{Z}_{\geq 0}$.
A $\delta$-bounded deformation requires $\Psi(B) \leq \delta$. Theorem~\ref{thm:main-block} treats the rigid case $\delta = 0$.

We state the conjecture below for the worst-case deformation profile, in which the budget $\Psi(B) \leq \delta$ may concentrate on a single edge — giving footprint width $d_C + \delta$ in one direction. For a uniform-stretch profile, where $\Psi(B) \leq \delta$ is distributed across the $|E(B)| = \Theta(d_C^2)$ edges of the template, per-edge stretch is $O(\delta/d_C^2)$ and the footprint widens only by an additive $O(1)$.
The conjecture is then trivially implied by the rigid bound for any fixed $\delta$.

\begin{conjecture}[Deformable block routing, worst-case profile]\label{conj:deformable}
For a $\delta$-bounded deformable $(s,g)$-block configuration on a Ramanujan host satisfying the high-connectivity regime~\eqref{eq:high-conn}, where the deformation budget may concentrate on a single edge, the deformable block routing number satisfies
\begin{equation}
  \rt_B^{(\delta)}(G_{\mathrm{cl}}(H), s, g)
  \;=\; \Theta\!\bigl((d_C + \delta) \log N_L\bigr),
\end{equation}
i.e.\ the rigid bound holds with footprint width $d_C$ replaced by $d_C + O(\delta)$.
\end{conjecture}

Resolution should follow from extending Lemma~\ref{lem:single-hop-general} to corridor walls of variable width $d_C + \delta$, recomputing the corridor edge cut as $d_C + O(\delta)$, and verifying that the LMR scheduling step still yields $C + D = O(d_C + \delta)$.

The conjecture is consistent with the lower-bound corridor-depth (Theorem~\ref{prop:bottleneck-lb}), so a $\delta$-deformed block has corridor depth $d_C + \delta$, and footprint traversal still dominates each phase. The matching upper bound requires extending Lemma~\ref{lem:single-hop-general} to deformed translations, which the proof technique should accommodate provided the deformed block retains bounded aspect ratio.

\subsection{Dynamic permutations and incremental routing}\label{subsec:dynamic}

For circuits with locality (e.g., nearest-neighbor logical-qubit topology), consecutive routing phases differ by a small number $k$ of block displacements. Naively re-routing from scratch costs $\Theta(d_C \log N_L)$ per phase even when $k \ll N_L$.

\begin{conjecture}[Incremental routing]\label{conj:incremental}
There is an algorithm that, given a current configuration and a target permutation differing on $k$ blocks, achieves block routing depth $O(d_C \log k)$.  Equivalently, the amortized depth per logical-CNOT layer in a locality-respecting compilation is $O(d_C \log k)$ where $k$ is the gate locality radius.
\end{conjecture}

A possible route for resolution can adapt the dynamic-graph rebalancing of \cite{stade2025routing} to the quotient-graph setting, replacing the full Valiant pass with a $k$-block-restricted scatter.

This matches the analysis of Stade et al.~\cite{stade2024abstract, stade2025routing} for routing-aware placement, and would compose with the correlated-decoding bound of Proposition~\ref{prop:cain-composition} (\S\ref{subsec:cain-composition}).

\subsection{Higher-dimensional hypergraphs and generalizations}\label{subsec:higher-dim}

The Ramanujan condition generalizes to $r$-uniform hypergraphs with $r > 3$. The Haemers interlacing chain (\S\ref{sec:quotient}) and bottleneck argument (\S\ref{sec:lower-bounds}) both apply with necessary changes.

\begin{conjecture}[$r$-uniform extension]\label{conj:higher-dim}
For Ramanujan $(d,r)$-regular hypergraphs with arbitrary fixed $r \geq 3$, $\rt_B(G_{\mathrm{cl}}(H), s, g) = \Theta(d_C \log N_L)$ in the high-connectivity regime~\eqref{eq:high-conn}, with constants depending on $r$ but the asymptotic scaling unchanged. For surface-code-patch specialization ($s = d_C^2$, bounded aspect ratio), the regime tightens to~\eqref{eq:high-conn-tight} via Corollary~\ref{cor:high-conn-tight}.
\end{conjecture}

Future work may involve verifying that the SFM Ramanujan bound for $r$-uniform hypergraphs preserves the spectral inheritance chain of \S\ref{sec:quotient}, with constants tracked through Lemma~\ref{lem:weyl}'s Weyl perturbation step.

\subsection{Oblivious routing alternatives}\label{subsec:oblivious}

Florescu et al.~\cite{florescu2024optimal} provide electrical oblivious routing with $O(\Phi^{-1} \log m)$ competitive ratio for $\Phi$-expanders. For Ramanujan $Q$ with $\Phi = \Omega(1)$, this gives $O(\log N_L)$-competitive routing on $Q$.

\begin{conjecture}[Deterministic block routing]\label{conj:oblivious}
There is a deterministic (oblivious) block routing scheme matching the $\Theta(d_C \log N_L)$ bound of Theorem~\ref{thm:main-block} without randomization, by composing Florescu et al.'s electrical oblivious routing on $Q$ with the rigid-translation serialization of Lemma~\ref{lem:serialization}.
\end{conjecture}

Composing Florescu et al.'s electrical oblivious routing on $Q$ with the rigid-translation serialization of Lemma~\ref{lem:serialization}, the conjecture should reduce to replacing Valiant's randomized intermediate permutation with a deterministic electrical-flow assignment, while the LMR scheduling and serialization steps carry over unchanged.

\subsection{QCCD trapped-ion architectures}
\label{sec:qccd}

The most direct non-atom extension is to QCCD trapped-ion processors, where the hypergraph-based methods apply with a few substitutions and one open question for future work.

\paragraph{Substitutions.}
(i) The host graph $G_{\mathrm{cl}}(H)$ becomes the QCCD junction graph.
Vertices are ion sites within zones, edges are intra-zone interactions and inter-zone shuttling channels through junctions~\cite{kielpinski2002architecture,pino2021demonstration}. (ii) The block $B_i$ remains a $d_C \times d_C$ surface-code patch (or an analogous color-code patch~\cite{ryan2021realization}), now physically realized as a co-located ion crystal in a memory zone. 
The bounded-aspect-ratio assumption (Remark~\ref{rem:block-shape}) holds by zone geometry.
(iii) The single-hop translation of Definition~\ref{def:block-translation} becomes a single junction-crossing shuttle, whose cost is the dominant noise channel in current QCCD operation~\cite{pino2021demonstration,moses2023race}. Lemmas~\ref{lem:single-hop} and~\ref{lem:single-hop-general} transfer with $d'$ replaced by per-junction fan-out.

\paragraph{Open question.}
The Ramanujan high-connectivity regime $d' > d_C^2(r-1)/(1-\beta)$ does not have a clean QCCD-native interpretation, because junction graphs in current devices are sparse (planar, with degree $\leq 4$) by ion-trap fabrication constraints. Two responses are available. First, multi-zone QCCD designs with active sorting~\cite{pino2021demonstration} effectively realize a higher-connectivity overlay at the cost of increased shuttling depth.
Routing analysis (\S\ref{sec:main-theorem}) then applies on the overlay rather than the bare junction graph. 
For sparse junction graphs that violate~\eqref{eq:high-conn}, Layer~3 (Theorem~\ref{thm:layer3}) gives a weaker but still nontrivial bound. 
The resulting $\beta_Q$ is closer to $1$, and routing depth degrades from $\Theta(d_C \log N_L)$ toward $\Theta(d_C \cdot \mathrm{poly}(d'_Q/d') \cdot \log N_L)$.

\begin{conjecture}[QCCD block routing]
\label{conj:qccd}
For a QCCD junction graph $G_{\mathrm{QCCD}}$ with multi-zone overlay satisfying the high-connectivity condition~\eqref{eq:high-conn}, the block routing number for $d_C \times d_C$ surface-code patches satisfies
\begin{equation}
\begin{split}
    \mathrm{rt}_B(G_{\mathrm{QCCD}}, d_C^2, g) = \Theta\bigl(d_C \log N_L \\+ S_{\mathrm{shuttle}}(d_C)\bigr),
\end{split}
\end{equation}
where $S_{\mathrm{shuttle}}(d_C)$ is the per-hop shuttling cost (currently $\Theta(d_C)$ in published H-series benchmarks~\cite{moses2023race}).
\end{conjecture}

Resolution would establish that the asymptotic $\Theta(d_C \log N_L)$ bound is platform-independent in the FT regime, with the platform entering only through the constant $S_{\mathrm{shuttle}}(d_C)$. This frames our work's extensibility and is the most direct generalization beyond neutral-atom architectures.

\section{Conclusion}\label{sec:conclusion}

This paper establishes the asymptotic block routing number for rigid surface code patches on Ramanujan hypergraphs. The main result, $\rt_B = \Theta(d_C \log N_L)$, combines spectral analysis of quotient graphs, negative association of block permutations, LMR scheduling, and cut-throughput serialization. The lower bound follows from combining spectral phase count $\Omega(\log N_L)$ with footprint traversal cost $\Omega(d_C)$ per phase.

Integration with fault-tolerance and syndrome extraction shows that routing depth is compatible with below-threshold error rates for realistic parameters from recent experiments. Our method extends to deformable blocks, dynamic permutations, and lattice surgery compilation, providing a theoretical foundation for practical neutral-atom quantum architectures.

\subsection*{Code and data availability}
Numerical results in \S\ref{subsec:numerical} are reproducible from Python scripts accessible via \href{https://github.com/jmcourtneyuga/hypergraph_routing}{this Github repository}. 
The author used Claude (Anthropic) to assist in drafting code comments and documentation for the codebase.
All scientific content, algorithms, and analysis are the author's own.

\onecolumn
\bibliography{refs}

\end{document}